\def\Underline{\setbox0\hbox\bgroup\let\\\endUnderline}
\def\endUnderline{\vphantom{y}\egroup\smash{\underline{\box0}}\\}
\def\|{\verb|}
\begin{document}

\title{Type checking data structures more complex than trees}

\affiliate{WSD}{Waseda University, Okubo, Shinjuku-ku, Tokyo, 169--8555, Japan}

\author{Jin Sano}{WSD}[sano@ueda.info.waseda.ac.jp]
\author{Naoki Yamamoto}{WSD}[yamamoto@ueda.info.waseda.ac.jp]
\author{Kazunori Ueda}{WSD}[ueda@ueda.info.waseda.ac.jp]

\begin{abstract}
	Graphs are a generalized concept that encompasses more complex data structures than trees,
	such as difference lists, doubly-linked lists, skip lists, and leaf-linked trees.
	Normally, these structures are handled with destructive assignments to heaps,
	which is opposed to a purely functional programming style and makes verification difficult.
	We propose a new
	purely functional language, \LamGT{}, that handles graphs as immutable,
	first-class data structures with a pattern matching mechanism
	based on Graph Transformation and developed a new type system, \(F_{GT}\), for the language.
	Our approach is in contrast with the analysis of pointer manipulation programs
	using separation logic, shape analysis, etc.\ in that
	(i) we do not consider destructive operations
	but pattern matchings over graphs provided by the new higher-level language that
	abstract pointers and heaps away and that
	(ii) we pursue what properties can be established automatically using a rather simple typing framework.
\end{abstract}

\begin{keyword}
	Functional programming,
	graph grammar,
	type system,
	program verification,
	heap analysis
\end{keyword}

\maketitle


\section{Introduction}\label{sec:introduction}

In this study, we propose a new functional language
that handle \emph{graphs} as a first-class data structure.
Graphs are a generalized concept
that encompasses more complex data structures than trees,
such as difference lists,
doubly-linked lists, skip lists \cite{skiplists},
and leaf-linked trees (\figref{fig:gallery}).
\begin{figure}[t]
	\centering
	\includegraphics[width=.9\hsize]{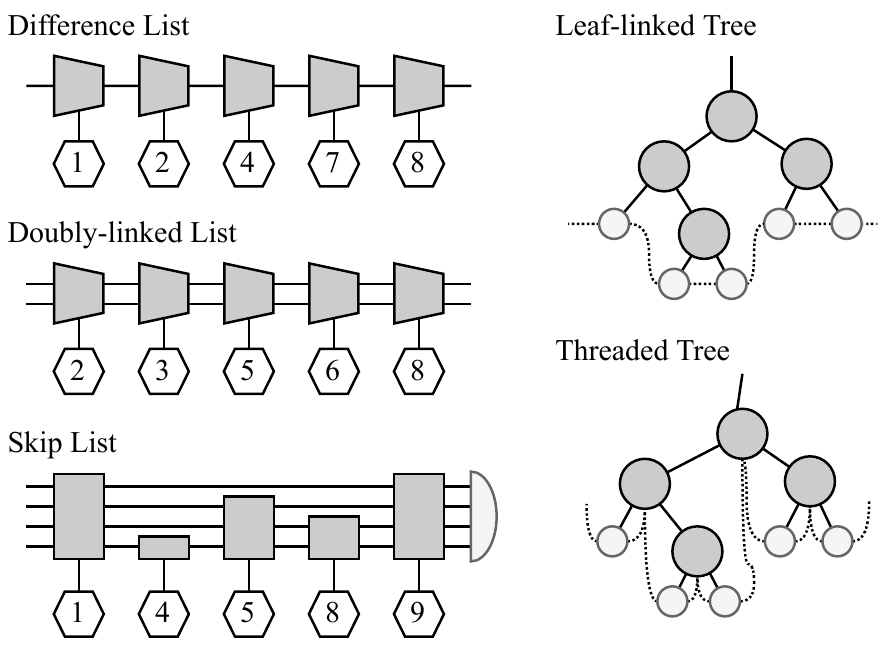}
	\caption{Examples of complex graph structures}\label{fig:gallery}
\end{figure}
However, graph structures cannot be handled succinctly in purely functional languages.
Although such structures can be handled with references,
this style implies imperative programming with destructive assignments,
which makes it hard to read and write programs and also makes verification more difficult.
In addition, classic type systems can only verify the types of the referenced data
and cannot verify the shape of the data structure.
Therefore, we aim to incorporate \emph{Graph Transformation} \cite{handbook_graph_grammar} to a functional language
and to develop a new type system for that.
Our approach is in contrast with the analysis of pointer manipulation
programs using
separation logic \cite{separation-logic}, shape analysis, etc.
in that (i) we consider graph structures formed by
higher-level languages that abstract pointers and heaps away and
guarantee low-level invariants such as the absence of dangling pointers and
that (ii) we pursue what properties can be established automatically
using a rather simple typing framework.

\subsection{HyperLMNtal: Hypergraph rewriting language}

Graph Transformation Systems (GTSs) are
computational models and
programming (or modeling) languages
based on graphs and their rewritings \cite{algebraic-gt,handbook_graph_grammar}.
Of various GTSs, HyperLMNtal \cite{hyperlmntal}
is a rewriting language that supports hypergraphs.
With hypergraphs, we can express structures more complex than trees,
e.g., difference lists, doubly-linked lists, skip lists, and leaf-linked trees.

HyperLMNtal allows us to handle these data structures declaratively with
rewrite rules that are activated by pattern matching.
Furthermore, GTS has cultivated a unique style of type checking frameworks
such as Structured Gamma \cite{structuredgamma}.
However,
GTSs are in general based on destructive rewriting
and do not support higher-order functions.
In contrast,
functional languages basically work with immutable data structures and
support higher-order functions,
making them highly modular.
This motivates us to study how we can
incorporate the data structure of HyperLMNtal into the \(\lambda\)-calculus.

\subsection{The \(\lambda_{GT}\) language}

We propose a new functional language,
\(\lambda_{GT}\),
that features graphs
as a first-class data structure.
The \(\lambda_{GT}\) language allows us to handle complex data structures
declaratively with a static type system.
Intuitively, the core language is
a call-by-value \(\lambda\)-calculus
that employs hypergraphs as values and supports pattern matching for them.

In order to formalize
hypergraphs in a syntax-directed manner, we employ the techniques
developed in a hypergraph rewriting language
HyperLMNtal \cite{hyperlmntal,sano2021}.
While various different formalisms have been proposed to handle the shapes
of graphs, including bisimulation (to handle ``equivalence'' of
cyclic structures) and morphism (in a category-theoretic approach), we believe
that our approach enables type checking relatively easily.
%
%
We also propose a new type-checking algorithm that automatically
performs this verification using structural induction.

\subsection{Contributions}

The main contributions of this paper are twofold.
\begin{enumerate}
	\item
	      We propose the formal syntax and semantics of
	      \(\lambda_{GT}\),
	      a pure functional language that handles data structures
	      beyond algebraic data types.
	\item
	      We propose a typing framework for the \(\lambda_{GT}\) language
	      and develop a new algorithm that can successfully handle
	      the manipulations of graphs,
	      which could not be handled in a previous study, Structured Gamma.
\end{enumerate}

\subsection{Structure of the Paper}

The rest of this paper is organized as follows.
Section~\ref{sec:hyperlmntal} introduces HyperLMNtal, a calculus model based on hypergraph transformation.
Section~\ref{sec:syntax-semantics} gives the syntax and the operational semantics of the proposing language \(\lambda_{GT}\).
Section~\ref{sec:type-synsem} introduces the new type system, \(F_{GT}\) proposed for \(\lambda_{GT}\).
Section~\ref{sec:fgt-ext} extends the system \(F_{GT}\) to cover powerful operations based on graph transformation.
Section~\ref{sec:autoverify} discusses the algorithm for the extended \(F_{GT}\).
Section~\ref{sec:related-work} describes related work.

\subsection{Syntactic conventions}

Throughout the paper, we use the following syntactic conventions.

For some syntactic entity \(E\),
\(\links{E}\) stands for a sequence
\(E_1,\dots,E_n\) for some \(n\ (\ge 0)\).
When we wish to mention the indices explicitly,
\(E_1,\dots,E_n\) will also be denoted as
\({\overrightarrow{E_i}}^i\).  The length of the sequence
\(\links{E}\) is denoted as \(|\links{E}|\).

For a set \(S\), the form \(S\!\{s\}\) stands for the set
\(S\) such that \(s\in S\) (or equivalently, \(S=S\cup\{s\}\)).

For some syntactic entities \(E\), \(p\) and \(q\),
a substitution \(E[q/p]\) stands for \(E\) with all the
(free) occurrences of \(p\) replaced by \(q\).
An explicit definition will be given if the substitution
should be capture-avoiding.
For substitutions of hyperlinks, we use a slightly different syntax
\(E\langle q/p\rangle\) for clarity.

In order to focus on novel and/or non-obvious aspects of the language,
constructs and properties that can be defined/derived in the same manner
as those of standard functional languages will be described rather briefly.

\section{HyperLMNtal}\label{sec:hyperlmntal}

HyperLMNtal is extended from LMNtal \cite{lmntal2009}.
LMNtal is a computational model and a programming language based on
hierarchical graph rewriting.
Flat LMNtal is a subset of LMNtal which does not allow a hierarchy of graphs.
Links in graphs that LMNtal handles are restricted to have at most two endpoints.
On the other hand, HyperLMNtal \cite{hyperlmntal} allows hyperlinks,
apart from normal links,
which can interconnect an arbitrary number of endpoints.
Flat HyperLMNtal is a subset of HyperLMNtal that disallow normal links and hierarchies of hypergraphs:
the data structure of Flat HyperLMNtal is formed only by
hyperlinks and nodes.

In the previous study, we have given syntax-directed semantics for Flat HyperLMNtal \cite{sano2021,sano-ba}.
As far as we have surveyed,
Flat HyperLMNtal is the only computational model that has syntax-directed semantics
which handles hypergraph matching and rewriting.
Since the \(\lambda\)-calculus and many other
computational models derived from the \(\lambda\)-calculus are
are defined as Structural Operational Semantics (SOS) \cite{sos},
it would be smoother to incorporate Flat HyperLMNtal than other
graph transformation formalisms based on algebraic approaches \cite{handbook_graph_grammar}.

The following subsections are based on Flat HyperLMNtal,
except that hypergraphs and rewrite rules are separated from each other
for the sake of formulation.
Hereinafter we simply refer to this language as \emph{HyperLMNtal},
hyperlinks as \emph{links},
and hypergraphs as \emph{graphs}.

\subsection{Syntax of graphs and rewrite rules}\label{sec:flat-hyperlmntal-syntax}

HyperLMNtal is composed of two syntactic categories.
\begin{itemize}
	\item
	      \(X\)
	      denotes a \emph{Link Name}.

	\item
	      \(p\)
	      denotes an \emph{Atom Name}.

\end{itemize}
The only preserved atom name is
\(\bowtie\), where an atom \(X \bowtie Y\), called a \emph{fusion},
fuses the link \(X\) and the link \(Y\) into a single link.

The syntax of HyperLMNtal is given in \figref{table:hyperlmntal-syntax}.
We abbreviate \(\nu X_1. \dots \nu X_n. G\) to \(\nu X_1 \dots X_n. G\),
which can be denoted as \(\nu \links{X}. G\).
The pair of the name \(p\) and the arity \(n = \norm{\links{X}}\) of an atom \(p(\links{X})\)
is referred to as the \emph{functor}\footnote{%
	Synonym of function symbol and function object; not to be confused with functors in category theory.
}
of the atom and is written as \(p/n\).

The set of free link names in hypergraph \(G\) is denoted as \(\mathit{fn}{(G)}\),
which is defined inductively in \figref{table:free-names}.

\begin{figure}[t]
	\normalsize
	\hrulefill{}
	\begin{center}
		\begin{tabular}{rcll}%
			\multicolumn{3}{@{}l}{Graph}                             \\
			\(G\) & $::=$   & \(\zero\)         & Null               \\
			      & $|$     & \(p (\links{X})\) & Atom               \\
			      & \(|\)   & \((G, G)\)        & Molecule           \\
			      & \(|\)   & \(\nu X.G\)       & Hyperlink creation \\
			\\
			\multicolumn{3}{@{}l}{Rewrite Rule}                      \\
			\(r\) & \(::=\) & \(G \means G\)    & Rule               \\
		\end{tabular}
	\end{center}
	\hrulefill{}
	\caption{Syntax of HyperLMNtal}\label{table:hyperlmntal-syntax}
\end{figure}

\begin{figure}[t]
	\normalsize
	\hrulefill{}
	\[
		\begin{aligned}
			\mathit{fn}(\zero)        & = \emptyset                              \\
			\mathit{fn}(p(\links{X})) & = \{\links{X}\}                          \\
			\mathit{fn}((G_1, G_2))   & = \mathit{fn}(G_1) \cup \mathit{fn}(G_2) \\
			\mathit{fn}(\nu X.G)      & = \mathit{fn}(G) \setminus \{X\}         \\
		\end{aligned}
	\]
	\hrulefill{}
	\caption{The set of free link names}\label{table:free-names}
\end{figure}

\begin{definition}[Abbreviation]\label{def:abbreviation}

	We introduce the following abbreviation schemes:
	\begin{enumerate}
		\item
		      A nullary atom \(p ()\) can be simply written as \(p\).


		\item
		      Term Notation:
		      \(\nu Y. (p (\links{X}, Y, \links{Z}), q (\links{W}, Y))\)
		      where
		      \(Y \notin \{\links{X}, \links{Z}, \links{W}\}\)
		      can be written as
		      \(p (\links{X}, q (\links{W}), \links{Z})\).

	\end{enumerate}
\end{definition}

Rules have the form \(G \means G\).
The two \(G\)s are called the left-hand side (LHS) and
the right-hand side (RHS), respectively.

\begin{definition}[Syntactic condition on rules]
	A rule \(G_1 \means G_2\) should satisfy
	\(\mathit{fn}(G_1) \supseteq \mathit{fn}(G_2)\).
\end{definition}

The condition indicates that we must denote a new hyperlink
in the scope of a \(\nu\) (new) on the RHS of a rule.

\subsection{Structural Congruence}\label{sec:congruence}

The semantics of Flat HyperLMNtal comes with two major ingredients,
structural congruence \(\equiv\)
and reduction relation \(\rightsquigarrow\) on graphs.
Structural congruence defines what graphs (represented in the
syntax of \figref{table:hyperlmntal-syntax}) are essentially the same.
This subsection defines structural congruence.

\begin{definition}[Link Substitution]

	\(G\angled{ Y_1, \dots, Y_n / X_1, \dots, X_n }\)
	is a \emph{link substitution} that replaces all free occurrences of
	\(X_i\) with \(Y_i\)
	as defined in \figref{table:hyperlink-substitution}.
	Here, the \(X_1, \dots, X_n\) should be mutually distinct.
	Note that, if a free occurrence of \(X_i\) occurs at a location where \(Y_i\) would not be free,
	\(\alpha\)-conversion may be required.

	\begin{figure}[t]
		\normalsize
		\hrulefill{}
		\begin{center}

			\begin{tabular}{@{}r@{\hspace{0.5em}}c@{\hspace{0.5em}}l@{}}
				\(\zero\sigma\)                         & \(=\) & \(\zero\)
				\vspace{0.8em}                                                                         \\

				\(p(\links{X})\sigma\)                  & \(=\) & \(p(X_1\sigma, \ldots, X_n\sigma) \) \\
				                                        &
				\multicolumn{2}{l}{%
					where \(
					X\angled{\links{Z}/\links{Y}} =
					\left\{
					\begin{array}{ll}
						Z_i & \mbox{if } X = Y_i                \\
						X   & \mbox{if } X \notin \{\links{Y}\}
					\end{array}
					\right.
					\)
					\vspace{0.8em} }
				\\

				\((G_1, G_2)\sigma\)                    & \(=\) & \((G_1 \sigma, G_2 \sigma)\)
				\vspace{0.8em}                                                                         \\
				\(
				(\nu X.G)\angled{\links{Z}/\links{Y}}\) & \(=\) &                                      \\
				\multicolumn{3}{l}{%
					\(\left\{
					\begin{array}{ll}
						\nu X.G\angled{\links{Z'}/\links{Y'}} & \mbox{if } X = Y_i\ \land                                      \\
						                                      & \links{Z'} = Z_1, \dots, Z_{i - 1}, Z_{i + 1}, \dots, Z_n      \\
						                                      & \links{Y'} = Y_1, \dots, Y_{i - 1}, Y_{i + 1}, \dots, Y_n
						\\[2mm]
						\nu X.G\angled{\links{Z}/\links{Y}}   & \mbox{if } X \notin \{\links{Y}\} \land X \notin \{\links{Z}\}
						\\[2mm]
						\nu W.(G\angled{W/X})\angled{\links{Z}/\links{Y}}
						                                      & \mbox{if } X \notin \{\links{Y}\} \land X \in \{\links{Z}\}    \\
						                                      & \land W \notin \mathit{fn}(G) \land W \notin \{\links{Z}\}
					\end{array}
					\right.
					\)
				}
			\end{tabular}
		\end{center}
		\hrulefill{}
		\caption{Hyperlink Substitution}\label{table:hyperlink-substitution}
	\end{figure}
\end{definition}

\begin{definition}[Structural Congruence]
	We define the relation \(\equiv\) on graphs as the minimal equivalence relation
	satisfying the rules shown in \figref{table:hyperlmntal-equiv}.
	Two graphs related by \(\equiv\) are essentially the same and are convertible
	to each other in zero steps.
	(E1), (E2) and (E3) are the characterization of molecules as multisets.
	(E4) and (E5) are structural rules that make \(\equiv\) a congruence.
	(E6) and (E7) are concerned with fusions.
	(E7) says that a closed fusion is equivalent to \(\zero\).
	(E6) is an absorption law of \(\bowtie\),
	which says that a fusion can be absorbed by connecting hyperlinks.
	Because of the symmetry of \(\bowtie\),
	(E6) says that an atom can emit a fusion as well.
	(E8), (E9) and (E10) are concerned with hyperlink creations.

	\begin{figure}[t]
		\normalsize{}
		\hrulefill{}
		\begin{center}
			\begin{tabular}{lrcl}
				(E1)  & \((\mathbf{0}, G)\)                                                          & \(\equiv\)      & \(G\)                            \\[1mm]
				(E2)  & \((G_1, G_2)\)                                                               & \(\equiv\)      & \((G_2, G_1)\)                   \\[1mm]
				(E3)  & \((G_1, (G_2, G_3))\)                                                        & \(\equiv\)      & \(((G_1, G_2), G_3)\)            \\[1mm]
				(E4)  & \(G_1 \equiv G_2\)                                                           & \(\Rightarrow\) & \((G_1, G_3) \equiv (G_2, G_3)\) \\[1mm]
				(E5)  & \(G_1 \equiv G_2\)                                                           & \(\Rightarrow\) & \(\nu X.G_1 \equiv \nu X.G_2\)   \\[1mm]
				(E6)  & \(\nu X.(X \bowtie Y, G)\)                                                   & \(\equiv\)      & \(\nu X.G\angled{Y / X}\)        \\
				      & \multicolumn{3}{l}{where \(X \in \mathit{fn}(G) \lor Y \in \mathit{fn}(G)\)}                                                      \\[1mm]
				(E7)  & \(\nu X.\nu Y.X \bowtie Y\)                                                  & \(\equiv\)      & \(\zero\)                        \\[1mm]
				(E8)  & \(\nu X.\zero\)                                                              & \(\equiv\)      & \(\zero\)                        \\[1mm]
				(E9)  & \(\nu X.\nu Y.G\)                                                            & \(\equiv\)      & \(\nu Y.\nu X.G\)                \\[1mm]
				(E10) & \(\nu X.(G_1, G_2)\)                                                         & \(\equiv\)      & \((\nu X.G_1, G_2)\)             \\
				      & \multicolumn{3}{l}{where \(X \notin \mathit{fn}(G_2)\)}                                                                           \\
			\end{tabular}
		\end{center}
		\hrulefill{}
		\caption{Structural congruence on HyperLMNtal graphs}\label{table:hyperlmntal-equiv}
	\end{figure}

\end{definition}

We give two important theorems showing that the symmetry of
\(\bowtie\) and \(\alpha\)-conversion can be derived from the rules of
\figref{table:hyperlmntal-equiv}.

\begin{theorem}[Symmetry of \(\bowtie\)]\label{th:symmetry-of-bowtie}

	\[X \bowtie Y \equiv Y \bowtie X\]

\end{theorem}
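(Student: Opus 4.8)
The plan is to prove the congruence by routing both sides through a common intermediate graph built with a fresh link. For \(Z \notin \{X,Y\}\) take the bridge graph \(\nu Z.(Z \bowtie X,\, Z \bowtie Y)\). If \(X = Y\) the two sides \(X \bowtie Y\) and \(Y \bowtie X\) are literally the same term, so the claim is immediate by reflexivity of \(\equiv\); hence I assume \(X \neq Y\) and fix such a \(Z\) (possible since there are unboundedly many link names).

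The heart of the argument applies (E6) twice, once for each orientation of the bridge. Reading (E6) left to right on \(\nu Z.(Z \bowtie X,\, Z \bowtie Y)\), with its bound name instantiated to \(Z\), its fusion to \(Z \bowtie X\), and its residual graph \(G\) to \(Z \bowtie Y\): the side condition demands \(Z \in \mathit{fn}(Z \bowtie Y)\) or \(X \in \mathit{fn}(Z \bowtie Y)\), and the first disjunct holds, so the bridge collapses to \(\nu Z.(Z \bowtie Y)\langle X / Z\rangle\), that is, to \(\nu Z.(X \bowtie Y)\). Symmetrically, first commuting the two atoms by (E2) lifted under the binder by (E5), and then applying the same (E6) step with \(G\) now instantiated to \(Z \bowtie X\) --- whose side condition \(Z \in \mathit{fn}(Z \bowtie X)\) again holds --- collapses the bridge to \(\nu Z.(Y \bowtie X)\).

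What remains is purely bookkeeping: discarding the now-vacuous binder. Since \(Z \notin \mathit{fn}(X \bowtie Y)\), the chain \(\nu Z.(X \bowtie Y) \equiv \nu Z.(X \bowtie Y,\, \zero) \equiv (X \bowtie Y,\, \nu Z.\zero) \equiv (X \bowtie Y,\, \zero) \equiv X \bowtie Y\) follows from (E1), (E2), (E5), (E8) and (E10); the identical chain gives \(\nu Z.(Y \bowtie X) \equiv Y \bowtie X\). Combining everything and using that \(\equiv\) is an equivalence relation yields \(X \bowtie Y \equiv \nu Z.(Z \bowtie X,\, Z \bowtie Y) \equiv Y \bowtie X\). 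I would isolate the step ``\(Z \notin \mathit{fn}(G)\) implies \(\nu Z.G \equiv G\)'' as a small reusable lemma, since this garbage collection of unused hyperlink creations recurs.

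The one genuinely delicate point --- and the reason for passing through the bridge rather than massaging \(X \bowtie Y\) in place --- is the side condition on (E6): a fusion cannot be absorbed into \(\zero\), so a naive attempt such as \(\nu Z.(Z \bowtie X,\, \zero) \equiv \nu Z.\zero\) is not sanctioned by the rules. Keeping the second fusion atom in the bridge is precisely what keeps the bound link \(Z\) free in the residual graph \(G\) in both orientations, which is exactly what makes (E6) fire; every other step uses only the multiset rules (E1)--(E3), the congruence rules (E4)--(E5), and the \(\nu\)-rules (E8), (E10).
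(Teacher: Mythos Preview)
Your proof is correct and follows essentially the same route as the paper: both pass through the bridge graph \(\nu Z.(Z \bowtie X,\, Z \bowtie Y)\), collapse it in each direction via (E6) (after commuting with (E2) under (E5) for the second direction), and then discard the vacuous \(\nu Z\). The paper indeed isolates the step \(Z \notin \mathit{fn}(G) \Rightarrow \nu Z.G \equiv G\) as a separate lemma, exactly as you suggest, and its proof of that lemma is the same (E1)/(E5)/(E10)/(E8) chain you sketch.
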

\begin{proof}
	See Chapter 3 of \Cite{sano-ba}.
\end{proof}

Thus, (E6) can be used also when we have a local link on
the right-hand side of \(\bowtie\).

\begin{theorem}[\(\alpha\)-conversion of hyperlinks]\label{th:alpha-equiv}

	Bound link names are \(\alpha\)-convertible in HyperLMNtal, i.e.,
	\[\nu X.G \equiv \nu Y.G\angled{Y/X} \text{ where } Y \notin \mathit{fn} (G)\]
\end{theorem}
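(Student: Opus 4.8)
The plan is to derive hyperlink $\alpha$-conversion from the fusion-absorption law~(E6), using~(E9) to commute the two binders and the bookkeeping rules (E1), (E2), (E4), (E5), (E8), (E10) as glue. Two auxiliary facts are needed first, both established \emph{without} appeal to the present theorem. (L1) \emph{Vacuous $\nu$}: if $Z \notin \mathit{fn}(H)$ then $\nu Z.H \equiv H$, since $\nu Z.H \equiv \nu Z.(\zero, H) \equiv (\nu Z.\zero, H) \equiv (\zero, H) \equiv H$ by (E1), (E4), (E5), (E8), (E10). (L2) \emph{Substitution facts}, proved by structural induction on the term from the definition in \figref{table:hyperlink-substitution} alone: $(G\angled{Y/X})\angled{X/Y} \equiv G$ when $Y \notin \mathit{fn}(G)$; $H\angled{Y/X} = H$ when $X \notin \mathit{fn}(H)$; and $Y \in \mathit{fn}(G\angled{Y/X})$ whenever $X \in \mathit{fn}(G)$. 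With these, I would dispose of the degenerate cases $X = Y$ (trivial) and $X \notin \mathit{fn}(G)$ (then $G\angled{Y/X} = G$ and both sides are $\equiv G$ by (L1)), leaving the main case $X \neq Y$, $X \in \mathit{fn}(G)$, $Y \notin \mathit{fn}(G)$, in which (L2) gives $Y \in \mathit{fn}(G\angled{Y/X})$, so the side conditions of~(E6) — an endpoint of the fusion must occur free in the body — are always met.

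For the main case I would argue along the chain
\begin{align*}
  \nu X.G
   &\equiv \nu X.\nu Y.G
     && \text{(L1), since }Y\notin\mathit{fn}(G) \\
   &\equiv \nu X.\nu Y.(X \bowtie Y, G\angled{Y/X})
     && \text{Thm~\ref{th:symmetry-of-bowtie}, (E4), (E5), (E6), (L2)} \\
   &\equiv \nu Y.\nu X.(X \bowtie Y, G\angled{Y/X})
     && \text{(E9)} \\
   &\equiv \nu Y.\nu X.(G\angled{Y/X})\angled{Y/X}
     && \text{(E5), (E6)} \\
   &\equiv \nu Y.\nu X.G\angled{Y/X}
     && \text{(E5), (L2)} \\
   &\equiv \nu Y.G\angled{Y/X}
     && \text{(L1).}
\end{align*}
The only nonroutine step is the second: inside the scope of $\nu X$, Theorem~\ref{th:symmetry-of-bowtie} flips the fusion to $Y \bowtie X$, then~(E6) absorbs it — legal because $Y \in \mathit{fn}(G\angled{Y/X})$ — producing $\nu Y.(G\angled{Y/X})\angled{X/Y}$, which by (L2) is $\equiv \nu Y.G$; reading this equivalence backwards and wrapping it under $\nu X$ with~(E5) yields the step. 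In step four~(E6) applies directly (no symmetry needed) to the inner $\nu X.(X\bowtie Y, \cdot)$, and steps five and six merely simplify, using $X \notin \mathit{fn}(G\angled{Y/X})$ so that $(G\angled{Y/X})\angled{Y/X} = G\angled{Y/X}$ and the binder $\nu X$ becomes vacuous.

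The main obstacle is not the length of this derivation but keeping it non-circular: every manipulation must go through the link substitution of \figref{table:hyperlink-substitution} — whose clauses already perform the necessary $\alpha$-renaming of bound names via a fresh $W$ — and through the rules of \figref{table:hyperlmntal-equiv}, so that (L2) in particular is proved by induction on term structure from that definition and never presupposes the statement being proved. The second, easily overlooked point is that~(E6)'s side condition genuinely fails when neither endpoint of the fusion occurs in the body, which is exactly why the case $X \notin \mathit{fn}(G)$ has to be peeled off before the chain can be run.
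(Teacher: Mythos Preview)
Your overall strategy---emit a fusion, swap the binders with (E9), absorb the fusion on the other side---is the right one, and your handling of the degenerate cases is cleaner than the paper's.  But there is a genuine circularity in the first clause of your (L2), the claim that $(G\angled{Y/X})\angled{X/Y}\equiv G$ when $Y\notin\mathit{fn}(G)$.  Consider the case $G=\nu Y.H$.  By the third clause of the substitution definition, $G\angled{Y/X}=\nu W.(H\angled{W/Y})\angled{Y/X}$ for fresh $W$; applying $\angled{X/Y}$ and the induction hypothesis leaves you with $\nu W.H\angled{W/Y}$, and to close the case you must show $\nu W.H\angled{W/Y}\equiv\nu Y.H$---which is exactly the $\alpha$-conversion theorem you are trying to prove.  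So (L2) as stated cannot be established ``from the definition in \figref{table:hyperlink-substitution} alone.''

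The paper avoids this trap by never needing $(G\angled{Y/X})\angled{X/Y}\equiv G$.  It starts instead from the \emph{double}-fusion term $\nu X.\nu Y.(Y\bowtie X,(X\bowtie Y,G))$ and reduces it in two ways, one to $\nu X.G$ and one to $\nu Y.G\angled{Y/X}$; the only substitution lemma required is the harmless $G\angled{X/X}=G$.  Your chain is in fact easily repaired along the same lines: in your step~2, emit the fusion against $G$ itself rather than $G\angled{Y/X}$.  Since $Y\notin\mathit{fn}(G)$ gives $G=G\angled{X/Y}$, (E6) read right-to-left yields $\nu Y.G\equiv\nu Y.(Y\bowtie X,G)$ directly (side condition supplied by $X\in\mathit{fn}(G)$), and after symmetry and (E9) a single forward application of (E6) produces $\nu Y.\nu X.G\angled{Y/X}$ with no appeal to the problematic lemma.
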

\begin{proof}
	See Chapter 3 of \Cite{sano-ba}.
\end{proof}

{


	It is a subject for future work to
	elucidate the relationship between the structural congruence rules and
	graph isomorphism, including
	the completeness and the soundness of the structural congruence rules.
	However, these properties are irrelevant to the validity of the semantics of HyperLMNtal and \LamGT{},
	and the verification upon them.
	The graphs handled in HyperLMNtal and \LamGT{} are
	the graphs of HyperLMNtal defined inductively from the beginning,
	not the graphs
	in ordinary algebraic graph transformation formalisms \cite{algebraic-gt}.
	Therefore, there is no need for the structural congruence rules to correspond to
	graph isomorphism, and relating them would require another new formulation of HyperLMNtal graphs in the style of standard graph theory, which is beyond the scope of the present work.
	We have run examples to confirm that the equivalence using
	structural congruence is practical on HyperLMNtal and \LamGT{}.
}

\subsection{Reduction Relation}\label{sec:flat-hyperlmntal-op-sem}

We give the reduction relation of Flat HyperLMNtal that defined
the small-step semantics of the language.  Note, however, that
\(\lambda_{GT}\) described in the next section has its own operational
semantics without incorporating the reduction relation described here.
We nevertheless introduce the reduction relation of Flat HyperLMNtal here
because it serves as the basis of the graph types of
\(\lambda_{GT}\) described in Section~\ref{sec:type-synsem}.

\begin{definition}[Reduction relation]

	For a set \{P\} of rewrite rules,
	the reduction relation \(\rightsquigarrow_P\) on graphs is defined
	as the minimal relation
	satisfying the rules in \figref{table:hyperlmntal-trans}.

	\begin{figure}[h]
		\normalsize
		\hrulefill{}
		\vspace{1mm}
		\begin{center}
			\begin{tabular}{lc}
				(R1) & \(\dfrac{G_1 \rightsquigarrow_{P} G_2}{(G_1, G_3) \rightsquigarrow_{P}  (G_2, G_3)} \)
				\vspace{1em}                                                                                                                                                     \\
				(R2) & \(\dfrac{G_1 \rightsquigarrow_{P} G_2}{\nu X.G_1 \rightsquigarrow_{P}  \nu X.G_2} \)
				\vspace{1em}                                                                                                                                                     \\
				(R3) & \(\dfrac{\hspace{1em} G_1 \equiv G_2 \hspace{2em} G_2 \rightsquigarrow_{P} G_3 \hspace{2em} G_3 \equiv G_4 \hspace{1em}}{G_1 \rightsquigarrow_{P} G_4} \)
				\vspace{1em}                                                                                                                                                     \\
				(R4) & \(\dfrac{(G_1 \means G_2) \in P}{G_1 \rightsquigarrow_{P} G_2} \)                                                                                         \\
			\end{tabular}
		\end{center}
		\hrulefill{}
		\caption{Reduction relation on HyperLMNtal graphs}\label{table:hyperlmntal-trans}
	\end{figure}
\end{definition}

\section{Syntax and semantics of \(\lambda_{GT}\)}\label{sec:syntax-semantics}


This section describes the syntax and the semantics of \(\lambda_{GT}\),
which is a small, call-by-value functional language
that employs hypergraphs as values and supports pattern matching for them.
The main design issue is how to represent and manipulate hypergraphs
in the setting of a functional language and how to let hypergraphs and
abstractions co-exist in a unified framework.

\subsection{Syntax of \(\lambda_{GT}\)}\label{sub:nu-lang-syntax}

The \(\lambda_{GT}\) language is composed of the following syntactic categories.

\begin{itemize}
	\item
	      \(X\)
	      denotes a \emph{Link Name}.
	\item
	      \(C\)
	      denotes a \emph{Constructor Name}.

	\item
	      \(x\)
	      denotes a \emph{Graph Context Name}.
\end{itemize}

The syntax of the language is given in \figref{table:lgt-syntax}.
\(T\) is a \emph{template} of a graph.
It extends graphs in HyperLMNtal defined in \figref{table:hyperlmntal-syntax}
with \emph{graph contexts}.
A graph context \(x [\links{X}]\),
where \(\links{X}\) is a sequence of different links,
is a wildcard in pattern matching
corresponding to a variable in functional languages,
It matches any graph with free links \(\links{X}\).
Free links of a graph could be thought of as named parameters
(or `access points') of the graph.
\(C (\links{X})\) is a constructor atom.
Intuitively, it is a node of a data structure with links \(\links{X}\).
We allow
	{
		\(\lambda\)-abstractions
	}
as the names of atoms in graph templates \(T\)
(and its subclass \(G\) to be defined shortly).
The \(\lambda\)-abstraction atoms have the form
\((\lambda\, x [\links{X}]. e) (\links{Y})\).
Intuitively, the atom takes a graph with free links \(\links{X}\),
binds it to the graph
context
\(x [\links{X}]\) and
returns the value
(defined in \figref{table:lgt-value})
obtained by evaluating the expression \(e\)
with the bound graph context.
Notice that the \(\lambda\)-abstraction \((\lambda\, x [\links{X}].e) \) is \emph{just the name of an atom}:
\(\lambda\)-abstraction atoms can be incorporated into data structures
just like atoms with constructor names.
This is how \(\lambda_{GT}\) supports first-class functions in a
graph setting.
The free link(s) \(\links{Y}\) of the atom can be used to connect the atom to other structures such as lists
to form a graph structure containing a first-class function.
The links \(\links{X}\) and the links appearing in the graphs of the body expression \(e\)
are \emph{not} the free links of the atom.

\((\caseof{e_1}{T}{e_2}{e_3})\)
evaluates \(e_1\), checks whether this matches the graph template \(T\),
and reduces to \(e_2\) or \(e_3\).
The details are described in Sections~\ref{sub:graph-subsitution}--%
\ref{sec:reduction}.
The case expression covers just two cases in pattern matching,
but we can nest the expression to handle more cases.
\((e_1\; e_2)\) is an application.

Note that some graph rewriting languages including Interaction Nets \cite{interaction-net-encoding-lamnda}
and HyperLMNtal have encodings of the \(\lambda\)-calculus \cite{Machie-IN,
	hyperlmntal-lambda}
in which both abstractions and applications are \emph{encoded} using
explicit graph nodes and (hyper)~links.
In contrast, \(\lambda_{GT}\) features abstractions and applications at
the language level so as to retain the standard framework of functional
languages.

\(G\) stands for a \emph{value} of the language \(\lambda_{GT}\),
which is \(T\) not containing graph contexts.
%
Henceforth, we may call both \(G\) and \(T\) a \emph{graph} when
the distinction is not important.

\begin{figure}[t]
	\normalsize{}%
	\hrulefill{}%
	\vspace{-4mm}
	\begin{center}%
		\begin{tabular}{rcll}
			\multicolumn{3}{@{}l}{Graph Template}                               \\
			\(T\) & \(::=\) & \(\zero\)                    & Null               \\
			      & \(|\)   & \(x [\links{X}]\)            & Graph context      \\
			      & \(|\)   & \(v\, (\links{X})\)          & Atom               \\
			      & \(|\)   & \((T, T)\)                   & Molecule           \\
			      & \(|\)   & \(\nu X.T\)                  & Hyperlink creation \\
			\\
			\multicolumn{3}{@{}l}{Atom Name}                                    \\
			\(v\) & \(::=\) & \(C\)                        & Constructor name   \\
			      & \(|\)   & \(\lambda\, x[\links{X}].e\) & Abstraction        \\
			      & \(|\)   & \(\bowtie\)                  & Fusion             \\
			\\
			\multicolumn{3}{@{}l}{Expression}                                   \\
			\(e\) & \(::=\) & \(T\)                        & Graph              \\
			      & \(|\)   & \(\caseof{e}{T}{e}{e}\)      & Case               \\
			      & \(|\)   & \((e\; e)\)                  & Application        \\
		\end{tabular}
	\end{center}
	\hrulefill{}
	\caption{Syntax of \(\lambda_{GT}\)}\label{table:lgt-syntax}
\end{figure}

\begin{figure}[t]
	\normalsize
	\hrulefill{}
	\begin{center}
		\begin{tabular}{rcll}
			\multicolumn{3}{@{}l}{Value}                               \\
			\(G\) & \(::=\) & \(\zero\)           & Null               \\
			      & \(|\)   & \(v\, (\links{X})\) & Atom               \\
			      & \(|\)   & \((G, G)\)          & Molecule           \\
			      & \(|\)   & \(\nu X.G\)         & Hyperlink creation \\
		\end{tabular}
	\end{center}
	\hrulefill{}
	\caption{Value of \(\lambda_{GT}\)}\label{table:lgt-value}
\end{figure}

\begin{definition}[Syntactic condition on expressions]\label{def:no-abstraction}

	A \(\lambda\)-abstraction atom is not allowed to appear in the pattern \(T\) of the
	case expression \(\caseof{e_1}{T}{e_2}{e_3}\).

\end{definition}

\begin{definition}[Abbreviation rules for graph contexts]

	We introduce the following abbreviation schemes to graph contexts
	as well as we have done to atoms.
	\begin{enumerate}
		\item
		      The parentheses of nullary graph contexts can be abbreviated.
		      For example, \(x ()\) can be abbreviated as \(x\).

		\item
		      Term Notation: \(\nu X. (v\, (\dots, X, \dots), x[\dots, X])\)
		      can be abbreviated as
		      \(v\, (\dots, x[\dots], \dots)\).
		      The same can be done for embedding atoms (or graph contexts)
		      in the argument of a graph context (or atoms),
		      respectively.
	\end{enumerate}

\end{definition}

\begin{definition}[Free functors of an expression]\label{def:free-functor}

	We define free functors of an expression \(e\), \(\mathit{ff}(e)\),
	in \figref{table:free-functor}.
	Free functors are not to be confused with free link names.

	\begin{figure}[t]
		\normalsize
		\hrulefill{}
		\begin{center}
			\begin{tabular}{rcl}
				\multicolumn{3}{l}{\(\mathit{ff} (\caseof{e_1}{T}{e_2}{e_3}) =\)}                                                                                              \\
				                                                         &       & \(\mathit{ff}(e_1) \cup (\mathit{ff}(e_2) \setminus \mathit{ff}(T)) \cup \mathit{ff}(e_3)\) \\[2mm]
				\(\mathit{ff}((e_1\; e_2))\)                             & \(=\) & \(\mathit{ff}(e_1) \cup \mathit{ff}(e_2)\)                                                  \\[2mm]
				\(\mathit{ff} (x [\links{X}])\)                          & \(=\) & \(\{x/\norm{\links{X}}\}\)                                                                  \\[1mm]
				\(\mathit{ff} (v\,(\links{X}))\)                         & \(=\) & \(\emptyset\)                                                                               \\[1mm]
				\(\mathit{ff} ((\lambda\, x[\links{X}].e) (\links{Y}))\) & \(=\) & \(\mathit{ff}(e) \setminus \{x/\norm{\links{X}}\}\)                                         \\[1mm]
				\(\mathit{ff} ((T_1, T_2))\)                             & \(=\) & \(\mathit{ff}(T_1) \cup \mathit{ff}(T_2)\)                                                  \\[1mm]
				\(\mathit{ff} (\nu X. T)\)                               & \(=\) & \(\mathit{ff} (T)\)                                                                         \\
			\end{tabular}
		\end{center}
		\hrulefill{}
		\caption{Free functors of an expression}\label{table:free-functor}
	\end{figure}
\end{definition}

\subsection{Operational semantics of \(\lambda_{GT}\)}\label{sec:operational-semantics}

First, we define the congruence rules (\(\equiv\))
and the link substitutions, \(T\angled{Y/X}\) and \(G\angled{Y/X}\),
for \(T\) and \(G\)
in the same manner
as we have defined in Section~\ref{sec:hyperlmntal}.
Although there is no graph context in Flat HyperLMNtal,
the link substitution for \(x [\links{X}]\) in \(T\) can be defined
in the same way as the one for atoms in HyperLMNtal.

\subsubsection{Graph Substitution}\label{sub:graph-subsitution}

We define \emph{graph substitution},
which replaces a graph context
whose functor occurs free
by a given subgraph.
The substitution avoids clashes with any bound
functors
by implicit \(\alpha\)-conversion
(capture-avoiding substitution).
Graph substitution is not to
be confused with \emph{hyperlink substitution}.
Intuitively, hyperlink substitution just reconnects hyperlinks.
On the other hand, graph substitution performs deep copying at the semantics level
(though it could or should be implemented with sharing whenever possible).


We define capture-avoiding substitution
\(\theta\)
of a graph context
\(x [\links{X}]\)
with a template \(T\)
in \(e\),
written \(e [T / x [\links{X}]]\),
as in \figref{table:graph-substitution}.
The definition is standard except that it handles the substitution of
the free links of graph contexts in the third rule.

\begin{figure}[t]
	\normalsize
	\hrulefill{}
	\begin{center}
		\begin{tabular}{@{}rc@{}l}
			\((T_1, T_2)\theta\)  & \(=\) & \quad\((T_1 \theta, T_2 \theta)\)                            \\[2mm]
			\((\nu X. T)\theta\)  & \(=\) & \quad\(\nu X. T\theta\)                                      \\[2mm]
			\multicolumn{3}{@{}l}{\((x [\links{X}])[T / y [\links{Y}]]\quad=\)}                          \\
			                      &       & if \(x/\norm{\links{X}} = y/\norm{\links{Y}}\)
			then
			\(T{\angled{\links{X}/\links{Y}}}\)                                                          \\
			                      &       & else
			\(x [\links{X}]\)                                                                            \\[2mm]
			\((C (\links{X}))\theta\)
			                      & \(=\) & \quad\(C (\links{X})\)                                       \\[2mm]
			\multicolumn{3}{@{}l}{%
			\(((\lambda\, x [\links{X}].e) (\links{Z}))[T / y [\links{Y}]]\quad=\)}                      \\
			                      &       & if \(x/\norm{\links{X}} = y/\norm{\links{Y}}\)
			then
			\((\lambda\, x [\links{X}].e) (\links{Z})\)                                                  \\
			                      &       & else if \(x/\norm{\links{X}} \notin \mathit{ff}(e)\) then
			\((\lambda\, x [\links{X}].e[T / y [\links{Y}]]) (\links{Z})\)                               \\
			                      &       & else
			\((\lambda\, z[\links{X}].e
			[z [\links{X}] / x [\links{X}]]
			[T / y [\links{Y}]])
			(\links{Z})\)                                                                                \\
			                      &       & \hfill{} where \(z/\norm{\links{X}} \notin \mathit{ff}(e)\). \\[2mm]
			\multicolumn{3}{@{}l}{%
			\((\caseof{e_1}{T}{e_2}{e_3})\theta\)}                                                       \\
			                      & \(=\) & \quad\(\caseof{e_1\theta}{T}{e_2\theta}{e_3\theta}\)         \\[2mm]
			\((T_1\; T_2)\theta\) & \(=\) & \quad\((T_1 \theta\;\; T_2 \theta)\)                         \\
		\end{tabular}
	\end{center}
	\hrulefill{}
	\caption{Graph Substitution}\label{table:graph-substitution}
\end{figure}

\subsubsection{Matching}\label{sec:matching}


We say that \(T\) matches a graph \(G\) if
there exists graph substitutions \(\theta\) such that
\(G\equiv T\thetas\).
	{
		The graphs in the range of substitutions should not contain free occurence of graph contexts:
		i.e., the substitution should be ground.
	}
Since the matching of \(\lambda_{GT}\)
does not involve abstractions (by Def.~\ref{def:no-abstraction}),
in which case \(G\) of \(\lambda_{GT}\) is essentially the same as
\(G\) of HyperLMNtal, we employ the \(\equiv\) defined in
\figref{table:hyperlmntal-equiv}.

Note that the matching of \(\lambda_{GT}\)
is not subgraph matching (as is standard in graph rewriting systems)
but the matching with the entire graph \(G\) (as is standard in
pattern matching of functional languages).
%
For this reason, the free link names appearing in a template
\(T\)
must exactly match the free links in the graph \(G\) to be matched.
This is to be contrasted with
free links of HyperLMNtal rules that are
effectively \(\alpha\)-convertible
since the rules can match subgraphs by supplementing fusion atoms (\Cite[Section 4.4]{sano2021}).


	{
		The matching can be done non-deterministic.
		We are planning to put constraints over the graph templates in case expressions
		to ensure deterministic matching but it is a future task.
	}

\subsubsection{Reduction}\label{sec:reduction}

We choose the call-by-value evaluation strategy.
	{
		The reason we did not choose call-by-need (or call-by-name) is to avoid
		infinite graphs to use infinite-descent in the verification later in \Cref{sec:autoverify}.
	}

In order to define the small-step reduction relation,
we extend the syntax with evaluation contexts defined as follows:
\[E ::= [] \;|\; (\caseof{E}{T}{e}{e}) \;|\; (E\; e) \;|\; (G\; E) \;|\; T\]
As usual, \(E[e]\) stands for \(E\) whose hole is filled with \(e\).

We define the reduction relation in \figref{table:lgt-reduction}.

\begin{figure}[t]
	\normalsize
	\hrulefill{}
	\vspace{-4mm}
	\begin{prooftree}
		\AxiomC{\(G\equiv T \thetas\)}
		\RightLabel{\small Rd-Case1}
		\UnaryInfC{\((\caseof{G}{T}{e_2}{e_3}) \reduces e_2 \thetas\)}
	\end{prooftree}

	\medskip

	\begin{prooftree}
		\AxiomC{\(\neg\exists \thetas. G\equiv T \thetas\)}
		\RightLabel{\small Rd-Case2}
		\UnaryInfC{\((\caseof{G}{T}{e_2}{e_3}) \reduces e_3\)}
	\end{prooftree}

	\medskip

	\begin{prooftree}
		\AxiomC{\(\mathit{fn}(G) = \{\links{X}\}\)}
		\RightLabel{\small Rd-\(\beta\)}
		\UnaryInfC{\(
			((\lambda\, x [\links{X}].e) (\links{Y})\: G)
			\reduces e [G / x[\links{X}]]
			\)}
	\end{prooftree}

	\medskip

	\begin{prooftree}
		\AxiomC{\(e \reduces e'\)}
		\RightLabel{\small Rd-Ctx}
		\UnaryInfC{\(E[e] \reduces E[e']\)}
	\end{prooftree}
	\hrulefill{}
	\caption{Reduction relation of \(\lambda_{GT}\)}\label{table:lgt-reduction}
\end{figure}

\begin{definition}[Abbreviation rules for \(\lambda\)-abstraction atom]

	We introduce a shorthand notation similar to the \(\lambda\)-calculus.

	\begin{enumerate}
		\item
		      Application is left-associative.

		\item
		      \((\lambda\, x[\links{X}]. (\lambda\, y[\links{Y}].e) (\links{Z})) (\links{Z})\)
		      can be abbreviated as\\
		      \((\lambda\, x[\links{X}]\, y[\links{Y}].e) (\links{Z})\).

		\item
		      {
		      \(((\lambda\, x[\links{X}]. e_1)(\links{Y})\: e_2)\)
		      can be abbreviated as
		      \(\letin{x[\links{X}]}{e_2}{e_1}\).
		      The \(\links{Y}\) will disappear
		      immediately after evaluating the expression, doing nothing,
		      in \(\beta\)-reduction.
		      Thus, we omit the links in the abbreviation.

		      }
	\end{enumerate}
\end{definition}

For example, we can describe a program to append two
singleton difference lists as follows
(detailed description of difference lists will be given in
Section~\ref{ex:prod-dlist}):
\[\begin{array}{l}
		\mathbf{let}\ \mathit{append}[Z] =         \\
		\hspace{1em} (\lambda\, x[Y, X]\; y[Y, X]. \\
		\hspace{3em} x[y[Y], X]                    \\
		\hspace{1em} ) (Z)                         \\
		\mathbf{in}\ \mathit{append}[Z]
		\hspace{0.5em} \mathrm{Cons} (1, Y, X)
		\hspace{0.5em} \mathrm{Cons} (2, Y, X)     \\
	\end{array}\]
We show the whole process of reduction of this program in \figref{fig:dlist-reduction} and graphically in \figref{fig:dlist-reduction-graph}.

Firstly, the \(\lambda\)-abstraction atom is bound
to the graph context \(\mathit{append}[Z]\)\footnote{%
	It may appear that the \(Z\) of \(\mathit{append}[Z]\)
	does not play any role in this example.
	However, such a link becomes necessary when the
	\(\mathit{append}\) is made to appear in a data structure
	(e.g., as in \(\nu Z.(\mathrm{Cons} (Z, Y, X),
	\mathit{append}[Z])\)).  This is why
	\(\lambda\)-abstraction atoms are allowed to have
	argument links.  Once such a function is accessed and
	\(\beta\)-reduction starts, the role of \(Z\) ends,
	while the free links \emph{inside} the abstraction
	atom start to play key roles.
}.
The bound \(\lambda\)-abstraction atom is a function that takes
two difference lists, both having \(X\) and \(Y\) as free
links, and returns their concatenation also having \(X\) and
\(Y\) as its free links.


\begin{figure}[tb]
	\normalsize
	\small
	\hrulefill{}
	\[\begin{array}{@{}c@{\ }l@{}}
			\multicolumn{2}{l}{\mathbf{let}\ \mathit{append}[Z] =
			(\lambda\, x[Y, X]\; y[Y, X].
			\,x[y[Y], X]
			) (Z)}                                                \\
			\multicolumn{2}{l}{\hspace{1.5em}\mathbf{in}\ \mathit{append}[Z]
				\hspace{0.5em} \mathrm{Cons} (1, Y, X)
				\hspace{0.5em} \mathrm{Cons} (2, Y, X)}
			\\[2mm]
			\reduces & (\lambda\, x[Y, X]\; y[Y, X].
			\, x[y[Y], X]) (Z)
			\hspace{0.5em} \mathrm{Cons} (1, Y, X)
			\hspace{0.5em} \mathrm{Cons} (2, Y, X)
			\\[2mm]
			\reduces &
			(\lambda\, y[Y, X].
			\, x[y[Y], X])(Z)[\mathrm{Cons} (1, Y, X)/x[Y, X]]
			\hspace{0.5em} \mathrm{Cons} (2, Y, X)
			\\[0.5mm]
			=        & (\lambda\, y[Y, X].
			\, \mathrm{Cons} (1, y[Y], X))(Z)
			\hspace{0.5em} \mathrm{Cons} (2, Y, X)
			\\[2mm]
			\reduces &
			(\mathrm{Cons} (1, y[Y], X))(X)[\mathrm{Cons} (2, Y, X)/y[Y, X]]
			\\[2mm]
			=        & \mathrm{Cons} (1, \mathrm{Cons} (2, Y), X)
		\end{array}\]
	\hrulefill{}
	\caption{An example of reduction: append operation on difference lists}\label{fig:dlist-reduction}
\end{figure}
\begin{figure}[t]
	\centering
	\includegraphics[width=.9\hsize]{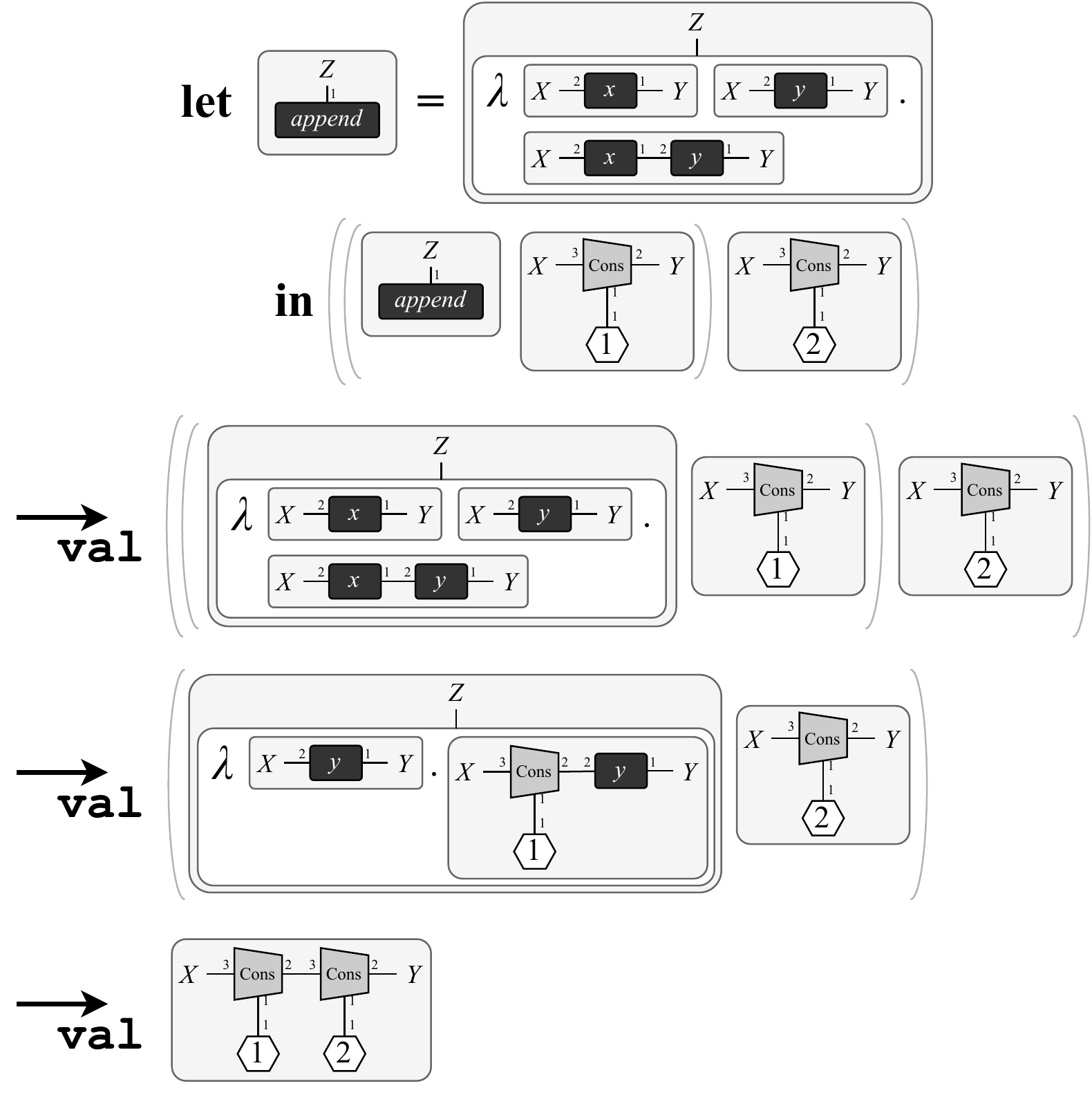}
	\caption{Visualized version of the reduction process in \figref{fig:dlist-reduction}.\\
		Small numbers around a non-unary atom indicate the ordering of arguments,
		and a small dot among edges stands for a \emph{fusion} of the edges.}%
	\label{fig:dlist-reduction-graph}
\end{figure}

A program that pops the last element of a difference list can be described as follows.
\[\begin{array}{l}
		\mathbf{let}\ \mathit{pop}[Z] =                             \\
		\hspace{1em} (\lambda\, x[Y, X].                            \\
		\hspace{2em} \mathbf{case}\ x[Y, X]\ \mathbf{of}            \\
		\hspace{3em} y[\mathrm{Cons} (z, Y), X] \rightarrow y[Y, X] \\
		\hspace{2.5em} |\ \mathbf{otherwise} \rightarrow x[Y, X]    \\
		\hspace{1em} ) (Z)                                          \\
		\mathbf{in}\ \mathit{pop}[Z]
		\hspace{0.5em} \mathrm{Cons} (1, \mathrm{Cons} (2, Y), X)
	\end{array}\]
This will result in
\(\mathrm{Cons} (1, Y, X)\).


j\section{Type System}\label{sec:type-synsem}

In this section, we propose a type system,
\(F_{GT}\),
for the \(\lambda_{GT}\) language.
We define the type of graphs using graph grammar.
This can be regarded as an extension of regular tree grammar,
on which algebraic data types are based.

\subsection{Syntax and rules for \(F_{GT}\)}

Let \(\alpha\) be a syntactic category denoting the identifier of
a type name.
The syntax of types is given in \figref{table:type-syntax}.
It can be observed that the definition of a type employs both
inductive definition (standard in programming languages) and
production rules (standard in formal grammar).
The reason for doing so is that, unlike ADTs, types of graphs cannot be defined inductively in general.
Thus we employed
generative grammar as a well-established formalism for defining
graphs.
Integrating it into \(F_{GT}\) is the research question of the
present work.

\begin{figure}[t]
	\normalsize
	\hrulefill{}
	\begin{center}
		\begin{tabular}{rcll}
			\multicolumn{3}{@{}l}{Atom Name for types}                                                       \\
			\(\tau\)  & \(::=\) & \(\alpha\)                                            & Type Variable      \\
			          & \(|\)   & \(\tau\, (\links{X}) \rightarrow \tau\, (\links{X})\) & Arrow              \\[3mm]
			\\
			\multicolumn{3}{@{}l}{RHS of production rules}                                                   \\
			\(\Tauu\) & \(::=\) & \(\tau\, (\links{X})\)                                & Type Atom          \\
			          & \(|\)   & \(C (\links{X})\)                                     & Constructor Atom   \\
			          & \(|\)   & \(X \bowtie Y\)                                       & Fusion             \\
			          & \(|\)   & \((\Tauu, \Tauu)\)                                    & Molecule           \\
			          & \(|\)   & \(\nu X. \Tauu\)                                      & Hyperlink creation \\[3mm]
			\\
			\multicolumn{3}{@{}l}{Production Rule}                                                           \\
			\(r\)     & \(::=\) & \(\alpha (\links{X}) \longrightarrow \Tauu\)          & Production rule    \\
		\end{tabular}
	\end{center}

	\hrulefill{}
	\caption{Syntax of \(F_{GT}\)}\label{table:type-syntax}
\end{figure}

We extend the \(\lambda\)-expression \(\lambda\, x[\links{X}].e\)
with type annotation
\(\tau\, (\links{X})\)
as
\(\lambda\, x[\links{X}]: \tau\, (\links{X}).e\).

\begin{definition}[Abbreviation rule for an arrow atom]

	We introduce a shorthand notation similar to an arrow in the typed \(\lambda\)-calculus, that is,

	\[(\tau_1 (\links{X})
		\rightarrow
		(\tau_2 (\links{Y}) \rightarrow \tau_3 (\links{Z})) (\links{W})) (\links{W})\]
	can be abbreviated as
	\[(\tau_1 (\links{X})
		\rightarrow
		\tau_2 (\links{Y}) \rightarrow \tau_3 (\links{Z})) (\links{W}).\]
\end{definition}

\begin{definition}[Syntactic constraints]

	A production rule
	\(\alpha (\links{X}) \longrightarrow \Tauu\)
	should satisfy
	\(\mathit{fn} (\Tauu) = \{\links{X}\}\).

\end{definition}

Let \(\Gamma\) be a \emph{typing context} which is a
set of the form
\(x\, [\links{X}]: \tau\, (\links{X})\),
where the \(x\)'s are mutually distinct
and \(t\) should be a type variable or an arrow.
The typing relation
\((\Gamma, P) \vdash e: \tau\, (\links{X})\)
denotes that
\(e\) has the type \(\tau\, (\links{X})\)
under the
type environment \(\Gamma\) and
a set \(P\) of production rules,
whose typing rules are defined as follows.

\begin{definition}[Rules for \({F}_{GT}\)]

	Typing rules for \(F_{GT}\) is given in \figref{fig:fgt-rules}.

	\begin{figure}[t]
		\normalsize
		\hrulefill{}
		\vspace{0.2cm}

		Ty-App
		\vspace{-0.05cm}
		\begin{prooftree}
			\AxiomC{\((\Gamma, P) \vdash e_1 : (\tau_1 (\links{X}) \rightarrow \tau_2 (\links{Y})) (\links{Z})\)}
			\AxiomC{\((\Gamma, P) \vdash e_2 : \tau_1 (\links{X})\)}
			\BinaryInfC{\((\Gamma, P) \vdash (e_1\; e_2) : \tau_2 (\links{Y})\)}
		\end{prooftree}
		\medskip

		Ty-Arrow
		\vspace{-0.1cm}
		\begin{prooftree}
			\AxiomC{\(((\Gamma, x[\links{X}]: \tau_1 ({\links{X}})), P)
				\vdash e : \tau_2 (\links{Y})\)}
			\UnaryInfC{\((\Gamma, P) \vdash
				(\lambda\, x[\links{X}] : \tau_1 ({\links{X}}).e)(\links{Z})
				: (\tau_1 ({\links{X}}) \rightarrow \tau_2 (\links{Y})) (\links{Z})\)}
		\end{prooftree}
		\medskip

		Ty-Var
		\vspace{-0.1cm}
		\begin{prooftree}
			\AxiomC{}
			\UnaryInfC{\((\Gamma\{x[\links{X}]: \tau\, ({\links{X}})\}, P)
				\vdash x[\links{X}] : \tau\, ({\links{X}})\)}
		\end{prooftree}
		\medskip

		Ty-Cong
		\vspace{-0.1cm}
		\begin{prooftree}
			\AxiomC{\((\Gamma, P) \vdash T : \tau\, (\links{X})\)}
			\AxiomC{\(T \equiv T'\)}
			\BinaryInfC{\((\Gamma, P) \vdash T' : \tau\, (\links{X})\)}
		\end{prooftree}
		\medskip

		Ty-Alpha
		\vspace{-0.1cm}
		\begin{prooftree}
			\AxiomC{\((\Gamma, P) \vdash T : \tau\, (\links{X})\)}
			\UnaryInfC{\((\Gamma, P) \vdash T\angled{Z/Y} : \tau\, (\links{X}) \angled{Z/Y}\)}
		\end{prooftree}
		\begin{quote}
			\hspace{1cm} where \(Z \notin \mathit{fn} (T)\)
		\end{quote}
		\medskip
		\vspace{0.2cm}

		Ty-Prod
		\vspace{-0.1cm}
		\begin{prooftree}
			\AxiomC{\((\Gamma, P) \vdash T_1 : \tau_1 (\links{X_1})\)}
			\AxiomC{\hspace{-12pt}\(\dots\)\hspace{-12pt}}
			\AxiomC{\((\Gamma, P) \vdash T_n : \tau_n (\links{X_n})\)}
			\TrinaryInfC{\((\Gamma, P\{\alpha (\links{X}) \longrightarrow \Tauu\})
				\vdash \Tauu [T_1/\tau_1 (\links{X_1}), \dots ,T_n/\tau_n (\links{X_n})] : \alpha (\links{X}) \)}
		\end{prooftree}
		\begin{quote}
			where
			\(\tau_i (\links{X_i})\) are all the type atoms appearing in \(\Tauu\)
		\end{quote}
		\medskip
		\vspace{0.2cm}

		Ty-Case
		\vspace{-0.5cm}
		\begin{prooftree}
			\def\defaultHypSeparation{\hskip -4pt}
			\AxiomC{\((\Gamma, P) \vdash e_1 : \tau_1 (\links{X})\)}
			\AxiomC{\(((\Gamma, \Gamma'), P) \vdash e_2 : \tau_2 (\links{Y})\)}
			\AxiomC{\((\Gamma, P) \vdash e_3 : \tau_2 (\links{Y})\)}
			\TrinaryInfC{\((\Gamma, P) \vdash (\caseof{e_1}{T}{e_2}{e_3}) : \tau_2 (\links{Y})\)}
		\end{prooftree}
		\vspace{1em}
		\hrulefill{}
		\caption{Typing rules for \(F_{GT}\)}\label{fig:fgt-rules}
	\end{figure}

\end{definition}


Ty-App, Ty-Arrow, and Ty-Var are essentially the same as other
functional languages
except that the type of \(F_{GT}\) is written as an atom with free links.
Ty-Var gets the type of the variable from the type environment.
Ty-Cong incorporates the structural congruence rules.
Ty-Alpha \(\alpha\)-converts the free link names of both the graph and
its type.
This rule corresponds to the fact that the free link names in the
rules of HyperLMNtal are
(theoretically) \(\alpha\)-convertible.
Ty-Prod incorporates production rules to the type system.
Ty-Case is also defined in the same manner as in other functional languages,
where \(\Gamma'\) is a type environment that maps types
from all the graph contexts
	{
		appearing
	}
in \(T\),
which we will describe in detail in
\Cref{sub:dynamic-checking}.

\subsection{Examples}\label{sub:graph-types}

In this section,
we introduce some of the production rules,
which we believe describes many of the types of the data structures
for programming in practice.

\begin{example}[Type of a natural number]

	The type of a natural number connected to a free link \(X\)
	can be denoted as \(\mathit{nat}\,(X)\),
	where the production rules are follows.
	\[\begin{array}{l@{~~}c@{~~}l}
			\mathit{nat}\,(X) & \longrightarrow & \mathrm{Zero} (X)               \\
			\mathit{nat}\,(X) & \longrightarrow & \mathrm{Succ} (\mathit{nat}, X)
		\end{array}\]
	(Recall that the RHS of the latter rule is a shorthand of
	\(\nu N.\mathrm{Succ} (N, X),\mathit{nat}\,(N) \) (Def.~\ref{def:abbreviation}).)

	Algebraic data types (ADTs)
	can be easily expressed in the same way as in this example:
	our language and the type system is a natural extension of functional languages and their type systems.

\end{example}

\begin{example}[Type of a difference list]\label{ex:prod-dlist}

	The \(\lambda_{GT}\) language can handle some
	data structures that algebraic data types cannot handle.
	A difference list can be understood as a list with an
	additional link to the last element.
	This is a popular data structure since the early days of logic
	programming in which the links are represented as logical variables.
	It allows us to append two lists in constant time.
	In functional programming, a difference list can be implemented
	using a higher-order function that receives a subsequent list
	and returns the entire list, but we wish to represent such data
	structures in the first-order setting.

	The production rules for a difference list can be defined as follows.
	\[\begin{array}{l@{~~}c@{~~}l}
			\mathit{nodes}\,(Y, X) & \longrightarrow & X \bowtie Y                                          \\
			\mathit{nodes}\,(Y, X) & \longrightarrow & \mathrm{Cons} (\mathit{nat}, \mathit{nodes}\,(Y), X) \\
		\end{array}\]

\end{example}

\begin{example}[Typing a difference list with functions]\label{ex:typing-dlist}

	Since \(\lambda_{GT}\) and its type system \(F_{GT}\) treat functions as first-class citizens,
	it is even possible to have a difference list with functions as its elements.
	\Figref{fig:dlist-typing-example} shows that the graph
	\(G = \mathrm{Cons} (succ, Y, X)\)
	has type
	\(\mathit{nodes}\,(Y, X)\)
	under type environment
	\(\Gamma = succ [Z_1]: (\nattonat) (Z_1)\)
	and production rules \(P=\{P_1,P_2\}\) where
	\[ \begin{array}{l@{~~}c@{~~}l@{~~}c@{~~}l}
			\mathit{nodes}\,(Y, X) & \longrightarrow & X \bowtie Y                                       & \cdots & P_1 \\
			\mathit{nodes}\,(Y, X) & \longrightarrow & \mathrm{Cons} (\nattonat, \mathit{nodes}\,(Y), X) & \cdots & P_2 \\
		\end{array}\]

	\begin{figure*}[t]
		\normalsize
		\small
		\begin{prooftree}
			\def\defaultHypSeparation{\hskip .1in}
			\AxiomC{}
			\RightLabel{Ty-Var}
			\UnaryInfC{\((\Gamma\{succ[Z_1]: (\nattonat) (Z_1)\}, P) \vdash succ[Z_1] : (\nattonat) (Z_1)\)}

			\AxiomC{}
			\RightLabel{Ty-Prod}
			\UnaryInfC{\((\Gamma, P\{P_1\}) \vdash X \bowtie Y : \mathit{nodes}\,(Y, X)\)}
			\RightLabel{Ty-Alpha}
			\UnaryInfC{\((\Gamma, P) \vdash Z_2 \bowtie Y : \mathit{nodes}\,(Z_2, X)\)}

			\RightLabel{Ty-Prod}
			\BinaryInfC{\((\Gamma, P\{P_2\}) \vdash G' : \mathit{nodes}\,(Y, X)\)
				\hspace{0.5em} where \hspace{0.5em}
				\(G' = \nu Z_1 Z_2. (\mathrm{Cons} (Z_1, Z_2, X), succ [Z_1], Z_2 \bowtie Y)\)
			}

			\RightLabel{Ty-Cong}
			\AxiomC{\hspace{-10pt}\(G \equiv G'\)}

			\BinaryInfC{\((\Gamma, P) \vdash G : \mathit{nodes}\,(Y, X)\)
				\hspace{0.5em} where \hspace{0.5em}
				\(G = \mathrm{Cons} (succ, Y, X)\)
			}
		\end{prooftree}
		\caption{Type checking a difference list}\label{fig:dlist-typing-example}
	\end{figure*}

\end{example}

\begin{example}[Type of a doubly-linked difference list]\label{ex:prod-dbl}

	A doubly-linked difference list is a list with four free
	links, two different links for each end.
	Although the (hyper)links of \(\lambda_{GT}\) and HyperLMNtal
	are undirected, we are interested in using them to model
	directed hyperlinks (roughly corresponding to pointers in
	imperative languages) that are to be `followed' in one direction.
	As with difference lists, the addition of elements to the tail of
	the list can be done in constant time, as desired in
	representing deques.
	%
	Of course, doubly-linked lists that are not difference lists can also be handled in an obvious way.
	\[\begin{array}{l@{~~}c@{~~}l}
			\mathit{nodes}\,(F', B, B', F) & \longrightarrow & F \bowtie B,B' \bowtie F'                                                 \\
			\mathit{nodes}\,(F', B, B', F) & \longrightarrow & \nu X. \mathrm{Cons} (\mathit{nat}, F', \mathit{nodes}\,(X, B, B'), X, F) \\
		\end{array}\]
\end{example}

\begin{example}[Type of difference skip lists]\label{ex:prod-skip}

	By extending the type definition of difference lists, the type of unbounded-level skip lists can be defined.
	This implies that we can also define a type for skip lists
	with a nil node at the end and/or whose level is fixed.
	\[\begin{array}{l@{~~}c@{~~}l}
			\mathit{nodes}\,(Y, X) & \longrightarrow & X \bowtie Y                                                             \\
			\mathit{nodes}\,(Y, X) & \longrightarrow & \mathrm{Cons} (\mathit{nat}, \mathit{forks}\,(Y), X)                    \\
			\mathit{forks}\,(Y, X) & \longrightarrow & \mathrm{Next} (\mathit{nodes}\,(Y), X)                                  \\
			\mathit{forks}\,(Y, X) & \longrightarrow & \nu Z.\mathrm{Fork} (Z, \mathit{forks}\,(Z), X), \mathit{nodes}\,(Y, Z) \\
		\end{array}\]
	We also show the visualized version of production rules in \figref{fig:skiplist-prodrule}
	and an example difference skip list in \figref{fig:skiplist-example}.
	\begin{figure}[t]
		\centering
		\includegraphics[width=.8\hsize]{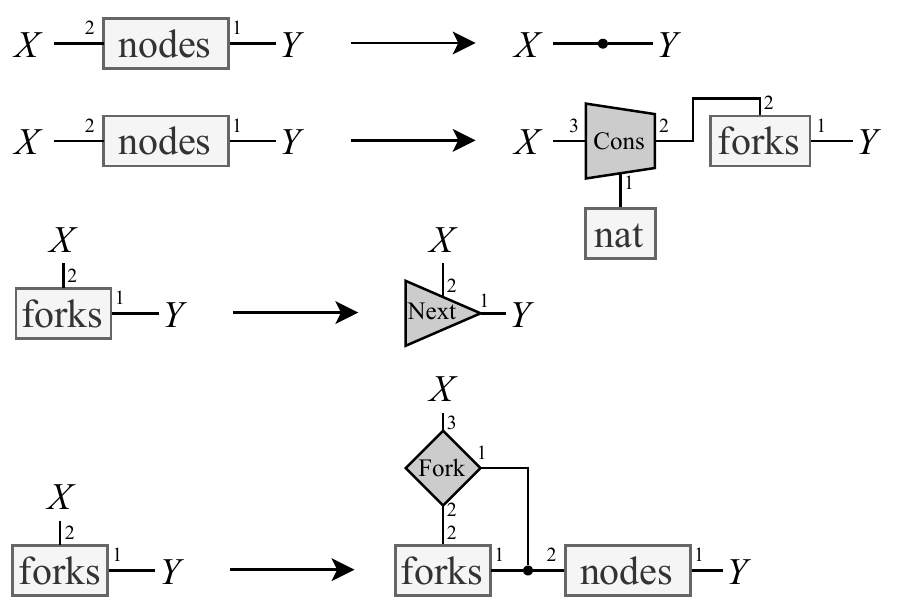}
		\caption{The production rules for the type of difference skip list}
		\label{fig:skiplist-prodrule}
	\end{figure}
	\begin{figure}[t]
		\centering
		\includegraphics[width=.8\hsize]{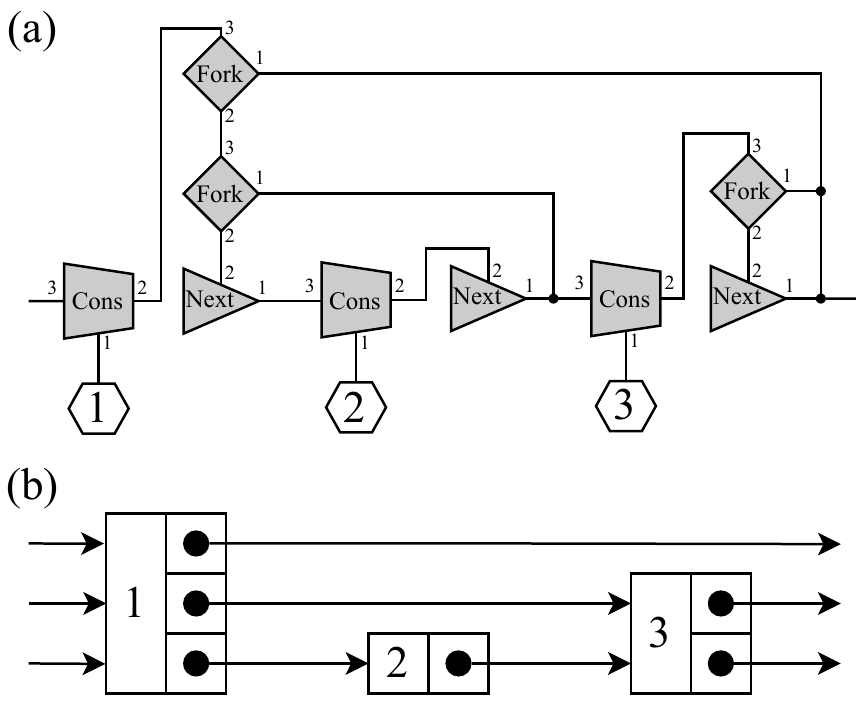}
		\caption{An example of a skip list:\\ (a) in our framework, (b) with pointers}
		\label{fig:skiplist-example}
	\end{figure}
\end{example}

\begin{example}[Type of a leaf-linked tree]\label{ex:prod-lltree}

	A leaf-linked tree is a graph with three free links (say
	\(X,L,R\)) which is a tree whose root is represented by \(X\) and
	whose leaves form a difference list represented by \(L\) and \(R\).
	\[\begin{array}{l@{~~}c@{~~}l}
			\mathit{lltree}\,(L, R, X) & \longrightarrow & L \bowtie X, \mathrm{Leaf} (\mathit{nat}, R, X)                            \\
			\mathit{lltree}\,(L, R, X) & \longrightarrow & \nu Y. \mathrm{Node} (\mathit{lltree}\,(L, Y), \mathit{lltree}\,(Y, R), X) \\
		\end{array}\]
\end{example}

\begin{example}[Type of a threaded tree]\label{ex:prod-thtree}

	A threaded tree is somewhat similar to a leaf-linked tree but each
	non-terminal node has access to the the rightmost leaf the left
	subtree and the leftmost leaf of the right subtree.
	\[\begin{array}{l@{~~}c@{~~}l}
			\mathit{thtree}\,(L, R, X) & \longrightarrow & L \bowtie X, \mathrm{Leaf} (\mathit{nat}, R, X) \\
			\mathit{thtree}\,(L, R, X) & \longrightarrow & \mathrm{Node}
			(\mathit{thtree}\,(L, X), \mathit{thtree}\,(X, R), X)                                          \\
		\end{array}\]
\end{example}

\subsection{Properties of \(F_{GT}\)}\label{fgt-properties}


This section discusses some properties of \(\lambda_{GT}\) and
\(F_{GT}\).  As mentioned in Section~\ref{sec:syntax-semantics},
we keep the language small to focus on the handling of
graph structures, more specifically the handling of graphs by
pattern matching with graph contexts.
In particular, it has no explicit mechanism
(such as \(\mathbf{let\ rec}\) or \(\mathbf{fix}\))
to deal with recursive functions.
This is because those features can be achieved essentially in the same
way as other functional languages do.

\subsubsection{Soundness of \(F_{GT}\)}\label{sub:soundness}

\begin{lemma}[Progress]\label{lem:progress}
	If
	\((\emptyset, P) \vdash e: \tau\, (\links{X})\),
	then
	\(e\) is a value or \(\exists e'. e \reduces e'\).
\end{lemma}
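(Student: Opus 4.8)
The plan is to prove Progress by structural induction on the derivation of $(\emptyset, P) \vdash e : \tau\,(\links{X})$, following the standard canonical-forms recipe for call-by-value languages, adapted to the graph setting. I would first establish a Canonical Forms lemma: if $(\emptyset, P) \vdash G : \tau\,(\links{X})$ where $G$ is a value, then the shape of $G$ is constrained by the shape of $\tau$ — in particular, if $\tau$ is an arrow type $(\tau_1(\links{X}) \rightarrow \tau_2(\links{Y}))$, then $G$ is (structurally congruent to) a $\lambda$-abstraction atom $(\lambda\,x[\links{X}].e)(\links{Z})$ possibly wrapped in $\nu$-binders and fusions. This is where the Ty-Cong and Ty-Alpha rules need care, since a value of arrow type need not \emph{syntactically} be an abstraction atom but only congruent to one; I would argue by inspecting which typing rules can conclude an arrow type (only Ty-Arrow, Ty-Var — excluded since $\Gamma=\emptyset$ — Ty-Cong, and Ty-Alpha) and using that $\equiv$ preserves the ``contains exactly one abstraction atom at the right arity'' property.

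Then I would proceed case-by-case on the last rule of the typing derivation. The cases Ty-Var and Ty-Arrow-with-nonempty-context cannot arise since the context is empty; Ty-Arrow itself and the graph-forming rules (Ty-Prod, Ty-Cong, Ty-Alpha applied to a $T$) yield a value directly, so $e$ is a value and we are done. The two interesting cases are Ty-App and Ty-Case. For Ty-App with $e = (e_1\,e_2)$: by the IH each $e_i$ is either a value or steps. If $e_1$ steps, use Rd-Ctx with evaluation context $(E\,e_2)$; if $e_1$ is a value and $e_2$ steps, use Rd-Ctx with $(G\,E)$; if both are values, Canonical Forms gives $e_1 \equiv (\lambda\,x[\links{X}].e)(\links{Z})$, and since $e_2$ is a value $G$ with, by the typing $(\emptyset,P)\vdash e_2:\tau_1(\links{X})$ and an auxiliary lemma, $\mathit{fn}(G) = \{\links{X}\}$, the side condition of Rd-$\beta$ is satisfied and $e$ steps. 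For Ty-Case with $e = (\caseof{e_1}{T}{e_2}{e_3})$: if $e_1$ steps, use Rd-Ctx with context $(\caseof{E}{T}{e_2}{e_3})$; if $e_1$ is a value $G$, then by excluded middle either $\exists\theta.\,G \equiv T\theta$, in which case Rd-Case1 applies, or no such $\theta$ exists, in which case Rd-Case2 applies — either way $e$ steps.

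The main obstacle I anticipate is the Canonical Forms lemma together with the matching of the free-link side condition in Rd-$\beta$ and Rd-Case. For Rd-$\beta$ we need that a well-typed value of type $\tau_1(\links{X})$ has free link names exactly $\{\links{X}\}$; this should follow from a straightforward induction on typing (the syntactic constraint on production rules, $\mathit{fn}(\Tauu) = \{\links{X}\}$, is exactly what makes Ty-Prod preserve this invariant, and Ty-Cong/Ty-Alpha preserve it by construction), but it must be stated and proved as a separate lemma. The subtlety with $\equiv$ — that an arrow-typed value may only be \emph{congruent} to an abstraction atom rather than equal to one — means the formal statement of Canonical Forms and the Rd-$\beta$ step should be threaded through Rd-Ctx's closure under $\equiv$ is not available directly, so instead I would note that if $e_1 \equiv (\lambda\,x[\links{X}].e)(\links{Z})$ then $(e_1\,e_2)$, while not literally an Rd-$\beta$ redex, reduces because we may first rewrite $e_1$ under $\equiv$; this requires either that $\equiv$ be compatible with application contexts (which it is, since $\equiv$ on graphs extends to expressions congruentially via the graph subterms) or a small strengthening of Rd-$\beta$. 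I would flag this as the one place where the interaction between structural congruence and the reduction rules needs to be handled explicitly rather than waved through.
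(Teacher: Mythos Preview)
Your proposal is correct and follows the same approach as the paper: induction on the typing derivation, with the key observation that \textbf{case} can never be stuck because Rd-Case2 fires whenever matching fails. The paper's own proof is far terser than yours---it defers the application case entirely to ``same as other functional languages'' and does not engage with the Canonical Forms lemma, the free-link side condition of Rd-$\beta$, or the congruence subtlety you flag; those concerns are legitimate but go beyond what the paper itself spells out.
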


\begin{proof}
	By induction on the derivation of
	\((\emptyset, P) \vdash e: \tau\, (\links{X})\).
	Notice that
	the only new extension from other functional languages
	in expressions (\figref{table:lgt-syntax})
	is Case,
	and the Case expression is never stuck because if matching fails;
	it just branches to \textbf{otherwise} and evaluation proceeds.
\end{proof}

\begin{lemma}[Substitution]\label{lem:subst}
	If
	\[(\Gamma, P) \vdash e_1: \tau_1 (\links{Y_1})\]
	and
	\[((\Gamma, x[\links{X_1}]: \tau_1 (\links{Y_1})), P)
		\vdash e_2: \tau_2 (\links{Y_2})\]
	then
	\[(\Gamma, P) \vdash e_2[e_1 / x[\links{X_1}]] : \tau_2 (\links{Y_2}).\]
\end{lemma}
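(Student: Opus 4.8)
The plan is to proceed by induction on the derivation of the second typing judgment, namely \(((\Gamma, x[\links{X_1}]: \tau_1 (\links{Y_1})), P) \vdash e_2: \tau_2 (\links{Y_2})\), and to perform a case analysis on the last typing rule applied. This is the standard structure of a substitution lemma, so the routine cases should go through essentially as in the simply-typed \(\lambda\)-calculus; the novel content here is confined to the rules that interact with graph contexts and hyperlinks, namely Ty-Var, Ty-Prod, Ty-Case, Ty-Cong, and Ty-Alpha. For the genuinely structural rules (Ty-App, Ty-Arrow), I would push the substitution inward using the corresponding clauses of the definition of graph substitution in \figref{table:graph-substitution}, apply the induction hypothesis to each premise, and reassemble. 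The Ty-Arrow case additionally requires the usual bookkeeping: when the bound graph context \(x'[\links{X}]\) of the \(\lambda\)-abstraction clashes with \(x[\links{X_1}]\), or when it would capture free functors of \(e_1\), the definition of graph substitution performs the appropriate \(\alpha\)-renaming, and one needs a small auxiliary fact that renaming a bound graph context preserves typability — this is itself an easy induction and I would either state it as a preliminary lemma or fold it in.

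The Ty-Var case splits on whether the variable occurrence is the one being substituted. If \(e_2 = x[\links{X_1}]\) so that \(\tau_2(\links{Y_2}) = \tau_1(\links{Y_1})\), then \(e_2[e_1/x[\links{X_1}]] = e_1\angled{\links{X_1}/\links{Y_1}}\) by the third clause of graph substitution, and we already have \((\Gamma, P) \vdash e_1 : \tau_1(\links{Y_1})\); closing this requires applying Ty-Alpha repeatedly (together with Ty-Cong for \(\alpha\)-renaming of bound links that would otherwise clash) to retarget the free links of \(e_1\) from \(\links{Y_1}\) to \(\links{X_1}\) and simultaneously rewrite the type. If instead \(e_2 = y[\links{Z}]\) with \(y/\norm{\links{Z}} \ne x/\norm{\links{X_1}}\), the substitution is the identity and the context \(\Gamma\) still contains the needed binding, so the same Ty-Var instance applies. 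For Ty-Cong and Ty-Alpha, I would observe that graph substitution commutes with structural congruence and with link renaming on the links \emph{not} occurring in the substituted graph context, so one can apply the induction hypothesis to the premise and then re-apply the same congruence/renaming step; here one uses that the links \(\links{X_1}\) of the graph context are bound-like with respect to the ambient renaming, appealing to the capture-avoidance built into the definitions.

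The case I expect to be the main obstacle is Ty-Prod. There \(e_2 = \Tauu[T_1/\tau_1(\links{X_1'}), \dots, T_n/\tau_n(\links{X_n'})]\) is built by plugging subgraphs into a production-rule right-hand side, and we must show that substituting \(e_1\) for \(x[\links{X_1}]\) throughout still yields something of type \(\alpha(\links{X})\) via the same rule. The delicate point is that graph substitution for \(x[\links{X_1}]\) must be shown to distribute over the type-atom substitution \(\Tauu[\overrightarrow{T_i/\tau_i(\links{X_i'})}]\) — i.e. substituting into each \(T_i\) first and then plugging the results into \(\Tauu\) gives the same graph — which is a commutation property between two different notions of substitution (ordinary graph-context substitution versus the meta-level substitution of templates for type atoms appearing in a production RHS). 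Once that commutation is established, the induction hypothesis applies to each premise \((\Gamma, P) \vdash T_i : \tau_i(\links{X_i'})\) to give \((\Gamma, P) \vdash T_i[e_1/x[\links{X_1}]] : \tau_i(\links{X_i'})\), and re-applying Ty-Prod closes the case. I would isolate this commutation as a separate syntactic lemma; modulo that, the whole argument is routine. Finally, Ty-Case follows by applying the induction hypothesis to the three premises — using weakening of the typing context by \(\Gamma'\) in the \(e_2\) premise, which commutes with the substitution since \(x\) is distinct from the graph contexts bound in \(\Gamma'\) — and pushing the substitution through the case-expression clause of \figref{table:graph-substitution}.
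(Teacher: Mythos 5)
Your proposal takes the same route as the paper: the paper's entire proof is ``by induction on the derivation of \(((\Gamma, x[\links{X_1}]: \tau_1 (\links{Y_1})), P) \vdash e_2: \tau_2 (\links{Y_2})\)'', which is exactly the induction you set up. Your case analysis fills in details the paper leaves implicit, and correctly flags the genuinely non-routine points (retargeting free links via Ty-Alpha in the Ty-Var case, and the commutation of graph substitution with the production-rule instantiation in Ty-Prod).
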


\begin{proof}
	By induction on the derivation of
	\(((\Gamma, x[\links{X_1}]: \tau_1 (\links{Y_1})), P)
	\vdash e_2: \tau_2 (\links{Y_2})\).
\end{proof}

\begin{lemma}[Preservation]\label{lem:preserve}
	If
	\((\Gamma, P) \vdash e: \tau\, (\links{X})\)
	and
	\(e \reduces e'\),
	then
	\((\Gamma, P) \vdash e' : \tau\, (\links{X})\).
\end{lemma}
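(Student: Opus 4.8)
The plan is to prove subject reduction by induction on the derivation of \(e \reduces e'\), with a case analysis on the last reduction rule used (Rd-Ctx, Rd-Case1, Rd-Case2, Rd-\(\beta\) of \figref{table:lgt-reduction}). Two routine preliminaries are needed. First, a family of \emph{inversion lemmas}: because a case expression and an application are not graph templates, their typing derivations can only end in Ty-Case and Ty-App respectively, so the premises of those rules are recovered immediately; for an abstraction atom \((\lambda\, x[\links{X}]:\tau_1(\links{X}).e)(\links{Y})\) — which \emph{is} a template, hence may be retyped through Ty-Cong, Ty-Alpha, or Ty-Prod — one shows by induction on its derivation that any type it carries is, up to \(\equiv\) and link renaming, an arrow \((\tau_1(\links{X}) \rightarrow \tau_2(\links{Y}'))(\links{Y})\) with \(((\Gamma, x[\links{X}]:\tau_1(\links{X})), P) \vdash e : \tau_2(\links{Y}')\). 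Second, an \emph{iterated Substitution Lemma}, i.e.\ the generalization of Lemma~\ref{lem:subst} to the simultaneous substitution of finitely many graph contexts, obtained by the same induction (or by composing single substitutions after \(\alpha\)-renaming the contexts apart).

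The three straightforward cases would go as follows. For Rd-Ctx, where \(e = E[e_0] \reduces E[e_0'] = e'\) and \(e_0 \reduces e_0'\), I case-split on the shape of \(E\), apply the inversion lemma corresponding to the outermost constructor (Ty-App for \((E'\, e)\) and \((G\, E')\), Ty-Case for \(\caseof{E'}{T}{e_2}{e_3}\)) to obtain a typing of the subexpression containing the redex, invoke the induction hypothesis on \(e_0 \reduces e_0'\), and reassemble with the same typing rule; the degenerate cases \(E = []\) and \(E = T\) reduce to the induction hypothesis or are vacuous. For Rd-Case2, inversion on Ty-Case directly gives \((\Gamma, P) \vdash e_3 : \tau_2(\links{Y})\), which is the required type. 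For Rd-\(\beta\), where \(((\lambda\, x[\links{X}]:\tau_1(\links{X}).e_0)(\links{Y})\; G) \reduces e_0[G/x[\links{X}]]\), inversion on Ty-App and then on the abstraction atom yields \(((\Gamma, x[\links{X}]:\tau_1(\links{X})), P) \vdash e_0 : \tau_2(\links{Y})\) and \((\Gamma, P) \vdash G : \tau_1(\links{X})\); Lemma~\ref{lem:subst} then gives \((\Gamma, P) \vdash e_0[G/x[\links{X}]] : \tau_2(\links{Y})\) (the side condition \(\mathit{fn}(G) = \{\links{X}\}\) of Rd-\(\beta\) follows from typing together with the syntactic constraint \(\mathit{fn}(\Tauu) = \{\links{X}\}\) on production rules).

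The crux is Rd-Case1: \((\caseof{G}{T}{e_2}{e_3}) \reduces e_2\theta\) with \(G \equiv T\theta\). Inversion on Ty-Case supplies \((\Gamma, P) \vdash G : \tau_1(\links{X})\) and \(((\Gamma, \Gamma'), P) \vdash e_2 : \tau_2(\links{Y})\), so by the iterated Substitution Lemma it suffices to prove that the matching substitution \(\theta\) is \emph{compatible} with \(\Gamma'\): for every \(z[\links{Z}] : \sigma(\links{Z})\) in \(\Gamma'\), \((\Gamma, P) \vdash \theta(z[\links{Z}]) : \sigma(\links{Z})\). Establishing this is the main obstacle, and it is precisely where graph types diverge from algebraic data types: one must show that a derivation of \(G : \tau_1(\links{X})\) can be \emph{decomposed} along the structure of the template \(T\) — unfolding, via Ty-Prod, the production rules of \(P\) used at the top of the derivation until they match the shape of \(T\), while keeping track of the \(\equiv\), Ty-Cong and Ty-Alpha steps along the way — so as to extract, for each hole of \(T\), a subderivation typing the graph bound there with the type recorded in \(\Gamma'\). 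This ``unzipping'' is not an induction on a term structure, since \(\tau_1\) is a bare type name; I would therefore isolate it as a separate type-decomposition (matching-soundness) lemma and appeal to it here. It is exactly this lemma, and the constraint on \(\Gamma'\) that makes it provable, that are developed in \Cref{sub:dynamic-checking} and exploited by the verification algorithm of \Cref{sec:autoverify}.
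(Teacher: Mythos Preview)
Your overall architecture is sound and considerably more explicit than the paper's one-line ``Proved using \Cref{lem:subst}.'' The induction on the reduction derivation, the inversion lemmas, and the handling of Rd-Ctx, Rd-Case2 and Rd-\(\beta\) are all standard and correct.

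The gap is in your treatment of Rd-Case1. You correctly reduce the problem to showing that the matching substitution \(\theta\) is compatible with \(\Gamma'\), and you correctly invoke the iterated Substitution Lemma from there. But your proposed mechanism for establishing compatibility --- ``decomposing'' the derivation of \(G:\tau_1(\links{X})\) along the structure of the template \(T\) by unfolding Ty-Prod steps until they line up with the constructor atoms of \(T\) --- is not what the paper does, and in fact cannot work in general. The paper says this explicitly at the start of \Cref{sub:dynamic-checking}: ``a graph context can match any graph as long as the sets of free links are the same, [so] we cannot naively ensure that the type of the graph matches the intended type of the context.'' A template \(T\) need not follow the grammar of \(\tau_1\) at all; the top-level Ty-Prod unfolding of \(G:\tau_1(\links{X})\) bears no necessary relation to the constructor atoms and holes of \(T\), so there is nothing to ``unzip'' against.

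The paper's actual solution is not static decomposition but \emph{dynamic} type checking woven into the matching relation: templates carry type annotations on their graph contexts, \(\Gamma'\) is simply read off those annotations, and the rule Mt-Ty of \figref{table:matching} requires each annotated subtemplate \((T:\tau(\links{X}))\) to match only graphs \(G\) with \(G:\tau(\links{X})\) already established. Compatibility of \(\theta\) with \(\Gamma'\) is then an immediate consequence of the derivation of \(\matches{G}{T}{\theta}\) (a short induction on it, reading off the Mt-Ty premises and using weakening), not of any analysis of the scrutinee's typing. So the ``matching-soundness lemma'' you want does exist, but its proof goes through the \(\rhd\) derivation, not through \((\Gamma,P)\vdash G:\tau_1(\links{X})\). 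The decomposition result you gesture at, \Cref{th:graph-decomp}, decomposes a typed graph along a single \emph{production rule}, not along an arbitrary user template; it serves the extended system of \Cref{sec:fgt-ext} and the algorithm of \Cref{sec:autoverify}, not preservation.
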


\begin{proof}
	Proved using the \Cref{lem:subst}.
\end{proof}

\begin{theorem}[Soundness]\label{th:soundness}
	If
	\((\emptyset, P) \vdash e: \tau\, (\links{X})\),
	and
	\(e \reduces^{\ast} e'\)
	then
	\(e'\) is a value or \(\exists e''. e' \reduces e''\).
\end{theorem}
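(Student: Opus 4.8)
The plan is to derive Soundness from the standard combination of Progress (Lemma~\ref{lem:progress}) and Preservation (Lemma~\ref{lem:preserve}) by induction on the length of the reduction sequence $e \reduces^{\ast} e'$. First I would set up the induction: the base case is when $e' = e$, i.e.\ the sequence has length zero, in which case the conclusion is exactly Progress applied to the hypothesis $(\emptyset, P) \vdash e : \tau\,(\links{X})$, so $e'$ is a value or $\exists e''.\, e' \reduces e''$.

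For the inductive step, suppose $e \reduces^{\ast} e'$ decomposes as $e \reduces e_1 \reduces^{\ast} e'$ for some intermediate term $e_1$ reached by a strictly shorter sequence. Since $(\emptyset, P) \vdash e : \tau\,(\links{X})$ and $e \reduces e_1$, Preservation (Lemma~\ref{lem:preserve}, instantiated with $\Gamma = \emptyset$) gives $(\emptyset, P) \vdash e_1 : \tau\,(\links{X})$. Now the induction hypothesis applies to $e_1$ and the shorter sequence $e_1 \reduces^{\ast} e'$, yielding that $e'$ is a value or $\exists e''.\, e' \reduces e''$, which is the desired conclusion.

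The only subtlety worth flagging is that Preservation as stated keeps the \emph{same} type $\tau\,(\links{X})$ across a single reduction step, so the typing hypothesis is reestablished verbatim at each step and the induction goes through with no need to track a weakening or change of free links; this is exactly what makes the two-lemma recipe work cleanly here. I expect no real obstacle in this particular theorem — the content lives entirely in Progress and Preservation, and in turn in the Substitution lemma (Lemma~\ref{lem:subst}) that underlies Preservation. If anything needs care it is making sure the induction is phrased over the number of reduction steps (so that $\reduces^{\ast}$ is treated as the reflexive–transitive closure of $\reduces$) rather than attempting an induction on the structure of $e$, which would not interact well with the Case and application rules.
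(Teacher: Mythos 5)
Your proof is correct and is exactly the standard Progress-plus-Preservation argument that the paper invokes (the paper's own proof is just the one-line remark that the theorem ``follows from'' Lemmas~\ref{lem:progress} and~\ref{lem:preserve}); your induction on the length of the reduction sequence is the natural way to make that one-liner precise. No differences in approach, only in the level of detail.
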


\begin{proof}
	Follows from \Cref{lem:progress} and \Cref{lem:preserve}.
\end{proof}

\subsubsection{Relation with graph reduction}\label{sub:fgt-reduction}

Structured Gamma\cite{structuredgamma} is a
first-order
graph rewriting system developed to represent and reason about
the shapes of pointer data structures.  The framework of Structured
Gamma was then adapted to LMNtal (whose graph structures are dual to
those of Structured Gamma, roughly speaking) to design and implement
LMNtal ShapeType\cite{gcm2021}.  Despite several syntactic
variations (such as the duality of nodes/links and the presence/absence of
hyperlinks), Structured Gamma and LMNtal ShapeType can (essentially)
handle graphs
of \(\lambda_{GT}\) without \(\lambda\)-abstraction atoms.
The typing relation {\`a} la Structured Gamma
and LMNtal ShapeType is defined as follows.

\begin{definition}[Typing relation in Structured Gamma/LMNtal ShapeType]\label{def:structured-gamma}

	\(P \vdash \tau: \alpha (\links{X})\)
	iff
	\(\alpha (\links{X}) \rightsquigarrow^{\ast}_{P} \tau\)
	and
	\(\tau\) does not contain type variables or arrow atoms

\end{definition}

We have shown that the typing relation in our type system \(F_{GT}\)
subsumes
the one in Structured Gamma in the following sense.


\begin{theorem}[\(F_{GT}\) and HyperLMNtal reduction]\label{th:fgt-stgamma}
	\[\begin{array}{l}
			(\Gamma, P)
			\vdash
			T : \tau\, (\links{X}) \\
			\qquad\Leftrightarrow
			\tau\, (\links{X})
			\rightsquigarrow^{\ast}_{P}
			T
			{\overrightarrow{[\tau_i (\links{Y_i}) / x_i [\links{X_i}]]}}^{i}
			{[\overrightarrow{\tau_i (\links{Z_i}) / {(\lambda \dots)}_i (\links{W_i})}]}^{i}
		\end{array}\]

	where
	\begin{itemize}
		\item
		      \(\Gamma =
		      {\overrightarrow{x_i [\links{X_i}]: \tau_i (\links{X_i})}}^{i}
		      \),

		\item
		      \({(\lambda \dots)}_{i} (\links{W_i})\)
		      are all the \(\lambda\)-abstraction atoms in \(T\), and
		\item
		      \((\Gamma, P) \vdash {(\lambda \dots)}_i (\links{W_i}) : \tau_i (\links{Z_i})\)

	\end{itemize}

\end{theorem}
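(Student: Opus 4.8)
The plan is to prove both directions by induction on appropriate derivations, exploiting the close structural correspondence between the typing rules of $F_{GT}$ (Figure~\ref{fig:fgt-rules}) and the HyperLMNtal reduction relation $\rightsquigarrow_P$ (Figure~\ref{table:hyperlmntal-trans}), where production rules $\alpha(\links{X}) \longrightarrow \Tauu$ play the role of HyperLMNtal rewrite rules. The key observation is that Ty-Prod is essentially a single application of rule (R4) followed by substitution into a context, Ty-Cong mirrors (R3), Ty-Var/Ty-Arrow record the "leaves" of the derivation where graph contexts and $\lambda$-abstraction atoms sit, and the context closure rules (R1), (R2) let a rewrite fire deep inside a molecule or under a $\nu$.

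For the ($\Rightarrow$) direction, I would induct on the derivation of $(\Gamma, P) \vdash T : \tau(\links{X})$. The Ty-Var and Ty-Arrow cases are base cases: the right-hand side of the claimed reduction is exactly $\tau_i(\links{Y_i})$ substituted for itself (reflexivity of $\rightsquigarrow^{\ast}_P$), using the third and fourth bullets to identify the contractum. For Ty-Prod, the last step rewrites some $\tau_j(\links{X_j})$-occurrences in $\Tauu$; by the induction hypotheses each $T_k$ is reachable from $\tau_k(\links{X_k})$, and I would assemble these reductions inside the context $\Tauu[\dots]$ using (R1) and (R2), prefixed by one (R4) step $\alpha(\links{X}) \rightsquigarrow_P \Tauu$. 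Ty-Cong is handled by (R3); Ty-Alpha needs Theorem~\ref{th:alpha-equiv} ($\alpha$-conversion is derivable in $\equiv$) so it folds into the Ty-Cong/(R3) treatment. The bookkeeping obstacle here is that $\rightsquigarrow_P$ rewrites whole graphs while Ty-Prod rewrites a chosen multiset of type-atom occurrences simultaneously; I expect to need a small auxiliary lemma that a parallel rewrite of several non-overlapping type atoms can be serialized into a sequence of single $\rightsquigarrow_P$ steps under the congruence (E1)–(E10), which is routine but must be stated carefully because the occurrences may be nested under $\nu$'s and commas.

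For the ($\Leftarrow$) direction, I would induct on the length of the reduction sequence $\tau(\links{X}) \rightsquigarrow^{\ast}_P W$ where $W$ is the fully-substituted template on the right. The zero-step case forces $\tau(\links{X})$ itself to already be the substituted form, which (since type atoms are exactly what gets substituted) means $T$ is a single graph context or $\lambda$-atom, discharged by Ty-Var or Ty-Arrow. For the inductive step I would look at the first rewrite: if it is (R4) applied at the root, $\alpha(\links{X}) \rightsquigarrow_P \Tauu$ with $\alpha(\links{X})\longrightarrow\Tauu \in P$, and the remaining reduction acts inside $\Tauu$ on the various type atoms — peeling off, via (R1)/(R2) analysis, sub-reductions for each type atom, applying the IH to each, and reassembling with Ty-Prod. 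A leading (R3) step is absorbed by Ty-Cong (plus Ty-Alpha for the $\alpha$-renaming part of $\equiv$). The main obstacle on this side — and I think the hardest part of the whole proof — is the inversion: given $\tau(\links{X}) \rightsquigarrow^{\ast}_P W$, I must show the reduction can be rearranged (using confluence-like commutation of independent $\rightsquigarrow_P$ steps, or a standard-reduction/leftmost argument) so that exactly one root (R4) step happens first and all subsequent steps are confined to the respective type-atom positions created by it; without such a normalization lemma the induction does not go through cleanly, because an arbitrary reduction sequence may interleave rewrites across different sub-atoms in a way that does not directly match the tree shape of a Ty-Prod derivation. I would isolate this as a separate lemma (a "decomposition of reductions from a type atom") and prove it by a diamond/commutation argument on $\rightsquigarrow_P$ modulo $\equiv$, which is where most of the real work lies; everything downstream of it is then a straightforward structural induction matching rules one-to-one.
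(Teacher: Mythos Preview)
Your $\Rightarrow$ direction matches the paper's: induction on the typing derivation, with Ty-Var and Ty-Arrow as base cases, Ty-Cong handled via (R3), and Ty-Prod assembled from the sub-reductions using (R1)/(R2). One small correction: Ty-Alpha does not fold into Ty-Cong via Theorem~\ref{th:alpha-equiv}, because Ty-Alpha renames \emph{free} links while that theorem is about bound ones. The paper instead isolates a lemma that $G_1 \rightsquigarrow^{\ast}_P G_2$ implies $G_1\angled{Y/X} \rightsquigarrow^{\ast}_P G_2\angled{Y/X}$, proved by wrapping both sides in $\nu X.(X \bowtie Y,\,-)$ and appealing to (E6) together with (R1)--(R3).

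For $\Leftarrow$ the paper takes a genuinely different and simpler route. You peel off the \emph{first} reduction step (which, as you note, is necessarily a root (R4) step since the source is a single type atom) and then need your decomposition lemma to split the remaining reduction $\Tauu \rightsquigarrow^{\ast}_P W$ into independent sub-reductions, one per type-atom position of $\Tauu$; you correctly flag this standardization-style lemma as the crux and the hardest part. The paper instead peels off the \emph{last} step $\Tauu' \rightsquigarrow_P \Tauu$. The induction hypothesis then directly yields a typing derivation for $\Tauu'$ (after applying the inverse substitutions $\theta_x^{-1}\theta_\lambda^{-1}$), and because the type atom $\alpha(\links{Y})$ being rewritten is still present in $\Tauu'$, that derivation tree contains a sub-derivation concluding with type $\alpha(\links{Y})$; the paper simply grafts a single Ty-Prod application in place of that sub-derivation to obtain the derivation for $\Tauu$. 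This tree surgery requires no commutation or confluence argument whatsoever, so the lemma you identified as carrying most of the work is bypassed entirely. Your approach is not wrong, but the last-step orientation is substantially cheaper.
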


\begin{proof}
	For \(\Rightarrow\),
	we can prove by induction on the last applied \(F_{GT}\) rules.
	For \(\Leftarrow\),
	We prove by induction on the length of the reduction
	\(\rightsquigarrow^{\ast}_{P}\).
\end{proof}

Note that if no graph contexts
or \(\lambda\)-expressions appear in \(T\),
by \Cref{th:fgt-stgamma},
the typing relation in \(F_{GT}\) is equivalent to the one in Structured Gamma.
In other words, our type system is an extension of Structured Gamma
to allow graph contexts and \(\lambda\)-abstraction atoms.
This allows us to take advantage of research results on Structured Gamma,
its derivative LMNtal ShapeType,
and parsing of graphs using graph grammar.

\begin{example}[
		Theorem~\ref{th:fgt-stgamma} on the difference list example]

	Here, we see that \Cref{th:fgt-stgamma} holds on
	\Cref{ex:typing-dlist}.
	Recall that
	\((succ [Z_1]: (\nattonat) (Z_1), P) \vdash \mathrm{Cons}\allowbreak (succ,\allowbreak Y,\allowbreak X) : \mathit{nodes}\,(Y, X)\)
	holds in \(F_{GT}\),
	which can also be shown using HyperLMNtal reduction as follows.
	\[\begin{array}{l}
			\mathit{nodes}\,(Y, X)                                                                  \\
			\rightsquigarrow_{P_2}
			\nu Z_1 Z_2. (\mathrm{Cons} (Z_1, Z_2, X), (\nattonat) (Z_1), \mathit{nodes}\,(Y, Z_2)) \\
			\rightsquigarrow_{P_1}
			\nu Z_1 Z_2. (\mathrm{Cons} (Z_1, Z_2, X), (\nattonat) (Z_1), Z_2 \bowtie Y)            \\
			\equiv
			\nu Z_1. (\mathrm{Cons} (Z_1, Y, X), (\nattonat) (Z_1))                                 \\
			=
			\mathrm{Cons} (succ , Y, X) [(\nattonat) (Z_1) / succ [Z_1]]
		\end{array}\]

\end{example}

\subsection{Type checking case expressions}\label{sub:dynamic-checking}
{%

	\(\lambda_{GT}\) allows pattern matching of graphs.
	In pattern matching, graph contexts can be used as wildcards.
	Since a graph context can match any graph
	as long as the sets of free links are the same,
	we cannot naively ensure that the type of the graph matches
	the intended type of the context.
	Therefore, we allow the typing annotation of graph contexts.
}

To allow type annotation in pattern matching,
we extend the syntax of the graph template \(T\).
A type annotation \(T: \tau\, (\links{X})\) ensures that
the type of the graph matched with \(T\) is of type \(\tau\, (\links{X})\).

To evaluate pattern matching with annotations,
we extend the matching mechanism.
\(\matches{G}{T}{\thetas}\)
denotes that (i) the graph context \(T\) can match the
graph \(G\) with graph substitutions \(\thetas\) and that (ii)
each subgraph of \(G\) matched by a subcontext of \(T\) satisfies
the type constraint attached to the subcontext.
\(\matches{G}{T}{\thetas}\)
is defined inductively as in \figref{table:matching}.
It is a straightforward inductive argument
to see that \figref{table:matching} extends
the matching defined in Section~\ref{sec:matching} with
the rule Mt-Ty for type checking.

The type annotations that do not match the type definitions of production rules
could be reported as bugs,
which could be analyzed easily and statically.
%
Also, for simplicity, henceforth
we will assume that all the graph contexts are type-annotated
and make it a future task to support unannotated graph contexts.

	{%

		The matching can be non-deterministic; that is, given
		a graph and a pattern, there may in general be more
		than one way in which graph contexts in the pattern
		are bound to subgraphs.
		However, the non-determinacy of the matching does not affect the soundness of the type system
		since the system proves that every execution path is type-safe.
	}

\begin{figure}[t]
	\normalsize
	\hrulefill{}
	\begin{prooftree}
		\AxiomC{\(\mathit{fn}(G) = \{\links{X}\}\)}
		\RightLabel{Mt-Var}
		\UnaryInfC{\(\matches{G}{x [\links{X}]}{[G / x [\links{X}]]}\)}
	\end{prooftree}
	\medskip

	\begin{prooftree}
		\AxiomC{}
		\RightLabel{Mt-Triv}
		\UnaryInfC{\(\matches{G}{G}{[]}\)}
	\end{prooftree}
	\medskip

	\begin{prooftree}
		\AxiomC{\(\matches{G_1}{T_1}{\overrightarrow{\theta_1}}\)}
		\AxiomC{\(\matches{G_2}{T_2}{\overrightarrow{\theta_2}}\)}
		\RightLabel{Mt-Mol}
		\BinaryInfC{\(\matches{(G_1, G_1)}{(T_1, T_2)}{\overrightarrow{\theta_1} \overrightarrow{\theta_2}}\)}
	\end{prooftree}
	\begin{quote}
		\hspace{0.3cm} where \(\mathrm{dom} ({\overrightarrow{\theta_1}})
		\cap
		\mathrm{dom} ({\overrightarrow{\theta_2}}) = \emptyset\)
	\end{quote}
	\medskip
	\vspace{0.3cm}

	\begin{prooftree}
		\AxiomC{\(\matches{G}{T}{\thetas}\)}
		\RightLabel{Mt-\(\nu\)}
		\UnaryInfC{\(\matches{\nu X.G}{\nu X.T}{\thetas}\)}
	\end{prooftree}
	\medskip

	\begin{prooftree}
		\AxiomC{\(\matches{G_1}{T}{\thetas}\)}
		\AxiomC{\(G_1 \equiv G_2\)}
		\RightLabel{Mt-Cong}
		\BinaryInfC{\(\matches{G_2}{T}{\thetas}\)}
	\end{prooftree}
	\medskip

	\begin{prooftree}
		\AxiomC{\(\matches{G}{T}{\thetas}\)}
		\AxiomC{\(G: \tau\, (\links{X})\)}
		\RightLabel{Mt-Ty}
		\BinaryInfC{\(\matches{G}{(T: \tau\, (\links{X}))}{\thetas}\)}
	\end{prooftree}
	\hrulefill{}
	\caption{Matching with a template and graph substitutions}\label{table:matching}
\end{figure}

The program that pops the last element of a difference list
we have introduced in \Cref{sec:reduction}
can be handled with
type-annotations as follows.
\[
	\begin{array}{l}
		(\Gamma, P)
		\vdash \\
		\begin{array}{l}
			\hspace{0em} (\lambda\, x[Y, X]: \mathit{nodes}\,(Y, X).         \\
			\hspace{1em} \mathbf{case}\ x[Y, X]\ \mathbf{of}                 \\
			\hspace{2em} \nu Z_1. Z_2. (y[Z_1, X]: \mathit{nodes}\,(Z_1, X), \\
			\hspace{5em} \mathrm{Cons} (Z_2, Y, Z_1),                        \\
			\hspace{5em} z[Z_2]: \mathit{nat}\,(Z_2))                        \\
			\hspace{3em} \rightarrow y[Y, X]                                 \\
			\hspace{1.5em} |\ \mathbf{otherwise} \rightarrow x[Y, X]         \\
			\hspace{0em} ) (Z)                                               \\
		\end{array}:
		\begin{array}{l}
			\hspace{0em} (\mathit{nodes}\,(Y, X) \rightarrow \\
			\hspace{1em} \mathit{nodes}\,(Y, X)              \\
			\hspace{0em} ) (Z)                               \\
		\end{array}
	\end{array}
\]
This can be typed using Ty-Case
where the \(\Gamma'\) stands for the annotated typing relations
\(
y[Z_1, X]: \mathit{nodes}\,(Z_1, X),
z[Z_2]: \mathit{nat}\,(Z_2)
\).


\section{Extending the type system}\label{sec:fgt-ext}

In this section, we deal with an example which the type system in
Section~\ref{sec:type-synsem}
fails to verify.
The type system in
Section~\ref{sec:type-synsem}
was actually for \emph{parsing} when dealing with graphs;
it just checks if the graph can be generated from the annotated type
variable atom,
i.e., the start symbol.
Algebraic data types can be handled in this
manner because
they can only be generated according to the grammar that defines the type.
However, in the case of graphs, more powerful operations are possible,
for example the concatenation of difference lists.
In this section, we propose an extended verification framework to deal with such cases.

\subsection{Motivation}

As a running example, we consider a typed version of the
following program for appending two difference lists
introduced in Section~\ref{sec:reduction}.

\[\begin{array}{l}
		\hspace{0em} (\lambda\: x[Y, X]: \mathit{nodes}\,(Y, X) \\
		\hspace{1.1em} y[Y, X]: \mathit{nodes}\,(Y, X).         \\
		\hspace{2em} x[y[Y], X]                                 \\
		\hspace{0em} ) (Z)                                      \\
	\end{array}\]

It seems natural that the following typing relation holds,
where \(\mathit{append} [Z]\) is the \(\lambda\)-abstraction atom above.
\[\begin{array}{l}
		(\Gamma, P) \vdash \\
		\quad \mathit{append}[Z]
		: (\mathit{nodes}\,(Y, X) \rightarrow \mathit{nodes}\,(Y, X) \rightarrow \mathit{nodes}\,(Y, X)) (Z)
	\end{array}\]

However, this program cannot be verified by
directly using the rules in the type system in
Section~\ref{sec:type-synsem}.

\begin{theorem}[]
	The \(\mathit{append}\) operation on difference lists fails to
	verify on the previously defined \(F_{GT}\).
\end{theorem}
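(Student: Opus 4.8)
The plan is to show that no derivation tree in $F_{GT}$ (as defined in Figure~\ref{fig:fgt-rules}) can conclude the desired typing judgment for $\mathit{append}$. First I would observe that, by the structure of the typing rules, the only rule whose conclusion has an arrow-typed $\lambda$-abstraction atom on the left is Ty-Arrow; applying it twice (peeling off the two curried arguments), the problem reduces to deriving
\[(\Gamma', P) \vdash x[y[Y], X] : \mathit{nodes}\,(Y, X),\]
where $\Gamma' = \Gamma, x[Y,X]:\mathit{nodes}\,(Y,X), y[Y,X]:\mathit{nodes}\,(Y,X)$, and the body $x[y[Y],X]$ is shorthand (Def.~\ref{def:abbreviation}/the graph-context term notation) for $\nu Y'. (x[Y', X], y[Y, Y'])$, i.e.\ two graph contexts joined at a fresh internal link.

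Next I would argue that the only rules that can produce a judgment whose subject is this two-context molecule and whose type is a type-variable atom $\mathit{nodes}\,(Y,X)$ are Ty-Cong, Ty-Alpha, and Ty-Prod (Ty-Var applies only to a single bare context; Ty-App and Ty-Arrow give the wrong syntactic shape; Ty-Case has a \textbf{case} subject). Ty-Cong and Ty-Alpha merely rewrite the judgment up to $\equiv$ and link renaming, so they cannot change the essential fact that we must eventually apply Ty-Prod with one of the two production rules $P_1: \mathit{nodes}\,(Y,X)\longrightarrow X\bowtie Y$ or $P_2: \mathit{nodes}\,(Y,X)\longrightarrow \mathrm{Cons}(\mathit{nat},\mathit{nodes}\,(Y),X)$. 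The key step is then a case analysis: in the $P_1$ case the RHS $X\bowtie Y$ contains \emph{no} type atoms, so Ty-Prod would force the subject (up to $\equiv$) to be a single fusion $X\bowtie Y$, which cannot be congruent to a molecule of two arbitrary graph contexts bound at a hidden link (it has the wrong free-functor set $\mathit{ff}$ — it contains two distinct context functors); in the $P_2$ case Ty-Prod requires the subject to be $\mathrm{Cons}(T_1, \dots)$ with its leading argument carrying type $\mathit{nat}$, but $x$ (resp.\ $y$) is annotated only with $\mathit{nodes}$, and there is no rule letting a $\mathit{nodes}$-typed context be re-typed at $\mathit{nat}$, nor does $\equiv$ introduce a $\mathrm{Cons}$ constructor atom out of two contexts. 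Hence no Ty-Prod application is possible, and the derivation cannot be completed.

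To make this rigorous I would phrase it via Theorem~\ref{th:fgt-stgamma} ($F_{GT}$ and HyperLMNtal reduction): the judgment holds iff $\mathit{nodes}\,(Y,X)\rightsquigarrow^*_P x[y[Y],X][\,\mathit{nodes}(Y,X)/x[Y,X]\,][\,\mathit{nodes}(Y,X')/y[Y',X']\,]$ for suitable fresh links, i.e.\ iff $\mathit{nodes}\,(Y,X)\rightsquigarrow^*_P \nu Y'.(\mathit{nodes}(Y',X),\mathit{nodes}(Y,Y'))$ (two $\mathit{nodes}$ type atoms in series). Then it suffices to show this reduction sequence does not exist: by induction on the length of $\rightsquigarrow^*_P$, every graph reachable from the single start atom $\mathit{nodes}(Y,X)$ under $P=\{P_1,P_2\}$ has \emph{at most one} occurrence of the functor $\mathit{nodes}/2$ among its type atoms (each rule rewrites the unique $\mathit{nodes}$ atom to something containing at most one new $\mathit{nodes}$ atom), whereas the target has two; and no sequence of the structural-congruence rules of Figure~\ref{table:hyperlmntal-equiv} changes that count. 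This invariant is the crux of the argument.

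The main obstacle I expect is the bookkeeping around the term-notation abbreviation and the fresh hidden link $Y'$: one must be careful that $x[y[Y],X]$ genuinely unfolds to a molecule of two separate graph contexts connected only through a $\nu$-bound link, so that after the graph substitutions of Theorem~\ref{th:fgt-stgamma} it becomes precisely two $\mathit{nodes}$-typed type atoms in series — and then that Ty-Cong/Ty-Alpha, being the only ``non-productive'' rules available, cannot merge them or fabricate a constructor atom. Establishing the ``at most one $\mathit{nodes}$ atom'' invariant under both $\rightsquigarrow_P$ and $\equiv$ is routine but must be stated precisely; everything else is the mechanical rule-by-rule elimination sketched above.
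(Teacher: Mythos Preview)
Your proposal is correct and converges on precisely the paper's argument: after peeling off the two \textsc{Ty-Arrow} applications, you invoke Theorem~\ref{th:fgt-stgamma} to reduce the question to whether $\mathit{nodes}\,(Y,X)\rightsquigarrow^{\ast}_{P}\mathit{nodes}\,(\mathit{nodes}\,(Y),X)$, and then rule this out by the invariant that the production rules never increase the number of $\mathit{nodes}/2$ atoms. The paper's proof is exactly this (stated more tersely); your preliminary rule-by-rule case analysis and the remark that $\equiv$ preserves the atom count are extra scaffolding the paper omits but does not contradict.
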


\begin{proof}
	We need to prove
	\[\begin{array}{l}
			((x [Y, X]: \mathit{nodes}\,(Y, X),
			y [Y, X]: \mathit{nodes}\,(Y, X)), P) \vdash \\
			\qquad x [y [Y],  X]: \mathit{nodes}\,(Y, X)
		\end{array}\]
	to verify the present example.
	\Cref{th:fgt-stgamma} states that,
	if we can successfully prove the typing relation using \(F_{GT}\),
	we should be able to prove
	\(\mathit{nodes}\,(Y, X) \rightsquigarrow^{\ast}_{P} \mathit{nodes}\,(\mathit{nodes}\,(Y), X)\).
	However,
	applying the production rules of difference lists
	cannot increase the number of \(\mathit{nodes}/2\) atoms.
	Therefore, applying the production rules to the annotated type
	variable atom
	\(\mathit{nodes}\,(Y, X)\) will never yield
	\(\mathit{nodes}\,(\mathit{nodes}\,(Y), X)\).
\end{proof}

However,
it is obvious that
appending two difference lists returns a difference list,
and this operation should be supported.
We extend the previously defined \(F_{GT}\)
to enable such verification.

\subsection{Extension on \(F_{GT}\)}

We start with the first attempt of the extension.

\begin{definition}[Extension on \(F_{GT}\) (Unrefined)]\label{def:fgt-ext-unrefined}

	For a graph template \(T\),
	it is sufficient if the typing succeeds after replacing
	each graph context in \(T\)
	by all possible values of the types attached to the graph
	context, or more formally,
	as in \figref{fig:ty-subst-unrefined}.

	\begin{figure*}[t]

		\normalsize
		\hrulefill{}
		\vspace{0.2cm}

		\begin{prooftree}
			\AxiomC{\(
			\forall \overrightarrow{G_i}. \left(
			\left(
			\bigwedge\limits^{i} \left(
			(\emptyset, P) \vdash G_i : \tau_i (\links{X_i})
			\right)
			\right)
			\Rightarrow
			(\emptyset, P) \vdash
			T
			{\overrightarrow{\left[G_i/x_i [\links{X_i}]\right]}}^{i}
			: \tau\, (\links{X})
			\right)
			\)}
			\RightLabel{Ty-Subst (unrefined)}
			\UnaryInfC{\(
				\left(
				{\overrightarrow{x_i [\links{X_i}] : \tau_i (\links{X_i})}}^{i}
				, P
				\right)
				\vdash T : \tau\, (\links{X})\)}
		\end{prooftree}

		\hspace{2.8cm}
		where \({\overrightarrow{x_i[\links{X_i}]}}^i = \mathit{ff} (T)\)

		\vspace{1em}
		\hrulefill{}
		\caption{Extension on \(F_{GT}\) (unrefined)}\label{fig:ty-subst-unrefined}
	\end{figure*}

\end{definition}

{
The (apparently intuitive) rule in \Cref{def:fgt-ext-unrefined}
has \(\Rightarrow\) on the antecedent
and the typing relation we are going to define
(the parameter of the generating function)
appears on the left-hand side of the \(\Rightarrow\).
Unfortunately, then, we cannot ensure the monotonicity of the
generating function
and the existence of a least fixed point,
which is the typing relation we want to define.

Now we consider how to fix this, which we have found is not trivial
or straightforward.
%
%
If we define a typing relation, say \(R_0\), without Ty-Subst
and define the left-hand side of the \(\Rightarrow\) of Ty-Subst with \(R_0\),
we can ensure the monotonicity of the generating function
and the typing relation becomes well-defined.

First, we prepare two sets of typing rules,
one with all the \({:}\)'s in \figref{fig:fgt-rules} rewritten as \({:}_0\)
and the other as \({:}_1\).
Then, we can define \(R_0\) only with typing rules with \({:}_0\),
which is well-defined.

Next, we define the typing relation, say \(R_1\), using
typing rule with \({:}_1\)
and the rule in \figref{fig:ty-subst-unrefined2} (Ty-Subst with \({:}_0\) and \({:}_1\)).
Since the left-hand side of \(\Rightarrow\) in \figref{fig:ty-subst-unrefined2}
uses the already defined \(R_0\),
it can be interpreted as a monotonic function
and the typing relation \(R_1\) is well-defined.

\begin{figure*}[t]

	\normalsize
	\hrulefill{}
	\vspace{0.2cm}

	\begin{prooftree}
		\AxiomC{\(
		\forall \overrightarrow{G_i}. \left(
		\left(
		\bigwedge\limits^{i} \left(
		(\emptyset, P) \vdash G_i \:{:}_0\: \tau_i (\links{X_i})
		\right)
		\right)
		\Rightarrow (\emptyset, P) \vdash
		T
		{\overrightarrow{\left[G_i/x_i [\links{X_i}]\right]}}^{i}
		\:{:}_1\: \tau\, (\links{X})
		\right)
		\)}
		\RightLabel{Ty-Subst (unrefined 2)}
		\UnaryInfC{\(
			\left(
			{\overrightarrow{x_i [\links{X_i}]
					\:{:}_{1}\: \tau_i (\links{X_i})}}^{i}
			, P
			\right)
			\vdash T \:{:}_1\:  \tau\, (\links{X})\)}
	\end{prooftree}

	\hspace{2.8cm}
	where \({\overrightarrow{x_i[\links{X_i}]}}^i = \mathit{ff} (T)\)

	\vspace{1em}
	\hrulefill{}
	\caption{Extension on \(F_{GT}\) (unrefined 2)}\label{fig:ty-subst-unrefined2}
\end{figure*}

However, we cannot ensure the soundness if we define Ty-Subst in such a way
because the antecedent of the rule ensures the safety if
\(G_i\) has a type \(\tau_i (\links{X_i})\) in \(R_0\)
but
the antecedent does not ensure the safety when
\(G_i\) has a type \(\tau_i (\links{X_i})\) only in \(R_1\).
Since \(R_1\) can handle more programs than \(R_0\),
this may violate the soundness of the system.

For example, consider the case where \(G_i\) is a graph containing a function that concatenates difference lists
and \(\tau_i (\links{X_i})\) contains an arrow type for a function
that takes two difference lists (as curried arguments) and return a difference list.
Since it is not verifiable in \(R_0\) that a function that concatenates difference lists returns a difference list,
we can make the left-hand side of \(\Rightarrow\) on Ty-Subst false.
In such a case,
the antecedent of Ty-Subst is satisfied no matter what the right-hand
side of \(\Rightarrow\) is.
Thus, we cannot ensure safety for the case where
we bound a graph containing a function that concatenates difference lists.
However, since the consequent of Ty-Subst uses \({:}_1\),
it allows the \(x_i [\links{X_i}]\) to be bound to a graph that includes
a function that concatenates difference lists.

%
%
%
%
%

Therefore,
we need a more refined framework that allows
the indices we have
attached to
\({:}\) previously
to be different for each type.
Accordingly, we introduce the notion of \emph{ranks} for types and Ty-Subst.

If the typing on the left-hand side of \(\Rightarrow\) does not use Ty-Subst,
which we will define,
it can be interpreted as a monotonic function and is well-defined.
Therefore, we introduce ranks into Ty-Subst
so that the left-hand side typing of \(\Rightarrow\) can only use Ty-Subst
with a lower rank that has already been defined.

We first introduce ranks to the type.
We denote the type with Rank \(n \: (n \geq 0)\) as \({\tau}^{n} (\links{X})\).
We extend the rules in \figref{fig:fgt-rules} so that the types have ranks.

\begin{definition}[Rules for \({F}_{GT}\) with ranks]

	Typing rules for \(F_{GT}\) with ranks are given in \figref{fig:fgt-rules-ranked}.

	Notice that we have added a new typing rule Ty-Sub, a rule for subtyping.


	\begin{figure}[t]

		\normalsize
		\hrulefill{}
		\vspace{0.2cm}

		Ty-App
		\begin{prooftree}
			\AxiomC{\((\Gamma, P) \vdash e_1
			: {({\tau_1}^{n} (\links{X})
			\rightarrow {\tau_2}^{m} (\links{Y}))}^{m}
			(\links{Z})
			\)}
			\AxiomC{\((\Gamma, P) \vdash e_2 : {\tau_1}^{n} (\links{X})\)}
			\BinaryInfC{\((\Gamma, P) \vdash (e_1\; e_2) :  {\tau_2}^{m} (\links{Y})\)}
		\end{prooftree}
		\medskip

		Ty-Arrow
		\begin{prooftree}
			\AxiomC{\(((\Gamma, x[\links{X}] :  {\tau_1}^{n} (\links{X})), P)
			\vdash e : {\tau_2}^{m} (\links{Y})\)}

			\AxiomC{\(n \leq m\)}

			\BinaryInfC{\((\Gamma, P) \vdash
			(\lambda\, x[\links{X}] : {\tau_1}^{n} (\links{X}).e)(\links{Z})
			: {({\tau_1}^{n} (\links{X})
			\rightarrow {\tau_2}^{m} (\links{Y}))}^{m}
			(\links{Z})
			\)}
		\end{prooftree}
		\medskip

		Ty-Var
		\begin{prooftree}
			\AxiomC{}
			\UnaryInfC{\((\Gamma\{x[\links{X}] : {\tau}^{n}\,(\links{X})\}, P)
			\vdash x[\links{X}] : {\tau}^{n}\,(\links{X})\)}
		\end{prooftree}
		\medskip

		Ty-Cong
		\vspace{-0.1cm}
		\begin{prooftree}
			\AxiomC{\((\Gamma, P) \vdash T : {\tau}^{n}\,(\links{X})\)}
			\AxiomC{\(T \equiv T'\)}
			\BinaryInfC{\((\Gamma, P) \vdash T' : {\tau}^{n} (\links{X})\)}
		\end{prooftree}
		\medskip

		Ty-Alpha
		\vspace{-0.1cm}
		\begin{prooftree}
			\AxiomC{\((\Gamma, P) \vdash T : {\tau}^{n}(\links{X})\)}
			\UnaryInfC{\((\Gamma, P) \vdash T\angled{Z/Y} : {\tau}^{n} (\links{X}) \angled{Z/Y}\)}
		\end{prooftree}
		\begin{quote}
			\hspace{1cm} where \(Z \notin \mathit{fn} (T)\)
		\end{quote}
		\medskip
		\vspace{0.2cm}

		Ty-Prod
		\vspace{-0.1cm}
		\begin{prooftree}
			\AxiomC{\((\Gamma, P) \vdash T_1 : {\tau_1}^{n_1} (\links{X_1})\)}
			\AxiomC{\hspace{-12pt}\(\dots\)\hspace{-12pt}}
			\AxiomC{\((\Gamma, P) \vdash T_n : {\tau_m}^{n_m} (\links{X_m})\)}
			\TrinaryInfC{\((\Gamma, P\{\alpha (\links{X}) \longrightarrow \Tauu\})
			\vdash \Tauu [T_1/\tau_1 (\links{X_1}), \dots ,T_m/\tau_m (\links{X_m})]
			: {\alpha}^{\max n_i} (\links{X}) \)}
		\end{prooftree}
		\begin{quote}
			where
			\(\tau_i (\links{X_i})\) are all the type atoms appearing in \(\Tauu\)
		\end{quote}
		\medskip
		\vspace{0.2cm}

		Ty-Case
		\begin{prooftree}
			\def\defaultHypSeparation{\hskip -2pt}
			\AxiomC{\(
			(\Gamma, P) \vdash e_1 : {\tau_1}^{n} (\links{X})
			\)}
			\AxiomC{\(\begin{array}[b]{r@{\:\:}l}
					((\Gamma, \Gamma'), P) \vdash e_2 & : {\tau_2}^{m} (\links{Y})  \\[1mm]
					(\Gamma, P) \vdash e_3            & : {\tau_2}^{m}  (\links{Y})
				\end{array}\)}
			\AxiomC{\(n \leq m\)}
			\TrinaryInfC{\((\Gamma, P) \vdash (\caseof{e_1}{T}{e_2}{e_3}) : {\tau_2}^{m} (\links{Y})\)}
		\end{prooftree}
		\medskip
		\vspace{0.2cm}

		Ty-Sub
		\vspace{-0.1cm}
		\begin{prooftree}
			\AxiomC{\((\Gamma, P) \vdash e : {\tau}^{n}\,(\links{X})\)}
			\AxiomC{\(n < m\)}
			\BinaryInfC{\((\Gamma, P) \vdash e : {\tau}^{m} (\links{X})\)}
		\end{prooftree}

		\vspace{1em}
		\hrulefill{}
		\caption{Typing rules for \(F_{GT}\) with ranks}\label{fig:fgt-rules-ranked}
	\end{figure}

\end{definition}

\begin{definition}[Extension on \(F_{GT}\) (Refined)]\label{def:fgt-ext}

	The refined version of \Cref{def:fgt-ext-unrefined} is shown in
	\figref{fig:ty-subst-refined}.

	\begin{figure*}[t]

		\normalsize
		\hrulefill{}
		\vspace{0.2cm}

		\begin{prooftree}
			\AxiomC{\(
			\forall \overrightarrow{G_i}. \left(
			\left(
			\bigwedge\limits^{i} \left(
			(\emptyset, P) \vdash G_i : {\tau_i}^{n_i} (\links{X_i})
			\right)
			\right)
			\Rightarrow (\emptyset, P) \vdash
			T
			{\overrightarrow{\left[G_i/x_i [\links{X_i}]\right]}}^{i}
			: {\tau}^{n}\,(\links{X})
			\right)
			\)}
			\RightLabel{Ty-Subst (rank \(n\))}
			\UnaryInfC{\(
			\left(
			{\overrightarrow{x_i [\links{X_i}] : {\tau_i}^{n_i} (\links{X_i})}}^{i}
			, P
			\right)
			\vdash T : {\tau}^{n}\,(\links{X})\)}
		\end{prooftree}

		\hspace{2.8cm}
		where
		\(n = \max n_i + 1\),
		\({\overrightarrow{x_i[\links{X_i}]}}^i = \mathit{ff} (T)\).

		\vspace{1em}
		\hrulefill{}
		\caption{Extension on \(F_{GT}\) (refined)}\label{fig:ty-subst-refined}
	\end{figure*}

\end{definition}

\begin{prop}
	The typing rules are well-defined
	even if we add \Cref{def:fgt-ext}.
\end{prop}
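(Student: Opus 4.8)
The plan is to exhibit the typing relation \(\vdash\) as a countable union \(\bigcup_{n\ge 0} R_n\) of relations, each obtained as a least fixed point, where the index \(n\) stratifies derivations by the largest rank at which Ty-Subst is instantiated. What makes this work is the side condition \(n=\max n_i + 1\) of Ty-Subst in \figref{fig:ty-subst-refined}: the occurrences of the relation on the \emph{left} of the \(\Rightarrow\) in its premise, namely \((\emptyset,P)\vdash G_i:{\tau_i}^{n_i}(\links{X_i})\), all carry a rank \(n_i\le n-1\) strictly smaller than the rank \(n\) of the conclusion, whereas the occurrence on the \emph{right}, \((\emptyset,P)\vdash T\,{\overrightarrow{[G_i/x_i[\links{X_i}]]}}^{i}:{\tau}^{n}(\links{X})\), carries exactly rank \(n\). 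Hence, once the relation has been pinned down up to rank \(n-1\), the instance ``Ty-Subst (rank \(n\))'' reads its negative premises off a fixed set and contributes monotonically in the rank-\(n\) part of the relation.

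First I would record two structural facts. (1) Every rule in \figref{fig:fgt-rules-ranked} other than Ty-Subst contains the typing relation only positively (in particular Ty-Sub is a Horn-style rule), so the one-step consequence operator built from those rules alone is monotone in the standard sense. (2) In every rule the rank of the conclusion is \(\ge\) the ranks of all premises: for Ty-App and Ty-Case this uses the explicit side conditions \(n\le m\) together with the fact that a well-formed arrow atom \({({\tau_1}^{n}(\links{X})\rightarrow{\tau_2}^{m}(\links{Y}))}^{m}\) satisfies \(n\le m\) by Ty-Arrow; for Ty-Prod the conclusion has rank \(\max n_i\); for Ty-Sub it is immediate; and for Ty-Cong and Ty-Alpha the rank is preserved. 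Fact (2) implies that every subderivation of a rank-\(r\) judgment has rank \(\le r\), so in particular any Ty-Subst instance occurring in a derivation of \(G_i:{\tau_i}^{n_i}(\links{X_i})\) has rank \(\le n_i\le n-1\), and therefore that judgment is already accounted for at stratum \(n-1\).

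Using these facts I would define \(R_n\) by induction on \(n\). Let \(R_0\) be the least fixed point of the operator generated by all rules except Ty-Subst (note Ty-Subst only ever concludes a judgment of rank \(\ge 1\), since the \(n_i\) are \(\ge 0\)); this exists by Knaster--Tarski. Given \(R_0\subseteq\cdots\subseteq R_{n-1}\), let \(\Phi_n(X)\) apply every non-Ty-Subst rule to \(X\) and, for each \(k\le n\), add the conclusion of every ``Ty-Subst (rank \(k\))'' instance whose negative premises hold in the already-defined \(R_{k-1}\) and whose positive premise holds in \(X\); since the negative premises are evaluated against fixed sets, \(\Phi_n\) is monotone, and we set \(R_n=\mathrm{lfp}(\Phi_n)\). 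One then checks \(R_{n-1}\subseteq R_n\) and, using Fact (2), that the rank-\((\le n-1)\) fragment of \(R_n\) coincides with \(R_{n-1}\); from this it follows that \(\vdash:=\bigcup_n R_n\) is a fixed point of the full one-step consequence operator, hence a well-defined inductively generated relation. I expect the main obstacle to be exactly this rank bookkeeping — verifying Fact (2) uniformly (the arrow-type case is the only mildly delicate one) and confirming that the strata are conservative over one another — rather than the fixed-point machinery, which is routine once the stratification is justified.
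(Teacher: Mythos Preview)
Your proposal is correct and follows essentially the same line as the paper's own proof: both rely on the side condition \(n=\max n_i+1\) to ensure the negative premises of Ty-Subst live at strictly lower rank, observe that the remaining rules never increase rank upward, and then stratify the definition of \(\vdash\) by induction on rank with the base case rank~\(0\) handled without Ty-Subst. Your write-up is considerably more explicit---you spell out the \(R_n\) as least fixed points via Knaster--Tarski and verify conservativity between strata---whereas the paper states the rank-monotonicity of the rules and the inductive construction informally; but the underlying argument is the same.
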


\begin{proof}


	In \Cref{def:fgt-ext},
	the ranks of the type on the left-hand side of \(\Rightarrow\), \(n_i\),
	are always smaller than the rank of the type \({\tau}^{n}\,(\links{X})\) on the consequent of the rule, \(n\).
	The typing rules in \figref{fig:fgt-rules-ranked} are defined
	so that the
	ranks do not increase when we read the rules upwards.
	Thus, the typing relation used for the left-hand side of
	\(\Rightarrow\) in the antecedent of Ty-Subst (of rank $n$)
	can be established using
	Ty-Subst with smaller ranks \(m \: (m < n)\) only
	(which may actually be used when, for example, the $G_i$'s
	contain abstraction atoms).
	Suppose all
	typing relations involving smaller ranks are well-defined.
	Then, since the typing relation we are about to define does not appear in the left-hand side of \(\Rightarrow\),
	we can ensure
	the well-definedness of the typing relation involving ranks up
	to $n$.
	Because the typing relation containing types with rank 0 only
	does not involve Ty-Subst and is therefore well-defined,
	by mathematical induction on rank, we can define a typing
	relation for all ranks.
\end{proof}

The existence of the typing rule defined in \Cref{def:fgt-ext} does not violate the soundness
since using the rule ensures that a program can be typed without such a rule
for all the possible graphs bound to graph contexts.

Let us consider the typing of a function that concatenates difference lists.
From now on, we omit the ``\((\emptyset, P) \vdash\)'' for brevity.
%
Suppose we have already proven the following (we will prove this in \Cref{sec:dlist-concat-proof}).
\begin{equation}\label{eq:1}
	\begin{array}{lll}
		\forall G_1, G_2. \Bigl(
		G_1 : {\mathit{nodes}^{n}} (Y, X) \land
		G_2 : {\mathit{nodes}^{m}} (Y, X) \\
		\Rightarrow
		\nu Z. (x [Z, X], y[Y, Z])
		\bigl[ G_1 \big/ x [Y, X] \bigr]
		\bigl[ G_2 \big/ x [Y, X] \bigr]  \\
		\quad \:\: : {\mathit{nodes}^{\max (n, m)}} (Y, X)
		\Bigr)
	\end{array}
\end{equation}

Then, we can type the function using
Ty-Sub,
Ty-Subst (rank \(\max (n, m) + 1\)),
and Ty-Arrow
as shown in \figref{fig:dlist-typing2}.

\begin{figure*}[t]

	\normalsize
	\hrulefill{}
	\vspace{0.2cm}

	\begin{prooftree}
		\AxiomC{\cref{eq:1}}
		\RightLabel{Ty-Sub}
		\UnaryInfC{\(
			\begin{array}{lll}
				\forall G_1, G_2. \Bigl(
				G_1 : {\mathit{nodes}^{n}} (Y, X) \land
				G_2 : {\mathit{nodes}^{m}} (Y, X) \\
				\Rightarrow
				\nu Z. (x [Z, X], y[Y, Z])
				\bigl[ G_1 \big/ x [Y, X] \bigr]
				\bigl[ G_2 \big/ x [Y, X] \bigr]
				: {\mathit{nodes}^{\max (n, m) + 1}} (Y, X)
				\Bigr)
			\end{array}
			\)}

		\RightLabel{Ty-Subst (rank \(\max (n, m) + 1\))}
		\UnaryInfC{\(\begin{array}{l}
				\left(
				\left(
				x [Y, X] : {\mathit{nodes}^{n}} (Y, X),
				y [Y, X] : {\mathit{nodes}^{m}} (Y, X)
				\right), P
				\right) \\
				\vdash \quad
				\nu Z. (x [Z, X], y[Y, Z])
				: {\mathit{nodes}^{\max (n, m) + 1}} (Y, X)
			\end{array}\)}
		\RightLabel{Ty-Arrow}
		\UnaryInfC{\((x [Y, X] : {\mathit{nodes}^{n}} (Y, X), P) \vdash
		\begin{array}{lcl}
			(\lambda\, y[Y, X] : {\mathit{nodes}^{m}} (Y, X). &   & ({\mathit{nodes}^{m}} (Y, X) \rightarrow         \\
			\qquad \nu Z. (x [Z, X], y[Y, Z])                 & : & \qquad {\mathit{nodes}^{\max (n, m) + 1}} (Y, X) \\
			)(Z)                                              &   & )\mbox{}^{\max (n, m) + 1} (Z)                   \\
		\end{array}
		\)}
		\RightLabel{Ty-Arrow}
		\UnaryInfC{\((\emptyset, P) \vdash
			\begin{array}{lcl}
				(\lambda\, x[Y, X] : {\mathit{nodes}^{n}} (Y, X).       &   & ({\mathit{nodes}^{n}} (Y, X) \rightarrow         \\
				\quad (\lambda\, y[Y, X] : {\mathit{nodes}^{m}} (Y, X). &   & \quad ({\mathit{nodes}^{m}} (Y, X) \rightarrow   \\
				\qquad \nu Z. (x [Z, X], y[Y, Z])                       & : & \qquad {\mathit{nodes}^{\max (n, m) + 1}} (Y, X) \\
				\quad )(Z)                                              &   & \quad )\mbox{}^{\max (n, m) + 1} (Z)             \\
				)(Z)                                                    &   & )\mbox{}^{\max (n, m) + 1} (Z)
			\end{array}
			\)}
	\end{prooftree}

	\vspace{1em}
	\hrulefill{}
	\caption{Type checking a function that concatenate difference lists}\label{fig:dlist-typing2}
\end{figure*}

Since the type system is monomorphic,
we cannot type the following program.

\[\begin{array}{l}
		((
		\begin{array}[t]{l@{\:\:}c@{\:\:}l}
			f       & : & ({\mathit{nodes}^{n}} (Y, X) \rightarrow         \\
			        &   & \quad ({\mathit{nodes}^{m}} (Y, X) \rightarrow   \\
			        &   & \qquad {\mathit{nodes}^{\max (n, m) + 1}} (Y, X) \\
			        &   & \quad )\mbox{}^{\max (n, m) + 1} (Z)             \\
			        &   & )\mbox{}^{\max (n, m) + 1} (Z)                   \\[1mm]
			x[Y, X] & : & {\mathit{nodes}^{l}} (Y, X)
		\end{array}, \\
		\:\:), P)                                                             \\
		\vdash \quad
		f \:\:  x[Y, X] \:\: (f \:\: x[Y, X] \:\: x[Y, X])
		: {\tau^{k}} (Y, X)
	\end{array}\]
We need to satisfy
\(n = \max (n, m) + 1\),
which is unsatisfiable.

Such programs can be typed introducing polymorphism for ranks.
However, this paper does not go into this and leaves it as future work.


}

\subsection{Proving the antecedent of the rule}\label{sec:dlist-concat-proof}

In order to apply \Cref{def:fgt-ext} to the present example,
we need to prove that, for any graphs to which
\(x [Y, X]\) and \(y [Y, X]\) can be mapped,
the substituted result
must have the type \({\mathit{nodes}}^{n} (Y, X)\), that is,
\[\begin{array}{l}
		\forall G_1, G_2. ((
		G_1 : {\mathit{nodes}}^{n} (Y, X) \land
		G_2 : {\mathit{nodes}}^{m} (Y, X)) \\
		\Rightarrow
		\nu Z. (x [Z, X], y [Y, Z])
		[G_1/x [Y, X]] [G_2/y [Y, X]]      \\

		\hspace{4.6em}= \nu Z.(G_1\angled{Z/Y}, G_2\angled{Z/X})
		: {\mathit{nodes}}^{\max (n, m)} (Y, X)).
	\end{array}\]

The above can be rewritten using Ty-Alpha as follows.

\begin{equation}\label{eq:3}
	\begin{array}{l}
		\forall G_1, G_2. (
		G_1 : {\mathit{nodes}}^{n} (Z, X) \land
		G_2 : {\mathit{nodes}}^{m} (Y, Z) \\
		\qquad \Rightarrow
		\nu Z.(G_1, G_2)
		: {\mathit{nodes}}^{\max (n, m)} (Y, X))
	\end{array}
\end{equation}

We prove this by induction on the derivation of the antecedents.
To do this, we need a lemma and a theorem.

\begin{lemma}\label{lem:prod-needed}
	For \(G: {\alpha}^{n} (\links{X})\), the rule
	Ty-Prod with \(\alpha/\norm{\links{X}}\) on its LHS is
	used in the derivation.
	Furthermore, only Ty-Cong and Ty-Alpha are used after the last
	application of Ty-Prod.
\end{lemma}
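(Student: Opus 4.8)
The plan is to prove both assertions by induction on the derivation of \(G : {\alpha}^{n} (\links{X})\), exploiting the restricted shape of a value \(G\) together with the restricted set of rules that can conclude with a type-variable atom \({\alpha}^{n}(\links{X})\) on the right. First I would observe which typing rules of \figref{fig:fgt-rules-ranked} (and the Ty-Subst rule of \Cref{def:fgt-ext}) can have a conclusion of the form \((\Gamma, P) \vdash G : {\alpha}^{n}(\links{X})\) when \(G\) is a value (so, in particular, \(G\) contains no case- or application-redexes, and—crucially here—\(G\) contains no free graph contexts because \(G\) is ground). This rules out Ty-App, Ty-Case, Ty-Arrow (whose conclusion is an arrow type, not a type variable), and Ty-Var (which requires a graph context, not a value). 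The rule Ty-Subst is likewise inapplicable at the top because its side condition \({\overrightarrow{x_i[\links{X_i}]}}^i = \mathit{ff}(T)\) together with a ground \(G\) forces the index set to be empty, degenerating it into the plain typing relation. That leaves exactly Ty-Prod, Ty-Sub, Ty-Cong, and Ty-Alpha as the rules that can conclude \(G : {\alpha}^{n}(\links{X})\).

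Next I would run the induction. The base-ish case is Ty-Prod: here Ty-Prod is applied with some production rule \(\alpha'(\links{X'}) \longrightarrow \Tauu\) on its LHS and the conclusion type is \({\alpha'}^{\max n_i}(\links{X'})\); matching this against \({\alpha}^{n}(\links{X})\) gives that the LHS functor is exactly \(\alpha/\norm{\links{X}}\), so the claim holds, and nothing follows Ty-Prod on this branch. For the inductive cases Ty-Cong and Ty-Alpha: the premise is again a typing \(G' : {\alpha}^{n'}(\links{X'})\) (with the same type variable \(\alpha\), since neither rule changes the atom name, only the graph up to \(\equiv\) or a link renaming, and the rank is unchanged), so the induction hypothesis applies and yields a use of Ty-Prod with \(\alpha/\norm{\links{X}}\) below, with only Ty-Cong/Ty-Alpha between that Ty-Prod and the current step; appending one more Ty-Cong or Ty-Alpha preserves this. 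The remaining case is Ty-Sub: its premise is \(G : {\alpha}^{n'}(\links{X})\) with \(n' < n\), same type variable, so again the IH applies directly and Ty-Sub is not of the forbidden kind (it is neither Ty-Cong nor Ty-Alpha, so I must be careful)—which means I should actually phrase the second assertion as ``only Ty-Cong, Ty-Alpha, \emph{and Ty-Sub} are used after the last Ty-Prod,'' or else argue that Ty-Sub applications can be permuted below Ty-Prod / absorbed, so that a normalized derivation uses only Ty-Cong and Ty-Alpha after the final Ty-Prod. I would check which of these the authors intend; the cleanest route is a small permutation lemma showing Ty-Sub commutes downward past Ty-Cong and Ty-Alpha (both leave the rank untouched), so that all Ty-Sub steps can be pushed to the very bottom, after which the portion strictly above the bottom Ty-Sub block is exactly a Ty-Prod followed only by Ty-Cong/Ty-Alpha.

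The main obstacle I expect is precisely this interaction with Ty-Sub (and, to a lesser extent, with Ty-Subst): the naive inversion ``the last rule is Ty-Prod'' is false once ranks and subtyping are present, so the lemma as literally stated needs either a restatement that admits Ty-Sub in the tail or a rank-permutation argument that normalizes it away. A secondary subtlety is confirming that Ty-Subst genuinely cannot participate when typing a value: this relies on the fact that values are ground, hence \(\mathit{ff}(G) = \emptyset\), hence the universally-quantified antecedent of Ty-Subst over \(\overrightarrow{G_i}\) has an empty index, which must be spelled out rather than waved at. Once those two points are pinned down, the induction itself is routine, driven entirely by the case analysis on the last rule and the observation that Ty-Cong, Ty-Alpha, and Ty-Sub all preserve both the type variable \(\alpha\) and (for the first two) the rank.
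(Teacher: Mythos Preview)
Your approach is essentially the same as the paper's: a bottom-up case analysis on the last typing rule applied, observing that for a value \(G\) only Ty-Cong, Ty-Alpha, Ty-Sub, Ty-Subst, and Ty-Prod are available (Ty-Var, Ty-App, Ty-Case are excluded by \(G\) being a value; Ty-Arrow would force an arrow type rather than a type-variable atom), and that among these only Ty-Prod can introduce the type-variable functor \(\alpha/\norm{\links{X}}\) since the others merely inherit the annotated type name from their premises.

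Your treatment is in fact more careful than the paper's on the point you flag as an obstacle. The paper's proof explicitly lists Ty-Sub and Ty-Subst among the rules that may occur in the tail (noting they ``only inherit the annotated type from the antecedent (although they may change the rank)''), yet the lemma's second clause literally says ``only Ty-Cong and Ty-Alpha.'' The paper does not resolve this discrepancy; it simply establishes the first clause (Ty-Prod must occur) and leaves the second clause implicit. Your proposed fix---either admitting Ty-Sub in the tail or normalizing derivations by commuting Ty-Sub downward past Ty-Cong/Ty-Alpha---is the right way to make the statement precise, and the permutation is indeed trivial since Ty-Cong and Ty-Alpha do not touch the rank. Similarly, your argument that Ty-Subst degenerates on a ground \(G\) (because \(\mathit{ff}(G)=\emptyset\)) is a detail the paper does not spell out. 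So: same route, but you have identified and proposed a repair for a genuine looseness in the paper's own argument.
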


\begin{proof}

	Suppose we build a proof tree of \(G : {\alpha}^{n} (\links{X})\)
	bottom-up.
	Since \(G\) is a value,
	we can only use Ty-Cong, Ty-Alpha,
	{Ty-Subst, Ty-Sub}
	and Ty-Prod
	until a \(\lambda\)-abstraction atom appears.
	Ty-Alpha,
	{Ty-Subst, Ty-Sub,}
	and Ty-Cong only inherit the annotated type
	from the antecedent
		{
			(although they may changes the rank)
		} 
	so they alone cannot make the annotated type a type variable atom.
	If Ty-Prod does not appear
	but a \(\lambda\)-abstraction atom appears and Ty-Arrow is used,
	the annotated type becomes an arrow and not a type variable.
	Therefore, there must exist a Ty-Prod whose annotated type
	has the functor \(\alpha/\norm{\links{X}}\).
\end{proof}

\begin{theorem}\label{th:graph-decomp}
	For \(G: {\alpha}^{n} (\links{Y})\),
	if the production rule used by last Ty-Prod was
	\(\alpha (\links{X}) \longrightarrow \Tauu\),
	there exists
	\({\overrightarrow{G_j}}^{j}\)
	such that
	\(G \equiv\Tauu' [{\overrightarrow{G_j / \tau_j\, (\links{X_j})}}^{j}]\)
	where
	\begin{itemize}
		\item
		      \(\Tauu' = \Tauu {\overrightarrow{\angled{Y_i / X_i}}}^{i}\),

		\item
		      \(\tau_j\, (\links{X_j})\) are
		      all the type atoms
		      appearing in \(\Tauu'\),

		\item
		      \({\overrightarrow{G_j : {\tau_j}^{n_j} (\links{X_j})}}^{j}\),
		      and

		\item
		      \(\max n_j = n\).
	\end{itemize}
\end{theorem}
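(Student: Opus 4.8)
The plan is to read the decomposition straight off the shape of the typing derivation of $G : {\alpha}^{n}(\links{Y})$, using \Cref{lem:prod-needed} to pin that shape down. First I would invoke \Cref{lem:prod-needed}: the derivation contains a last use of Ty-Prod, and between its conclusion and the root only Ty-Cong and Ty-Alpha occur; since those two rules leave the type variable untouched (Ty-Cong only rewrites the subject by $\equiv$, Ty-Alpha only renames free links, simultaneously in subject and type), the last Ty-Prod must carry $\alpha/\norm{\links{X}}$ on its left-hand side, so by hypothesis it uses the production rule $\alpha(\links{X}) \longrightarrow \Tauu$. Writing $H_0$ for the subject at that Ty-Prod conclusion, we have $H_0 = \Tauu[\overrightarrow{T_j / \tau_j (\links{X_j})}^{j}]$ with type ${\alpha}^{\max n_j}(\links{X})$, where the $\tau_j (\links{X_j})$ are exactly the type atoms of $\Tauu$ and the premises give $T_j : {\tau_j}^{n_j}(\links{X_j})$. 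Each $T_j$ is in fact a value: $\Tauu$ contains no graph contexts, and neither $\equiv$ nor free-link renaming can create or delete them, so the value $G$ forces the plugged-in $T_j$ to be graph-context-free as well.

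Next I would descend the Ty-Cong/Ty-Alpha chain from $H_0$ down to $G$. Both kinds of step preserve the rank, so the rank $n$ of $G$ already equals the $\max n_j$ recorded at the last Ty-Prod. For the subject and the links I would carry the invariant ``the current subject is $\equiv$ to $H_0\rho$ and the current type is ${\alpha}^{n}(\links{X})\rho$'', where $\rho$ is the composite link renaming accumulated so far; propagating this across a Ty-Alpha step uses the routine fact that link substitution is compatible with $\equiv$ (mirroring the HyperLMNtal metatheory, cf.\ \Cite{sano-ba}), while a Ty-Cong step is immediate. At the root this yields $G \equiv H_0\rho$ and $\links{X}\rho = \links{Y}$; since link substitution acts positionwise on an atom, $\rho$ sends each $X_i$ to $Y_i$, i.e.\ $\rho$ acts as $\overrightarrow{\angled{Y_i/X_i}}^{i}$ (the freshness side-conditions on the individual Ty-Alpha steps are harmless, since only this net effect on $\links{X}$ is used).

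It then remains to commute $\rho$ with the meta-level substitution into $\Tauu$: renaming the free links $\links{X}$ of $\Tauu$ distributes over plugging templates into the type-atom positions of $\Tauu$ (up to $\alpha$-renaming of the links bound inside $\Tauu$, which is absorbed by $\equiv$), so $H_0\rho \equiv \Tauu'[\overrightarrow{G_j / \tau_j (\links{X_j'})}^{j}]$ with $\Tauu' = \Tauu\,\overrightarrow{\angled{Y_i/X_i}}^{i}$, $\links{X_j'} = \links{X_j}\rho$, $G_j = T_j\rho$, and the $\tau_j (\links{X_j'})$ ranging over exactly the type atoms of $\Tauu'$. Applying Ty-Alpha to each premise $T_j : {\tau_j}^{n_j}(\links{X_j})$ gives $G_j : {\tau_j}^{n_j}(\links{X_j'})$, and chaining $G \equiv H_0\rho \equiv \Tauu'[\cdots]$ establishes the decomposition, with $\max n_j = n$ already in hand.

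I expect the real work to be bookkeeping rather than conceptual. The two points that must be made precise are (i) the compatibility of link substitution with $\equiv$ — needed to carry the invariant past each Ty-Alpha step — and (ii) the commutation of a link renaming with the substitution $\Tauu[\cdots]$; both are standard but have to be stated carefully, particularly regarding $\alpha$-renaming of the links bound inside $\Tauu$ and the freshness conditions attached to Ty-Alpha. Once these auxiliary facts are in place, the theorem is essentially a direct read-off from \Cref{lem:prod-needed}.
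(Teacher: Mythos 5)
Your proposal is correct and follows essentially the same route as the paper: both use \Cref{lem:prod-needed} to isolate the last Ty-Prod, read the decomposition off that rule instance, and then transport it through the trailing Ty-Cong/Ty-Alpha steps by pushing link substitutions into the decomposition and re-typing the components $G_j$ with Ty-Alpha. The only difference is presentational — the paper phrases the descent as an induction on the derivation segment after the last Ty-Prod, handling one Ty-Cong or Ty-Alpha step at a time, whereas you accumulate the composite renaming $\rho$ and commute it with the substitution into $\Tauu$ once at the end.
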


\begin{proof}
	By induction on the derivation of \(G: {\alpha}^{n} (\links{Y})\)
	after the last application of Ty-Prod using \Cref{lem:prod-needed}.
\end{proof}

%
%

Consider the case if the rule Ty-Prod used last
{
in the derivation of \(G_1: {\mathit{nodes}}^{n} (Z, X)\)
}
was the one with the following production rule.
\[\begin{array}{l}
		{\mathit{nodes}} (X_2, X_1) \\
		\longrightarrow
		\nu X_3 X_4.(\mathrm{Cons} (X_3, X_4, X_1), \mathit{nat}\,(X_3), \mathit{nodes}\,(X_2, X_4)),
	\end{array}\]
Then,
by \Cref{th:graph-decomp},
we can decompose the graph
\(G_1 \equiv
\nu X_3 X_4.(\mathrm{Cons} (X_3, X_4, X), G_3, G_4)\)
into \(G_3\) and \(G_4\),
where
\(G_3: {\mathit{nat}}^{0} (X_3)\)
and
\(G_4: {\mathit{nodes}}^{n} (Z, X_4)\).
And we can proceed verification by checking if the target graph
has the desirable type for all possible values of
\(G_3\) and \(G_4\).

We prove \cref{eq:3} by induction on the derivation of
\(G_1: {\mathit{nodes}}^{n} (Z, X)\).
We split the cases based on the rule Ty-Prod used last in the derivation
and decompose the graph using \Cref{th:graph-decomp}.


For brevity,
we denote the graph
\(G\) of the type \({\alpha}^{n} (\links{X})\)
as
\({\underline{\alpha}}^{n} (\links{X})\)
and omit \(\forall G\).
Then \cref{eq:3} can be rewritten as
\[\begin{array}{l}
		\nu Z. (\ulrm{nodes}{n}_1 (Z, X), \ulrm{nodes}{m}_2 (Y, Z))
		: {\mathit{nodes}}^{\max (n, m)} (Y, X).
	\end{array}\]

The inference rule
(or rule \emph{scheme}, precisely speaking)
that splits the cases
by the last application of Ty-Prod to derive
\({\underline{\beta}}^{n}_j\)
in \(G\)
is expressed in the following form.
Here,
\(G_i\)
is the graph such that the last production rule used in the
derivation of
\({\underline{\beta}}^{n}_j\)
is
\(P_i\).

\begin{prooftree}
	\AxiomC{\(G_1 : {\alpha}^{l} (\links{X})\)}
	\AxiomC{\hspace{-12pt}\(\dots\)\hspace{-12pt}}
	\AxiomC{\(G_m : {\alpha}^{l}  (\links{X})\)}
	\RightLabel{Case \({\underline{\beta}}^{n}_j\)}
	\TrinaryInfC{\(G: {\alpha}^{l} (\links{X})\)}
\end{prooftree}


The concatenation of difference lists can be
verified as shown in \figref{fig:dlist-concat-proof},
where the arrow \(\hookleftarrow\) refers to using the induction hypothesis.

\begin{figure*}[t]
	\small
	\footnotesize
	\begin{prooftree}
		\AxiomC{\(\ulrm{nodes}{m}_2 (Y, X) : {\mathit{nodes}}^{m} (Y, X)\)}
		\RightLabel{Ty-Cong}
		\UnaryInfC{\(
		\nu Z. (X \bowtie Z, \ulrm{nodes}{m}_2 (Y, Z)) : {\mathit{nodes}}^{m} (Y, X)
		\)}
		\RightLabel{Ty-Sub}
		\UnaryInfC{\(
		\nu Z. (X \bowtie Z, \ulrm{nodes}{m}_2 (Y, Z)) : {\mathit{nodes}}^{\max (n, m)} (Y, X)
		\)}

		\AxiomC{\(\ulrm{nat}{0}_3 (W_1) : {\mathit{nat}}^{0}(W_1)\)}

		\AxiomC{\(\nu Z.(\ulrm{nodes}{n}_4 (Z, X), \ulrm{nodes}{m}_2 (Y, Z)):
		{\mathit{nodes}}^{\max (n, m)} (Y, X)\)}
		\RightLabel{Ty-Alpha}
		\UnaryInfC{\(\nu Z.(\ulrm{nodes}{n}_4 (Z, W), \ulrm{nodes}{m}_2 (Y, Z)):
		{\mathit{nodes}}^{\max (n, m)} (Y, W)\)}
		\RightLabel{Ty-Prod \(P_2\)}
		\BinaryInfC{\(\nu W. (\mathrm{Cons} (\ulrm{nat}{0}_3, W, X), \nu Z.(\ulrm{nodes}{n}_4 (Z, W), \ulrm{nodes}{m}_2 (Y, Z))):
		{\mathit{nodes}}^{\max (n, m)} (Y, X)\)}
		\RightLabel{Ty-Cong}
		\UnaryInfC{\(\nu Z. (\mathrm{Cons} (\ulrm{nat}{0}_3, \ulrm{nodes}{n}_4 (Z), X), \ulrm{nodes}{m}_2 (Y, Z)):
		{\mathit{nodes}}^{\max (n, m)} (Y, X)\)}

		\RightLabel{Case \(\ulrm{nodes}{n}_1\)}
		\BinaryInfC{\(\nu Z. (\ulrm{nodes}{n}_1 (Z, X), \ulrm{nodes}{m}_2 (Y, Z)):
		{\mathit{nodes}}^{\max (n, m)} (Y, X)\)}
	\end{prooftree}
	\begin{tikzpicture}[overlay,remember picture]
		\begin{scope}[shift={(0.3, 0)}]
			\coordinate[] (D) at (15.9, 2.25);
			\coordinate[] (C) at (17.3, 2.25);
			\coordinate[] (B) at (17.3, 0.3);
			\coordinate[] (A) at (11,   0.3);
			\draw[<-,rounded corners=4pt] (A)--(B)--(C)--(D);
		\end{scope}
	\end{tikzpicture}

	\caption{Verifying concatenation of difference lists}\label{fig:dlist-concat-proof}
\end{figure*}

\section{Automatic verification on the extended type system}\label{sec:autoverify}

\renewcommand{\ulrm}[2]{{\underline{\mathit{#1}}}}

In Section~\ref{sec:fgt-ext},
we typed the program by manually applying structural induction to the target program.
In this section, we describe a method to do this automatically.
	{
		From now on, we handle the cases where ranks are all zero and omit them.
		Extending the algorithm to handle general rank is future work.

	}

We construct a proof tree like what we have shown in
\figref{fig:dlist-concat-proof}
bottom-up.
Given \(G: \alpha (\links{X})\),
we can use the following strategies to verify those programs.
\begin{description}
	\item[Ty-Prod:]
		If we get a constructor atom \(C/n\) from \(G\),
		we can check whether an annotated type name atom \(\alpha (\links{X})\)
		can be derived from the target graph
		using Prod with a production rule with
		\(\alpha (\links{X})\) on the LHS
		and \(C/n\) on the RHS\@.
		However, in order to use a production rule,
		the subgraphs in \(G\) must have the types necessary for the derivation.
		For this reason, the type checker is performed inductively on the subgraphs.

	\item[Case \(\underline{\beta}_i\,\):]
		If we get a
		type annotated graph \(\underline{\beta}_i\)
		from \(G\),
		we decompose it using \Cref{th:graph-decomp}.
		Then check if \(G\) with its subgraph
		\(\underline{\beta}_i\) thus decomposed
		has type
		\(\alpha (\links{X})\).

	\item[\(\hookleftarrow\,\):]
		Induction hypotheses are used when applicable.

\end{description}


However, it is not that easy to do this automatically.
Especially for more complex examples.
\begin{enumerate}
	\item
	      We cannot easily separate a graph into subgraphs when
	      using a production rule.
	      It is difficult to automatically separate and guess the type of a subgraph,
	      prove it as a subproblem, and proceed with the proof using it
	      without any prior preparation.

	\item
	      The possibility that links may be \emph{fused} later
	      makes it difficult to get the correspondence of link names in the target graph and the applying production rule.

	      Remember the production rules for leaf-linked trees in \Cref{ex:prod-lltree}.
	      Here, we want to type check the following graph.
	      \[\begin{array}{lcl}
			      \nu Y. \mathrm{Node} (L, \mathrm{lltree} (Y, R), X), \mathrm{Leaf} (\mathrm{nat}, L, Y)
			      : \mathrm{lltree} (L, R, X)
		      \end{array}\]
	      In this example, we try to apply the second rule
	      \[\begin{array}{l@{~~}c@{~~}l}
			      \mathrm{lltree} (L, R, X) & \longrightarrow & \nu Y. \mathrm{Node} (\mathrm{lltree} (L, Y), \mathrm{lltree} (Y, R), X).
		      \end{array}\]
	      In this rule, the first link of the atom
	      \(\mathrm{Node}/3\) is a(n anonymous) local link, say \(Y_1\),
	      but the corresponding link in the target graph is the free link \(L\).
	      Therefore, it is necessary to proceed with the information that \(Y_1\) will be fused to \(L\) later,
	      and to check that the fusion occurs before the local link \(Y_1\) leaves the scope.
	      Implementing this becomes complex with more similar
	      cases and is not that easy.
	      In addition, it is not trivial to add the structural
	      induction hypothesis in this process
	      and apply it.

	\item
	      A strategy is also needed for the decomposition of
	      annotated contexts.
	      In \figref{fig:dlist-concat-proof},
	      we decomposed \(\ulrm{nodes}{n}_1 (Z, X)\).
	      If we decomposed \(\ulrm{nodes}{n}_2 (Y, Z)\),
	      we would not get the form to which induction hypothesis
	      can be applied
	      and verification would fail.
\end{enumerate}

Therefore, we restrict the production rules
to facilitate disassembly into subgraphs by
introducing the notion of a \emph{root link}.
Also, fusions are absorbed first to prevent link fusion from occurring later.
And we decompose annotated contexts from the one holding a free root
link (e.g., \(\ulrm{nodes}{n}_1\) holding \(X\) in the proof goal of
\figref{fig:dlist-concat-proof}).

\subsection{Constraints on production rules}\label{sub:constraints}

The type system \(F_{GT}\) defined so far has
imposed no restriction on production rules,
even disconnected graphs (multisets) could be handled.
However,
here, we design the type system to efficiently support data structures of practical importance.

In order to handle graphs inductively with production rules easier,
we introduce the notion of root links.

\begin{definition}[Root link]

	We call the last link of each atom as its \emph{root link}.

\end{definition}

We give a restriction on production rules so that
we can find a spanning tree of a graph
by traversing the root links.
Since a spanning tree can be found for any connected graph,
we can arrange the ordering of links of individual atoms in such a way
that the root links form the edges of a spanning tree.
Thus the restriction on production rules
will not essentially sacrifice the expressive power of the data
structure for practical programs.
We call a link \(X\) the \emph{root link of a graph} \(G\)
if every atom in the graph can be reached through their root links
from \(X\).

\begin{definition}[Constraints on production]\label{subsub:production-rules}

	A production rule should have the form
	\(\alpha (\links{X}, R) \longrightarrow \tau\),
	where the \(\tau\) should be one of the following.
	\begin{enumerate}
		\item
		      one or more fusions.

		\item
		      has one constructor atom \(C (\links{Y}, R)\),
		      zero or more type variable atoms \(\alpha_i (\links{Y_i})\),
		      zero or more fusions, and
		      zero or more arrow atoms
		      and satisfies all the following conditions.
		      \begin{enumerate}
			      \item
			            The root link \(R\) of \(C (\links{Y}, R)\) occurs free
			            in \(\tau\).

			      \item
			            The root link \(R_i\) of
			            a type variable atom
			            \(\alpha_i (\dots, R_i)\)
			            should satisfy
			            \(R_i \in \{\links{Y}\}\)
			            and
			            all the \(R_i\)'s are mutually distinct.
		      \end{enumerate}
	\end{enumerate}
\end{definition}

All the examples we have introduced in \Cref{sub:graph-types} satisfy these constraints.
Therefore, we claim that most of the practical examples are covered
even with the restrictions.

\subsection{Fusion elimination}\label{subsec:normal-form-trans}

Since fusion (\(\bowtie\)) is difficult to handle,
we attempt to eliminate fusions (\(\bowtie\))
except when they are generated directly from the annotated type variable atom
by
merging of production rules.

\begin{definition}[Fusion elimination]

	Let \(P_{\bowtie}\) denote the set of production rules that
	include fusion.
	And let \(\overline{P_{\bowtie}}\) denote the set of
	production rules without fusion.
	%
	%
	For each production rule in \(\overline{P_{\bowtie}}\),
	we apply the production rules in
	\(P_{\bowtie}\) to the
	type variable atom in the RHS of the rule.
	This is done in \(n^2\) ways for \(n\) type variable atoms
	to cover all combinations.
	We add the newly created rules,
	which includes the original one,
	to \(P'\).
	%
	We also add the rules that have no type variable atoms on RHS to \(P'\).
	%
	If there exist rules in \(P'\) and \(P_{\bowtie}\)
	which have the annotated type variable \(\alpha
	(\links{X})\) on the LHS,
	we add the rule whose LHS are replaced with \(\alpha_{\bowtie} (\links{X})\) to \(P'\).

	Finally, we replace the annotated type variable \(\alpha (\links{X})\) with
	\(\alpha_{\bowtie} (\links{X})\).

\end{definition}

%
We have observed that it is not always possible to eliminate fusion in
this way.
However, all of our practical examples can be successfully transformed by this method.
A more refined method of fusion elimination
and a rigorous proof that the production rules obtained by this operation
are equivalent to the original ones will be the subject of future work.

If fusion elimination succeeds,
we can say that fusion will not appear ``later''
when the production rule is applied backwards (Ty-Prod).
%
On the other hand,
we cannot deny the possibility of occurrence of unabsorbable fusion
when applying production rules to decompose graphs (Case).
However, this did not happen in our examples.

Once we eliminate fusions,
it will be easy to check the correspondence of links.
Firstly, we \(\alpha\)-convert link names so that all the link names
are distinct.
Then, the correspondence of links in the target graph and the annotated type can be checked as follows.
If they are free links, check if they have the same name.
If the links are local links,
we check the correspondence between the link in the target graph and the link in the annotated type
based on mapping.
If the correspondence has not yet been established,
add a new correspondence.
If the correspondence is already in place,
we check that it is satisfied.
\Figref{table:check-link-name} shows the algorithm to check the correspondence of links.

\begin{figure}[t]
	\removelatexerror{} 
	\begin{algorithm}[H]
		\DontPrintSemicolon{}
		\Let{\(\mathit{check\_link\_name}\)\\
			\Indp{}
			\(L\)          \tcp*{A set of local links of the target graph}
			\(f\)                  \tcp*{A mapping from the links in annotation to the links in the target graph}%
			\((\)
			\(X\),                 \tcp*{The link in the target graph}
			\hspace{0.18cm}\(Y\)   \tcp*{The link in the annotation}
			\()\)\\
			\Indm{}
		}{%
			\eIf{\(X \notin L\)}{%
				\leIf{\(X \notin \mathrm{dom} (f) \land X = Y\)}{\(\mathrm{Some}\ f\)}{\(\mathrm{None}\)}
			}{%
				\lIf{\(Y \mapsto \mathrm{None} \in f\)}{%
					\(\mathrm{Some}\ (f\ \textit{updated with}\ Y \mapsto \mathrm{Some}\ X)\)}
				\lElseIf{\(Y \mapsto \mathrm{Some}\ X \in f\)}{\(\mathrm{Some}\ f\)}
				\lElse{\(\mathrm{None}\)}
			}
		}
	\end{algorithm}
	\caption{Check link name}\label{table:check-link-name}
\end{figure}

\subsection{The algorithm}\label{sub:graph-typechecking}

It will be a little troublesome to implement the backward application of a production rule
to handle the reverse execution of Ty-Prod.
Thus, we will first apply the production rule to the annotated type
and then remove the constructor atom both on the target graph and the annotated type.
Note that this will result in allowing graphs
in the annotation during the execution of this algorithm,
which we refer to as an \emph{annotated graph}.

\Figref{table:typechecker} shows the outline of the algorithm.
The function \(\mathit{check} (G, \alpha(\links{X}, R), P)\) checks that
\((\emptyset, P) \vdash G: \alpha(\links{X}, R)\)
where \(G\) possibly includes \(\underline{\beta} (\links{Y})\);
type annotated graph \(G_{\beta}\) where \(G_{\beta} : \beta (\links{Y})\).
The algorithm runs recursively with \(\mathit{helper}\) function (line \(6\))
on the atoms/type annotated graph
with a root link \(R\) of the target graph \(G\) and the annotated graph \(\Tauu\).

Line \(12\) checks that the graph \(G\) has type \(\Tauu\) \emph{trivially}.
For example, \(G\) maybe the type annotated graph whose annotated type was \(\Tauu\)
or a \(\lambda\)-abstraction atom, whose typing relation can be checked as
the same as the other functional language
(except that we may need to apply this algorithm recursively for the graphs in its body expression).

From line \(13\), we split the cases by the atom
with the root link of the target graph and the annotated graph.
If both atoms have constructor names with the same functor, then we remove the atoms
and run the algorithm recursively to all the subgraphs traversable from their arguments.

If the atom in the annotated graph is a type variable atom \(\alpha (\links{Y})\),
then we first try to use induction hypotheses \(H\) (line \(23\) and line \(28\)).
Notice that we can use congruence rules (Ty-Cong) and \(\alpha\)-conversion of free links (Ty-Alpha)
to absorb the syntactic difference
between \((G: \Tauu)\) and hypothesis in \(H\).

If we cannot prove it by the hypothesis,
then we should proceed with the construction of the proof tree with
Ty-Prod or Case.
If the root of the target graph is a constructor atom \(C_G (\links{X})\) (line \(22\)),
then we apply the production rules
whose LHS is \(\alpha/\norm{\links{Y}}\)
and check there \emph{exists} a way to successfully construct a sub-proof.
Notice that we add the current typing relation to the induction hypotheses.
If the root of the target graph is a type annotated graph \(\underline{\beta} (\links{X})\) (line \(27\)),
then we decompose the graph using the production rules of last Ty-Prod
and check \emph{all} of them satisfies the type.

Although we did not mention it in our pseudocode
but we need to make sure that the links \(\links{X}\) and \(\links{Y}\) have a proper correspondence
using the function
we have shown in \figref{table:check-link-name}.

\begin{figure}[t]
	\removelatexerror{} 
	\begin{algorithm}[H]
		\DontPrintSemicolon{}
		\Let{\(\mathit{check}\) \((\){\\
				\Indp{}
				\(G\),                         \tcp*{Target graph}%
				\(\alpha (\links{X}, R)\), \tcp*{Annotated type atom}
				\(P\)                          \tcp*{Production rules}
				\Indm{}
			}\()\)}{%
		\LetRec{\(\mathit{helper}\) \((\){\\
				\Indp{}
				\(R\),                     \tcp*{Root link}%
				\(G\),                     \tcp*{Target subgraph}%
				\(\Tauu\),                 \tcp*{Annotated graph}%
				\(H\)                      \tcp*{Induction hypotheses}
				\Indm{}
			}\()\)}{%
		\lIf{trivially\ \(G: \Tauu\)}{\(\mathrm{true}\)}
		\Switch{%
		\(
		(
		v\, (\links{X},R) \textrm{~or~} \alpha (\links{X},R) \textrm{~in~} G
		,\;
		\tau\, (\links{X},R) \textrm{~in~} \Tauu
		)\)
		}{%
		\Case{\(C_G (\links{X}, R), C_\Tauu (\links{Y}, R)\)}{%
			\lIf{\(
				C_G/\norm{\links{X}}
				\neq C_\Tauu/\norm{\links{Y}}
				\)}{\(\mathrm{false}\)}
			\Else{%
				\(\forall i\).\\
				\If{\(X_i\) and \(Y_i\) are the roots of the non-empty subgraph \(G_i\) and \(\Tauu_i\)
				}{\(\mathit{helper}\ (R, G_i, \Tauu_i, H)\)}
				\Else{%
					\(X_i\) and \(Y_i\) are not the root of atoms in \(G\) and \(\Tauu\)
				}
			}
		}
		\Case{\(C_G (\links{X}), \alpha (\links{Y})\)}{%
		\((G: \Tauu) \in H\ \lor\)\\
		\(\exists (\beta (\links{Z}) \longrightarrow \Tauu') \in P\)
		such that\\
		\(\alpha/\norm{\links{Y}} = \beta/\norm{\links{Z}}\ \land\)\\
		\(\mathit{helper}\ (R, G, \Tauu'{{\overrightarrow{\angled{Y_i/Z_i}}}^{i}}, \{G: \Tauu\} \cup H)\)
		}
		\Case{\(\underline{\beta} (\links{X}), \alpha (\links{Y})\)}{%
			\((G: \Tauu) \in H\ \lor\)\\
			\(\forall (r \textit{ with } \beta/\norm{\links{X}} \textit{ on LHS } \in P)\).\\
			\(\mathit{helper}\
			(R, G \textit{ decomposed } \underline{\beta} (\links{X}) \textit{ with } r, \Tauu, H)\)
		}
		\lCase{\upshape {\bfseries otherwise}}{\(\mathrm{false}\)}
		}
		}{%
		\(
		\mathit{helper}\
		(R, G, \alpha (\links{X}, R), \emptyset{})\)
		}
		}
	\end{algorithm}
	\caption{Graph type checker}\label{table:typechecker}
\end{figure}

\begin{theorem}
	The algorithm in \figref{table:typechecker} is sound.
\end{theorem}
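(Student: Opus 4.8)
The plan is to prove that if a run of \(\mathit{check}(G,\alpha(\links{X},R),P)\) returns \(\mathrm{true}\), then \((\emptyset,P)\vdash G:\alpha(\links{X},R)\) --- reading, as in the pseudocode of \figref{table:typechecker}, an occurrence of an annotated subgraph \(\underline{\beta}(\links{Y})\) in \(G\) as a placeholder for an arbitrary value of type \(\beta(\links{Y})\), so that this judgment unfolds to the universally quantified statement
\[
\forall\overrightarrow{G_j}.\Bigl(\bigwedge_j (\emptyset,P)\vdash G_j:\beta_j(\links{Y_j})\ \Rightarrow\ (\emptyset,P)\vdash G{\overrightarrow{[G_j/\underline{\beta_j}(\links{Y_j})]}}^{j}:\alpha(\links{X},R)\Bigr),
\]
which is exactly the antecedent Ty-Subst requires (here in the rank-free setting adopted for the algorithm). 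Together with the extension of \Cref{def:fgt-ext} and the soundness-preservation remark that follows it, establishing this antecedent \emph{is} the soundness claim. A successful run is a finite tree of \(\mathit{helper}\) calls, so I would first generalise the statement to: for \emph{every} call \(\mathit{helper}(R,G',\Tauu',H')\) occurring in the run, if it returns \(\mathrm{true}\) then ``\(G'\) has type \(\Tauu'\) under all type-correct instantiations of its underlined atoms'' holds, the pure annotation \(\Tauu'\) (a term over type atoms, constructor atoms and fusions) not itself being instantiated. Note that soundness needs no termination argument: a run that returns \(\mathrm{true}\) has by definition explored only finitely much of the tree.

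The heart of the proof is a single well-founded induction that replaces the hand-written infinite descent of \figref{fig:dlist-concat-proof}. I would prove all these per-call obligations simultaneously by induction on the multiset of the sizes of the \(F_{GT}\) derivations of the instantiating judgments \(G_j:\beta_j(\links{Y_j})\), and read off the body of \(\mathit{helper}\) case by case. The ``trivial'' leaf is discharged directly: if \(G'\) is itself an underlined atom annotated by \(\Tauu'\) its instance is a value of that type by construction; if it is a \(\lambda\)-abstraction atom it is typed by Ty-Arrow and Ty-App together with a recursive --- and, since body expressions are structurally smaller and there is no recursion combinator, terminating --- invocation of the base \(F_{GT}\) checker; \(\zero\) against \(\zero\) is immediate. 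The case ``\(C_G(\links{X}),C_\Tauu(\links{Y})\)'' peels the two matching constructor atoms off both sides --- the functors agree and the link correspondence is the one validated by the procedure of \figref{table:check-link-name} --- and recurses on every subgraph; from the recursively obtained sub-obligations one composes the \(F_{GT}\) derivation and closes it with one application of Ty-Prod for the production rule selected by the preceding ``\(C_G(\links{X}),\alpha(\links{Y})\)'' step, bridging reassociation and local-link renaming with Ty-Cong and Ty-Alpha. The step ``\(C_G(\links{X}),\alpha(\links{Y})\)'' itself merely commits to one applicable production rule (the \(\exists\)), which is sound because any applicable rule yields a legal Ty-Prod, and records the current goal in \(H\) for possible later cyclic reuse.

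The inductive step proper is the case ``\(\underline{\beta}(\links{X}),\alpha(\links{Y})\)''. By \Cref{lem:prod-needed} and \Cref{th:graph-decomp}, the instantiating value of the root underlined atom was built with \emph{some} production rule \(r\) for \(\beta\) as its last Ty-Prod, hence decomposes as \(\Tauu_r{\overrightarrow{[G'_k/\gamma_k(\links{Z_k})]}}^{k}\) with each sub-derivation of \(G'_k:\gamma_k(\links{Z_k})\) strictly smaller than the derivation we started from. The algorithm branches over \emph{all} such \(r\) (the \(\forall\)), so the branch matching the actual \(r\) --- which replaces \(\underline{\beta}(\links{X})\) in \(G'\) by the strictly smaller underlined atoms \(\underline{\gamma_k}\) --- returns \(\mathrm{true}\), and the main induction hypothesis applies to that recursive call because its measure has strictly decreased; composing its conclusion with Ty-Prod for \(r\) gives the required typing. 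A branch closed by ``\((G':\Tauu')\in H\)'' is justified the same way: by the time \((G':\Tauu')\) is revisited, every underlined atom it shares with the stored goal has undergone at least one such unfolding, so the corresponding derivations are strictly smaller and the main induction hypothesis discharges the obligation directly, with Ty-Cong and Ty-Alpha absorbing the syntactic difference between the current goal and the stored one.

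The step I expect to be the real obstacle is exactly the bookkeeping that makes the previous paragraph rigorous. One must equip the run with a well-founded measure on the still-uninstantiated underlined atoms --- essentially the number of unfoldings separating each from an original graph-context atom --- and prove the invariants that \(H\) is populated only in the ``\(C_G,\alpha\)'' case and only with the current goal, that the ``\(C_G,C_\Tauu\)'' case may redistribute underlined atoms among subgraphs but never unfolds them, and that a cyclic hit ``\((G':\Tauu')\in H\)'' is reached only after at least one ``\(\underline{\beta},\alpha\)'' unfolding of every atom shared with the stored goal --- so that the appeal to the induction hypothesis is never circular. This is delicate because it must be stable under the implicit \(\alpha\)-renaming of links performed by Ty-Alpha, which forces the link-correspondence invariant maintained through \figref{table:check-link-name} to be threaded through the whole induction; getting the decomposition of \Cref{th:graph-decomp} to line up, name for name, with the constructor atom produced when a rule is applied on the annotation side (the ``link fusion occurring later'' problem discussed at the start of this section) is precisely what the root-link and fusion-elimination restrictions were introduced to control, and I would lean on them here. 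A secondary, unavoidable dependency is that the algorithm runs on the fusion-eliminated rule set \(P'\) rather than on \(P\), so soundness with respect to the original \(P\) also rests on the equivalence of \(P'\) and \(P\); since that equivalence is only justified empirically (\Cref{subsec:normal-form-trans}), I would state it as an explicit hypothesis rather than attempt a proof of it here.
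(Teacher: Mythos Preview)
Your plan is correct and follows the same underlying idea as the paper: justify the cyclic reuse of stored hypotheses by a well-founded descent argument, so that each back-edge in the algorithm's run corresponds to a strict decrease of some measure on the instantiated target graph. The paper's own proof is only a sketch: it observes that a hypothesis is stored only in the \(C_G,\alpha\) case, that the very next step must be the \(C_G,C_{\Tauu}\) case (the production rule applied to the annotation places a constructor at its root, by the constraints of \Cref{subsub:production-rules}), and that this removes one constructor, ``reducing the size of the graph''; it then invokes infinite descent without further detail. Your write-up is considerably more careful and uses a different measure --- the multiset of \(F_{GT}\)-derivation sizes of the instantiating graphs, which drops strictly at each \(\underline{\beta},\alpha\) step via \Cref{th:graph-decomp} --- rather than the paper's implicit ``size of the concrete instantiated graph'', which drops at the \(C_G,C_{\Tauu}\) step. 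Both measures work; yours lines up more directly with the structural induction displayed in \figref{fig:dlist-concat-proof}, while the paper's only makes sense if ``the graph'' is read as the fully instantiated one (not the template with underlined atoms, whose size the Case step can increase). You also rightly surface the dependency on the unproven equivalence between \(P\) and the fusion-eliminated \(P'\); the paper's proof is silent on this, but it is a genuine hypothesis.

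One small slip: you claim that when a stored hypothesis is hit, \emph{every} underlined atom shared with the stored goal has been unfolded at least once. In the difference-list run, \(\underline{\mathit{nodes}}_2\) is carried along untouched between storing and reusing the hypothesis, so that is false. What you actually need --- and what your multiset ordering actually uses --- is that \emph{at least one} underlined atom has been unfolded, which does hold: after the mandatory constructor removal the template is strictly smaller, and only a Case step can grow it back to the stored shape. Weaken the claim accordingly and the argument goes through.
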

\begin{proof}
	This is straightforward since we are constructing a proof tree.
	There is a concern that soundness may be violated when the induction hypothesis is used,
	but this is not a problem.
	This is because the size of the graph gets strictly smaller
	when the type checker applies the structural induction.
	The structural induction hypothesis is added on line 26,
	where a production rule is applied to the annotation,
	and the root of the annotated graph always becomes a constructor atom.
	Therefore, the type checker does not proceed to the cases except in line 14
	in the recursion,
	and if this branch succeeds,
	the constructor atom is removed,
	reducing the size of the graph.
	Therefore, it is sound by the infinite descent method.
\end{proof}


\section{Related work}\label{sec:related-work}

Since graphs and its operations are more complex than trees,
there are diverse formalisms for graphs and graph types.

\subsection{Typing frameworks for graphs}
Structured Gamma~\cite{structuredgamma}
is a typing framework for graphs,
in which types are defined by production rules in context-free graph grammar.
%
Shape Types~\cite{shapetypes}
are similar but the following restrictions are imposed
on type definitions to ensure completeness of type checking:
(i) the state space of type checking must be confluent,
and (ii) graphs supplemented during the type checking
must consist only of a finite number of symbols.
%
With context-free graph grammar,
we can express a broad and expressive class of types.
However, type checking becomes harder
and hence it does not cover some practical operations.
For example, the concatenation of difference lists
and the pop operation from the tail of them cannot be checked
by either Shape Types or Structured Gamma.
%
In this research, we restrict the target grammar
so that we can verify practical operations by structural induction.

With Graph Types~\cite{graphtypes}, we can define types of algebraic data structures accompanied by \emph{extra edges},
where the destination of an extra edge is specified by a \emph{routing expression}.
A routing expression is a regular expression over small-step traverse operations,
which describes the relative position of the destination of an extra edge,
and the actual destination can be automatically computed based on it.
In addition, Graph Types provide a decidable monadic second-order logic on the types as a way of formal verification and automatic program generation.
For example, a constant-time concatenation of doubly-linked lists as modification of pointers can be deduced by the logic.

Our type system \(F_{GT}\) and Graph Types share the ideas that typed graphs consist of a canonical spanning tree and auxiliary edges, and types are defined by production rules.
On the other hand, auxiliary edges and their modification are \emph{computed} based on routing expressions in Graph Types, whereas they are described by users and \emph{verified} by the types in our method.
In addition, pattern matching based on the types can be described in our language \(\lambda_{GT}\).

\subsection{Functional language with graphs}
FUnCAL~\cite{funcal} is a functional language that supports graphs as a first-class data structure.
This language is based on an existing graph rewriting language, UnCAL~.
In UnCAL (and FUnCAL), graphs may include back edges
and their equality is defined based on bisimulation.
FUnCAL comes with its type system but does not support pattern matching
for user-defined data types, which classic functional languages support for ADTs.

Functional programming with structured graphs~\cite{structuredgraphs}
can express recursive graphs using recursive functions, i.e., \textbf{let rec} statements.
Since they employ ADTs as the basic structure,
they can enjoy type-based analysis based on the traditional type system.
On the other hand, we can do further detailed type analysis
by our language and type system.

Initial algebra semantics for cyclic sharing tree structures~\cite{hamana2010}
discusses how to express graphs by \(\lambda\)-expressions.
However, there is a large gap between \(\lambda\)-expressions and pointer structures.
On the other hand, we defined a graph based on nodes and hyperedges,
which has a clear correspondence to a pointer structure.
This style is rather suitable for future implementation.
In addition, they do not support user-defined graph types
or verification based on them.

\subsection{Separation Logic}

Our approach is in contrast with the analysis of pointer manipulation programs
using Separation Logic \cite{separation-logic}, shape analysis \cite{shape-analysis}, etc.

%
Firstly, the target languages differ in many ways.
Separation Logic and shape analysis normally handle low-level
imperative programs using heaps and pointers.
In contrast,
we dispense with destructive operations and adopt
pattern matching over graphs provided by
the new higher-level language \(\lambda_{GT}\), which
abstracts address, pointers and heaps away, and features hyperlinks and
operations on them including fusion and hiding.

Secondly, we pursue a lightweight,
automatic type system for functional languages
rather than Hoare-style general verification for imperative languages.
Separation Logic allows us to use \textit{pure formulae}
that represent various non-spatial properties.
The only thing that seems to correspond to pure formulae
in our type system is fusion (which can be regarded as \(x =
y\) in Separation Logic).
This design choice reflects the fact that our goal is not
a formal system for software verification but a programming
language and its type system.
%

The problem discussed in \Cref{sec:fgt-ext}, verification of
an inductively defined structure with structural induction,
is close to the entailment problem of
inductive predicates with symbolic heaps in Separation Logic,
sometimes referred to as
\emph{SLRD} (Separation Logic with Recursive Definitions).
Cyclist~\cite{BrotherstonGP12}
performs automatic verification of the problem.
However, the algorithm requires dynamic checking of the soundness condition.
On the other hand, we have restricted graph grammar and
proved the soundness statically as a (meta-){}theorem.
Antonopoulos et al.\ \cite{general-ind-pred2014} show that the
entailment problem of general SLRD is undecidable.
Therefore, decision procedures for them impose some restrictions on SLRD.
Iosif et al.\ \cite{SLRDbtw2013} propose a sub-class of SLRD,
SLRD\(_{\textit{btw}}\),
which handles graphs with bounded treewidth.
The restrictions imposed on the recursive definitions are similar to the
restrictions we have introduced in \Cref{sub:constraints}.
However, they do not allow empty graphs
and cannot handle a difference list without elements.
Tatsuta et al.~\cite{tatsuta2019} has imposed further restriction
to SLRD\(_{\textit{btw}}\)
which corresponds to the notion of \emph{root link} in ours.
A precise comparison of the algorithms in \Cite{tatsuta2019} and our
technique
will be the subject of future work.


\section{Conclusions and further work}\label{sec:conclusion}

In this study, we proposed a new functional language \(\lambda_{GT}\) that handles graphs
as a first-class data structure
with declarative operations based on graph transformation.

First, we formalized the formal syntax and semantics of \(\lambda_{GT}\)
in a syntax-directed manner,
incorporating HyperLMNtal into a call-by-value \(\lambda\)-calculus.

Second, we developed a new type system \(F_{GT}\) that
empolys HyperLMNtal rules as production rules to deal with
data structures more complex than trees.

Third, we extended the type system to support more powerful verification such as
concatenation of difference lists.
Then we developed an algorithm to automatically verify programs
with the extended type system using structural induction.

Finally, we address future work that is not mentioned in previous sections.

\subsection{Extend the type system to handle untyped graph contexts}

In this paper, we introduced dynamic type checking (\Cref{sub:dynamic-checking})
and excluded untyped graph contexts.
However, verification with untyped graph contexts is necessary not just to
reduce the programmer's extra effort
since there exist programs that cannot be succinctly handled without untyped graph contexts.
For example, matching the leftmost leaf in a leaf-linked tree is possible
in \(\lambda_{GT}\) 
using a template consisting of the leftmost leaf and an untyped graph
context for the rest of the tree. 
However, we cannot denote the type of the untyped graph context using the type of the leaf-linked tree
because it is not a tree.

\subsection{Full implementation of the language and the type system}

We have implemented the type checker to verify operations over graphs.
However, 
implementation of the language with the full type system including
arrows is a future work.
We believe that it is straightforward to implement the type system.
However, implementation of the efficient runtime has
many things to be considered including deeper static analysis of programs
(such as the guarantee of immutability using ownership checking) 
to allow destructive operations on graphs 
without forcing imperative programming on users.

\subsection{Extension on the type system: polymorphism and type inference}

The proposed type system \(F_{GT}\) is monomorphic.
We can only define difference lists with a specific element type,
though introducing generic data types as in other functional languages
could be done in the same way.

However, for more complex data structures, introducing polymorphism
may be not that straightforward 
since we have introduced more powerful operations than the other languages
such as concatenation of difference lists.
In \(\lambda_{GT}\), concatenation of difference lists can be done without
explicitly handling constructor atoms,
which may be typeable as a generic function.
However, since operations on data structures may not result in
data structures of the same type,
we may need to verify programs with the type information of the inputs,
which seems to be a little incompatible with polymorphism.

The same thing can be said for type inference.
Since we allow powerful operations
over data structures
without explicitly denoting constructor names,
it may be more difficult than in other functional languages
and may require some non-obvious ingenious techniques.

\bibliographystyle{plain}

\begin{thebibliography}{10}

	\bibitem{general-ind-pred2014}
	Timos Antonopoulos, Nikos Gorogiannis, Christoph Haase, Max~I. Kanovich, and
	Joël Ouaknine.
	\newblock Foundations for decision problems in separation logic with general
	inductive predicates.
	\newblock In {\em Proc.\ FoSSaCS 2014}, volume 8412 of {\em Lecture Notes in
			Computer Science}, pages 411--425. Springer, 2014.

	\bibitem{BrotherstonGP12}
	James Brotherston, Nikos Gorogiannis, and Rasmus~Lerchedahl Petersen.
	\newblock A generic cyclic theorem prover.
	\newblock In {\em Proc.\ APLAS 2012}, volume 7705 of {\em Lecture Notes in
			Computer Science}, pages 350--367. Springer, 2012.

	\bibitem{algebraic-gt}
	Hartmut Ehrig, Karsten Ehrig, Ulrike Prange, and Gabriele Taentzer.
	\newblock {\em Fundamentals of Algebraic Graph Transformation}.
	\newblock Springer, 2006.

	\bibitem{shapetypes}
	Pascal Fradet and Daniel~Le M{\'e}tayer.
	\newblock Shape types.
	\newblock In {\em Proc.\ POPL'97}, pages 27--39. ACM, 1997.

	\bibitem{structuredgamma}
	Pascal Fradet and Daniel~Le M{\'e}tayer.
	\newblock Structured {Gamma}.
	\newblock {\em Science of Computer Programming}, 31(2):263--289, 1998.

	\bibitem{hamana2010}
	Makoto Hamana.
	\newblock Initial algebra semantics for cyclic sharing tree structures.
	\newblock {\em Log. Methods Comput. Sci.}, 6(3), 2010.

	\bibitem{SLRDbtw2013}
	Radu Iosif, Adam Rogalewicz, and Jiri Simacek.
	\newblock The tree width of separation logic with recursive definitions.
	\newblock In {\em Automated Deduction -- CADE-24}, pages 21--38, 2013.

	\bibitem{graphtypes}
	Nils Klarlund and Michael~I. Schwartzbach.
	\newblock Graph types.
	\newblock In {\em Proc.\ POPL'93}, pages 196--205. ACM, 1993.

	\bibitem{Machie-IN}
	Ian Mackie.
	\newblock Efficient $\lambda$-evaluation with interaction nets.
	\newblock In {\em Proc.\ RTA 2004}, pages 155--169. Springer, 2004.

	\bibitem{interaction-net-encoding-lamnda}
	Ian Mackie.
	\newblock Encoding strategies in the lambda calculus with interaction nets.
	\newblock In {\em Proc.\ IFL 2005}, pages 19--36. Springer, 2006.

	\bibitem{funcal}
	Kazutaka Matsuda and Kazuyuki Asada.
	\newblock A functional reformulation of {UnCAL} graph-transformations: Or,
	graph transformation as graph reduction.
	\newblock In {\em Proc.\ POPL'97}, pages 71--82. ACM, 2017.

	\bibitem{structuredgraphs}
	Bruno~C.d.S. Oliveira and William~R. Cook.
	\newblock Functional programming with structured graphs.
	\newblock {\em SIGPLAN Not.}, 47(9):77--88, 2012.

	\bibitem{sos}
	Gordon Plotkin.
	\newblock A structural approach to operational semantics.
	\newblock {\em J. Log. Algebr. Program.}, 60-61:17--139, 2004.

	\bibitem{skiplists}
	William Pugh.
	\newblock Skip lists: A probabilistic alternative to balanced trees.
	\newblock {\em Commun.\ ACM}, 33(6):668--676, 1990.

	\bibitem{separation-logic}
	J.C. Reynolds.
	\newblock Separation logic: a logic for shared mutable data structures.
	\newblock In {\em Proc. LICS 2002}, pages 55--74. IEEE, 2002.

	\bibitem{handbook_graph_grammar}
	Grzegorz Rozenberg.
	\newblock {\em Handbook of Graph Grammars and Computing by Graph
		Transformation}.
	\newblock World Scientific, 1997.

	\bibitem{sano-ba}
	Jin Sano.
	\newblock Implementing {G-Machine} in {HyperLMNtal}.
	\newblock Bachelor's thesis, Waseda University, 2021.
	\newblock https://arxiv.org/abs/2103.14698.

	\bibitem{sano2021}
	Jin Sano and Kazunori Ueda.
	\newblock {Syntax-driven and compositional syntax and semantics of Hypergraph
		Transformation System}.
	\newblock In {\em Proc.\ 38nd JSSST Annual Conference (JSSST 2021)}, 2021.

	\bibitem{tatsuta2019}
	Makoto Tatsuta, Koji Nakazawa, and Daisuke Kimura.
	\newblock Completeness of cyclic proofs for symbolic heaps with inductive
	definitions.
	\newblock In {\em APLAS 2019}, volume 11893 of {\em Lecture Notes in Computer
			Science}, pages 367--387. Springer, 2019.

	\bibitem{lmntal2009}
	Kazunori Ueda.
	\newblock {LMNtal} as a hierarchical logic programming language.
	\newblock {\em Theoretical Computer Science}, 410(46):4784 -- 4800, 2009.

	\bibitem{hyperlmntal}
	Kazunori Ueda and Seiji Ogawa.
	\newblock {HyperLMNtal}: An extension of a hierarchical graph rewriting model.
	\newblock {\em KI - K{\"u}nstliche Intelligenz}, 26(1):27--36, 2012.

	\bibitem{shape-analysis}
	Reinhard Wilhelm, Shmuel Sagiv, and Thomas~W. Reps.
	\newblock Shape analysis.
	\newblock In {\em Compiler Construction, 9th International Conference, {CC}
			2000}, volume 1781 of {\em Lecture Notes in Computer Science}, pages 1--17.
	Springer, 2000.

	\bibitem{gcm2021}
	Naoki Yamamoto and Kazunori Ueda.
	\newblock Engineering grammar-based type checking for graph rewriting
	languages.
	\newblock In {\em Proc.\ Twelfth International Workshop on Graph Computation
			Models (GCM 2021)}, pages 93--114, June 2021.

	\bibitem{hyperlmntal-lambda}
	Alimujiang Yasen and Kazunori Ueda.
	\newblock Hypergraph representation of lambda-terms.
	\newblock In {\em Proc.\ 10th Int. Symp. on Theoretical Aspects of Software
			Engineering (TASE 2016)}, pages 113--116, 2016.

\end{thebibliography}

\appendix

In this appendix, we give proofs for the propositions and theorems that appeared in this paper.



\section{Proof of properties of HyperLMNtal}




%
%
%
%
%
%

\begin{lemma}[Elimination of \(\nu\) which bounds no link name]\label{lem:absorb-link-creation}
	\[\nu X.G \equiv G \mbox{ where } X \notin \mathit{fn}(G)\]
\end{lemma}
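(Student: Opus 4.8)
The plan is to prove this directly from the structural congruence rules of \figref{table:hyperlmntal-equiv}, with no structural induction on \(G\). The key observation is that rule (E10), \(\nu X.(G_1, G_2) \equiv (\nu X.G_1, G_2)\) when \(X \notin \mathit{fn}(G_2)\), lets a \(\nu\)-binder slide past a parallel component that does not mention the bound link, while rule (E8), \(\nu X.\zero \equiv \zero\), lets us discard a binder that binds nothing in \(\zero\). So the idea is to manufacture a trivial parallel component \(\zero\) alongside \(G\) via (E1), push the binder onto that component by (E10), and then erase it by (E8).

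Concretely, I would chain the following equivalences. First, (E1) gives \((\zero, G) \equiv G\), hence \(G \equiv (\zero, G)\) by symmetry of \(\equiv\), and then (E5) gives \(\nu X.G \equiv \nu X.(\zero, G)\). Next, since by hypothesis \(X \notin \mathit{fn}(G)\), rule (E10) (instantiated with \(G_1 := \zero\) and \(G_2 := G\)) gives \(\nu X.(\zero, G) \equiv (\nu X.\zero, G)\). Then (E8) gives \(\nu X.\zero \equiv \zero\), so by the congruence rule (E4) we obtain \((\nu X.\zero, G) \equiv (\zero, G)\). Finally (E1) gives \((\zero, G) \equiv G\). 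Composing these by transitivity of \(\equiv\) yields \(\nu X.G \equiv G\), as required.

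There is essentially no real obstacle here: the only mild subtlety is recognising that one should introduce \(\zero\) via (E1) in order to trigger (E10), rather than attempting an induction on the structure of \(G\), which would stall on the atom case \(G = p(\links{X})\) since there is no proper substructure to recurse into. Everything else is a routine composition of the congruence rules, using only that \(\equiv\) is an equivalence relation and closed under (E4) and (E5). I would also note in passing that this lemma can be seen as the degenerate (\(Y := X\)) instance of \(\alpha\)-conversion of hyperlinks (Theorem~\ref{th:alpha-equiv}), but giving the short self-contained derivation above is cleaner and avoids any circularity.
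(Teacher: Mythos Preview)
Your proof is correct and follows exactly the same chain of congruences as the paper: \(\nu X.G \equiv_{\text{E5}} \nu X.(\zero, G) \equiv_{\text{E10}} (\nu X.\zero, G) \equiv_{\text{E4}} (\zero, G) \equiv_{\text{E1}} G\). Your side remark about avoiding circularity with Theorem~\ref{th:alpha-equiv} is also apt, since the paper in fact uses this lemma in the proof of that theorem.
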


\begin{proof}{}{}
	\[\begin{array}{llr}
			                               & \nu X.G                                                   &
			\\
			\equiv_{\mbox{\scriptsize E5}}
			                               & \nu X.(\zero, G)
			                               & \because G \equiv_{\scriptsize \mbox{E1}} (\zero, G)
			\\
			\equiv_{\mbox{\scriptsize E10}}
			                               & (\nu X.\zero, G)
			                               & \because X \notin \mathit{fn}(G)
			\\
			\equiv_{\mbox{\scriptsize E4}}
			                               & (\zero, G)
			                               & \because \nu X.\zero \equiv_{\mbox{\scriptsize E8}} \zero
			\\
			\equiv_{\mbox{\scriptsize E1}} & G
		\end{array}\]
\end{proof}

\begin{lemma}[Elimination of a futile link substitution]\label{lem:futile-link-substitution}
	\[G\angled{X/X} = G\]
\end{lemma}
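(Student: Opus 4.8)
The plan is to prove the identity by structural induction on \(G\), reading off each case from the inductive definition of link substitution in \figref{table:hyperlink-substitution}, instantiated with the single-pair substitution \(\angled{X/X}\) (i.e.\ \(n=1\), \(Y_1=X\), \(Z_1=X\)). For the base cases, \(\zero\angled{X/X}=\zero\) immediately, and for an atom \(p(\links{W})\) we inspect each argument link: if \(W_i=X\) then \(W_i\angled{X/X}=X=W_i\) by the first branch of the link‑renaming clause, and if \(W_i\neq X\) then \(W_i\angled{X/X}=W_i\) by the second branch; hence \(p(\links{W})\angled{X/X}=p(\links{W})\). The molecule case \((G_1,G_2)\angled{X/X}=(G_1\angled{X/X},G_2\angled{X/X})=(G_1,G_2)\) follows directly from the two induction hypotheses.

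The only case needing any care is hyperlink creation \(\nu W.G\), where I would split on whether \(W=X\). If \(W\neq X\), then (since \(\{\links{Y}\}=\{\links{Z}\}=\{X\}\)) we have \(W\notin\{\links{Y}\}\) and \(W\notin\{\links{Z}\}\), so the middle branch of the \(\nu\)-clause applies and \((\nu W.G)\angled{X/X}=\nu W.(G\angled{X/X})=\nu W.G\) by the induction hypothesis; the capture‑avoidance proviso is vacuous here because the name being substituted \emph{in} is \(X\) itself, so the third (renaming) branch — which would require \(W\notin\{\links{Y}\}\) and \(W\in\{\links{Z}\}\) simultaneously, impossible when \(\links{Y}=\links{Z}=X\) — never fires. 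If \(W=X\), the first branch applies and deletes the unique pair, leaving \((\nu X.G)\angled{X/X}=\nu X.(G\angled{\,})\), the empty link substitution applied to \(G\). I would dispatch this via the auxiliary observation that the empty substitution is the identity, \(G\angled{\,}=G\), which is itself an immediate structural induction (every clause leaves the term untouched since no link name lies in the empty domain) — equivalently, one just notes that \(\nu X\) shadows \(X\) so nothing inside can be affected.

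I expect no genuine obstacle: the argument is a routine unfolding of the definition, and the claim is strict syntactic equality precisely because no \(\alpha\)-conversion is ever triggered (the renaming branch of the \(\nu\)-clause is unreachable for \(\angled{X/X}\)). The one mildly fiddly point is the \(W=X\) sub-case, where the sequence-deletion in the first \(\nu\)-branch produces the empty substitution; isolating the trivial fact \(G\angled{\,}=G\) (or appealing to shadowing) keeps the proof clean, and this lemma can then be used freely, e.g.\ together with \lemref{lem:absorb-link-creation}, in the subsequent development.
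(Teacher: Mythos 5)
Your proof is correct and follows essentially the same route as the paper's: structural induction on \(G\), with the only nontrivial case being \(\nu W.G\), split on \(W=X\) versus \(W\neq X\), and the key observation that the \(\alpha\)-renaming branch of the \(\nu\)-clause is unreachable because it would require \(W\neq X\) and \(W=X\) simultaneously. Your explicit treatment of the \(W=X\) sub-case via the identity of the empty substitution is slightly more careful than the paper, which simply writes \(\nu Y.G\) as the result of that branch, but the argument is the same.
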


This is not as obvious as it may seem.
The reason is that it cannot be naively ruled out that a $\alpha$-conversion
may be performed during the hyperlink assignment,
resulting in a congruent but syntactically different graphs.

\begin{proof}{}{}
	We prove by inducition on graphs.
	It is trivial for
	$\zero$,
	$p(X_1, \ldots, X_m)$,
	$(G, Q)$,
	$(G \means Q)$.

	\vspace{0.5em}\noindent%
	Case \(\nu Y.G\):
	\begin{quote}
		\(
		(\nu Y.G)\angled{X/X} \defeq
		\left\{
		\begin{array}{ll}
			\nu Y.G             & \mbox{if } Y = X \vspace{1em} \\
			\nu Y.G\angled{X/X} & \mbox{if } Y \neq X           \\
			\multicolumn{2}{l}{ = \nu Y.G \because \mbox{induction hypothesis}}
		\end{array}
		\right.
		\)


		Since \(Y \neq X \land Y = X\) can never happen,
		there is no possibility of $\alpha$-conversion of links
		(which could have resulted in loss of syntactic equality) to avoid variable capture.
	\end{quote}

\end{proof}


\begin{proof}[Proof of \Cref{th:alpha-equiv}]


	We are using
	\Cref{lem:absorb-link-creation}
	and \Cref{lem:futile-link-substitution}.
	We consider the case where the free hyperlink to be substituted appears
	and the case where it does not.
	The latter case seems obvious,
	but it is not because of the possibility of $\alpha$-conversion due to hyperlink substitution.
	We prove the former first, and then transform the latter into a form that allows us to use the former.

	\vspace{0.5em}\noindent%
	Case \(X \in \mathit{fn}(G)\):
	\begin{quote}
		\(\begin{array}{ll}
			\nu X. \nu Y. (Y \bowtie X, (X \bowtie Y, G))
			\\
			\equivby{E5, E3}
			\nu X. \nu Y. ((Y \bowtie X, X \bowtie Y), G)
			\\
			\equivby{E5, E10}
			\nu X. (\nu Y. (Y \bowtie X, X \bowtie Y), G) \\
			\hspace{2em}\because
			Y \notin \mathit{fn}(G)
			\\
			\equivby{E5, E6}
			\nu X. (\nu Y. X \bowtie X, G)                \\
			\hspace{2em}\because
			(X \bowtie Y)\angled{X/Y} = X \bowtie X
			\\
			\equivby{E5, E10}
			\nu X. \nu Y. (X \bowtie X, G)                \\
			\hspace{2em}\because
			Y \notin \mathit{fn}(G)
			\\
			\equivby{E9}
			\nu Y. \nu X. (X \bowtie X, G)
			\\
			\equivby{E6, \Cref{lem:futile-link-substitution}}
			\nu Y. \nu X. G                               \\
			\hspace{2em}\because
			G\angled{X/X} = G
			\\
			\equivby{\Cref{lem:absorb-link-creation}}
			\nu X. G                                      \\
			\hspace{2em}\because
			Y \notin \mathit{fn}(G)
		\end{array}\)

		and

		\vskip.5\baselineskip
		\(\begin{array}{ll}
			\nu X. \nu Y. (Y \bowtie X, (X \bowtie Y, G)) &
			\\
			\equivby{E2, E3, E5, E9}
			\nu Y. \nu X. (X \bowtie Y, (Y \bowtie X, G)) &
			\\
			\equivby{E5, E6}
			\nu Y. \nu X. (Y \bowtie Y, G\angled{Y/X})    &
			\\
			\equivby{E5, \Cref{lem:absorb-link-creation}}
			\nu Y. (Y \bowtie Y, G\angled{Y/X})                                                       \\
			\hspace{2em}\because
			X \notin \mathit{fn}((Y \bowtie Y, G\angled{Y/X}))
			\\
			\equivby{E6}
			\nu Y.G\angled{Y/X}                                                                       \\
			\hspace{2em}\because
			\mbox{by \Cref{lem:futile-link-substitution}} (G\angled{Y/X})\angled{Y/Y} = G\angled{Y/X} \\
		\end{array}\)
		\vskip.5\baselineskip

		Thus,
		\(\nu X.G \equiv \nu Y.G\angled{Y / X}\)

	\end{quote}

	\vspace{0.5em}\noindent%
	Case \(X \notin \mathit{fn}(G)\) :
	\begin{quote}
		In this case, we use the previous proof by first adding a free hyperlink \(X\) using (E7).

		\vskip.5\baselineskip
		\(\begin{array}{ll}
			\nu X. G
			\\
			\equivby{\Cref{lem:absorb-link-creation}}
			G
			\\
			\equivby{E1}
			(\zero, G)
			\\
			\equivby{E4, E7}
			(\nu X. \nu X. X \bowtie X, G)
			\\
			\equivby{E4, \Cref{lem:absorb-link-creation}}
			(\nu X. X \bowtie X, G)             \\
			\hspace{2em}\because X \notin \mathit{fn}(\nu X. X \bowtie X)
			\\
			\equivby{E10}
			\nu X. (X \bowtie X, G)             \\
			\hspace{2em}\because X \notin \mathit{fn}(G)
			\\
			\equivby{The formar proof}
			\nu Y. (X \bowtie X, G)\angled{Y/X} \\
			\hspace{2em}\because X \in \mathit{fn}((X \bowtie X, G))
			\\
			=
			\nu Y. (Y \bowtie Y, G\angled{Y/X})
		\end{array}\)
		\\
		\(\begin{array}{ll}
			\equivby{E10}
			(\nu Y. Y \bowtie Y, G\angled{Y/X}) \\
			\hspace{2em}\because X \notin \mathit{fn}(G) \mbox{, thus } Y \notin \mathit{fn}(G\angled{Y/X})
			\\
			\equivby{E4, \Cref{lem:absorb-link-creation}}
			(\nu Y. \nu Y. Y \bowtie Y, G\angled{Y/X})
			\\
			\equivby{E7}
			(\zero, G\angled{Y/X})
			\\
			\equivby{E1}
			G\angled{Y/X}
			\\
			\equivby{\Cref{lem:absorb-link-creation}}
			\nu Y. G\angled{Y/X}
		\end{array}\)
	\end{quote}

\end{proof}



\begin{proof}[Proof of \Cref{th:symmetry-of-bowtie}]\mbox{}\\

	\begin{quote}
		\(\begin{array}{ll}
			\nu Z. (Z \bowtie X, Z \bowtie Y) &
			\\
			\equivby{E6}
			\nu Z. (X \bowtie Y)                \\
			\hspace{2em}\because (Z \bowtie Y)\angled{X/Z} = X \bowtie Y
			\\
			\equivby{\Cref{lem:absorb-link-creation}}
			X \bowtie Y
			\\
		\end{array}\)
	\end{quote}
	\vskip.5\baselineskip

	and

	\vskip.5\baselineskip
	\begin{quote}
		\(
		\begin{array}{ll}
			\nu Z. (Z \bowtie X, Z \bowtie Y) &
			\\
			\equivby{E2, E5}
			\nu Z. (Z \bowtie Y, Z \bowtie X) &
			\\
			\equivby{E6}
			\nu Z. (Y \bowtie X)                \\
			\hspace{2em}\because (Z \bowtie X)\angled{Y/Z} = Y \bowtie X
			\\
			\equivby{\Cref{lem:absorb-link-creation}}
			Y \bowtie X
			\\
		\end{array}
		\)
	\end{quote}
	\vskip.5\baselineskip

	Therefore,
	\(X \bowtie Y \equiv Y \bowtie X\).

\end{proof}

\section{Proof of properties of \(F_{GT}\)}

%
Theorem 4.1 (Soundness of \(F_{GT}\))
can be derived in the same way as in
the ordinary type systems for functional languages,
so we omit the precise proof.
Theorem 4.2 and Theorem 5.2 have a proof specific to \(F_{GT}\),
which is supplemented in this appendix.

\subsection{%
	Theorem 4.2 (\(F_{GT}\) and HyperLMNtal reduction)
}

\begin{lemma}\label{lem:redux-alpha}
	If
	\(G_1
	\rightsquigarrow^{\ast}_{P}
	G_2
	\)
	then
	\(G_1\angled{Y/X}
	\rightsquigarrow^{\ast}_{P}
	G_2\angled{Y/X}
	\)
\end{lemma}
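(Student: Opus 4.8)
The plan is to reduce the statement to a single step of $\rightsquigarrow_{P}$ and then argue by case analysis on the rule (R1)--(R4) that derived it. Since $\rightsquigarrow^{\ast}_{P}$ is the reflexive--transitive closure of $\rightsquigarrow_{P}$ and $\angled{Y/X}$ is a function on graphs, it suffices to show: if $G_1 \rightsquigarrow_{P} G_2$ then $G_1\angled{Y/X} \rightsquigarrow^{\ast}_{P} G_2\angled{Y/X}$; the general claim then follows by a routine induction on the length of the reduction, splicing the pieces together. First I would dispose of the case $X \notin \mathit{fn}(G_1)$: the standing syntactic condition on rules gives $\mathit{fn}(G_2) \subseteq \mathit{fn}(G_1)$ (the inclusion is inherited through (R1)--(R3), with the case of (R3) using that $\equiv$ preserves free link names), so $X \notin \mathit{fn}(G_2)$ as well, hence $G_1\angled{Y/X} \equiv G_1$ and $G_2\angled{Y/X} \equiv G_2$ — up to an implicit $\alpha$-renaming of $Y$, justified by \Cref{th:alpha-equiv} — and the claim reduces to the original step together with (R3).

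For the substantive case $X \in \mathit{fn}(G_1)$ I would use the ``fusion-supplementing'' idea already mentioned in Section~\ref{sec:matching}: renaming a free link is simulated by adjoining a $\nu$-bound fusion. Since $X \in \mathit{fn}(G_1)$, rule (E6) together with \Cref{lem:absorb-link-creation} gives
\[ G_1\angled{Y/X} \;\equiv\; \nu X.(X \bowtie Y,\, G_1). \]
The context rules (R1), (R2), the commutativity rule (E2), and (R3) lift the one step $G_1 \rightsquigarrow_{P} G_2$ to
\[ \nu X.(X \bowtie Y,\, G_1) \;\rightsquigarrow_{P}\; \nu X.(X \bowtie Y,\, G_2), \]
and then (E6) and \Cref{lem:absorb-link-creation} rewrite the right-hand side back to $G_2\angled{Y/X}$ — provided $X$ is still free in $G_2$ — so that (R3) closes the case in a single step. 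The hypothesis $X \in \mathit{fn}(G_2)$ holds because the production rules of $F_{GT}$ satisfy $\mathit{fn}(L) = \mathit{fn}(R)$ (their syntactic constraint), whence $\rightsquigarrow_{P}$ preserves the set of free link names exactly; since it is precisely such reductions by production rules to which the lemma is applied (e.g.\ in the treatment of Ty-Alpha in the proof of \Cref{th:fgt-stgamma}), I would state the lemma relative to that class of $P$. The remaining subcases of the single step — (R1), (R2), (R3) — are routine: one pushes $\angled{Y/X}$ through the molecule/$\nu$ constructor (doing the bookkeeping for the bound link of (R2), including the trivial subcases where it coincides with $X$ or $Y$), invokes the induction hypothesis on the premise, and re-applies the same rule, with (R3) additionally absorbing two uses of $\equiv$.

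The main obstacle is exactly the interaction between renaming and the scope of fusions in the (R4) case. The fusion-supplementing step itself, once one has \Cref{th:symmetry-of-bowtie} (to place the fusion on either side of $\bowtie$), \Cref{lem:absorb-link-creation}, \Cref{lem:futile-link-substitution}, and the congruence/context rules, is conceptually the crux but computationally routine; the real care goes into delimiting the class of rule sets for which the statement is literally true — without preservation of free names, $\nu X.(X \bowtie Y,\, G_2)$ retains a stray endpoint on $Y$ that cannot be absorbed, and the conclusion can genuinely fail. A secondary, purely notational nuisance is that $\angled{\cdot/\cdot}$ is capture-avoiding only up to implicit $\alpha$-conversion, so several of the ``$=$''s above are really ``$\equiv$''s; this is harmless because (R3) absorbs structural congruence on both ends of a reduction.
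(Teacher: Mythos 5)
Your proof follows essentially the same route as the paper's: the paper likewise lifts a single step to \(\nu X.(X \bowtie Y, G_1) \rightsquigarrow_{P} \nu X.(X \bowtie Y, G_2)\) via (R1)--(R3), converts both sides to \(G_i\angled{Y/X}\) using (E6) and (R3), and concludes by induction on the length of \(\rightsquigarrow^{\ast}_{P}\). Your extra case analysis on whether \(X\) remains free, and the observation that absorbing the supplemented fusion on the right requires exact preservation of free names (which holds for production rules, since they satisfy \(\mathit{fn}(\Tauu) = \{\links{X}\}\)), is more careful than the paper's one-line argument but does not change the approach.
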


\begin{proof}
	By (R1), (R2), (R3) and
	\(G_1
	\rightsquigarrow_{P}
	G_2
	\),
	we can show
	\(\nu X.(X \bowtie Y, G_1)
	\rightsquigarrow_{P}
	\nu X.(X \bowtie Y, G_2)
	\).
	Thus
	\(G_1\angled{Y/X}
	\rightsquigarrow_{P}
	G_2\angled{Y/X}
	\) by (R3).
	Then
	we can obtain
	\(G_1\angled{Y/X}
	\rightsquigarrow^{\ast}_{P}
	G_2\angled{Y/X}
	\)
	by induction on the length of the reduction
	\(\rightsquigarrow^{\ast}_{P}\)
\end{proof}

%
%
%
%
%

\begin{proof}[Proof of \Cref{th:fgt-stgamma}]

	We denote
	\({\overrightarrow{[\tau_i (\links{Y_i}) / x_i [\links{X_i}]]}}^{i}\)
	as
	\(\theta_x\)
	and
	\({[\overrightarrow{\tau_i (\links{Z_i}) / {(\lambda \dots)}_i (\links{W_i})}]}^{i}\)
	as
	\(\theta_\lambda\).

	We firstly prove \(\Rightarrow\).
	We split the cases by the last applied \(F_{GT}\) rules.

	\vspace{0.5em}\noindent%
	Case Ty-Ctx:
	\begin{quote}
		\(T = x[\links{X}]\)
		and
		\(x[\links{X}]: \tau\,(\links{X}) \in \Gamma\).
		Thus
		\(T[\tau\,(\links{X}) / x[\links{X}], \dots]\theta_\lambda
		= \tau\,(\links{X})
		\rightsquigarrow^{\ast}_{P}
		\tau\,(\links{X})
		\).
	\end{quote}

	\vspace{0.5em}\noindent%
	Case Ty-Arrow:
	\begin{quote}
		\(T = (\lambda \dots)(\links{X})\)
		where
		\(
		(\Gamma, P)
		\vdash
		(\lambda \dots)(\links{X}):
		\tau\,(\links{X})
		\).
		Thus
		\(T \theta_x [\tau\,(\links{X}) / (\lambda \dots)(\links{X})]
		= \tau\,(\links{X})
		\rightsquigarrow^{\ast}_{P}
		\tau\,(\links{X})
		\).
	\end{quote}

	\vspace{0.5em}\noindent%
	Case Ty-Cong:
	\begin{quote}
		Supppose the antecedent of Ty-Cong was
		\((\Gamma, P) \vdash T' : \tau\,(\links{X})\)
		where
		\(T \equiv T'\).
		By induction hypothesis,
		\(\tau\,(\links{X})
		\rightsquigarrow^{\ast}_{P}
		T' \theta_x \theta_\lambda
		\).
		Since
		\(
		T\theta_x \theta_\lambda
		\equiv
		T'\theta_x \theta_\lambda
		\),
		we can show
		\(\tau\,(\links{X})
		\rightsquigarrow^{\ast}_{P}
		T \theta_x \theta_\lambda
		\)
		using (R3).
	\end{quote}

	\vspace{0.5em}\noindent%
	Case Ty-Alpha:
	\begin{quote}
		Supppose the antecedent of Ty-Alpha was
		\((\Gamma, P) \vdash T' : \tau\,(\links{X'})\)
		where
		\(T = T'\angled{Y/X}\)
		and
		\(\tau\,(\links{X}) = \tau\,(\links{X'})\angled{Y/X}\).
		By induction hypothesis,
		\(\tau\,(\links{X'})
		\rightsquigarrow^{\ast}_{P}
		T' \theta_x \theta_\lambda
		\).
		Here,
		we can show that
		\(T\theta_x \theta_\lambda
		=
		T'\theta_x \theta_\lambda\angled{Y/X}
		\).
		Thefore, by \Cref{lem:redux-alpha},
		\(\tau\,(\links{X'})\angled{Y/X}
		\rightsquigarrow^{\ast}_{P}
		T \theta_x \theta_\lambda\angled{Y/X}
		\).
	\end{quote}

	\vspace{0.5em}\noindent%
	Case Ty-Prod:
	\begin{quote}
		Supppose the antecedents of Ty-Prod was
		\({\overrightarrow{(\Gamma, P) \vdash T_i : \tau_i (\links{X_i})}}^{i}\)
		where
		\(T = \Tauu[{\overrightarrow{T_i / \tau_i (\links{X_i})}}^{i}]\)
		By induction hypothesis,
		\(\tau_i (\links{X_i})
		\rightsquigarrow^{\ast}_{P}
		T_i \theta_x \theta_{\lambda i}
		\).
		Therefore, using (R1), (R2), and (R3),
		we can show
		\(\Tauu'
		\rightsquigarrow^{\ast}_{P}
		\Tauu'[T_i \theta_x \theta_{\lambda i} / \tau_i (\links{X_i})]
		\) for any \(\Tauu'\).
		Thus,
		we can have
		\(\tau_i (\links{X_i})
		\rightsquigarrow^{\ast}_{P}
		\Tauu_0
		\rightsquigarrow^{\ast}_{P}
		\dots
		\rightsquigarrow^{\ast}_{P}
		\Tauu_n
		\)
		where
		\(\Tauu_i\) is inductively defined as
		\(\Tauu_0 = \Tauu\)
		and
		\(\Tauu_{i + 1} = \Tauu_i[T_i \theta_x \theta_{\lambda i} / \tau_i (\links{X_i})]\),
		in which
		\(\Tauu_n = T\theta_x \theta_\lambda\).
	\end{quote}
	\vspace{1em}

	Then, we prove \(\Leftarrow\).
	by induction on the length of the reduction
	\(\rightsquigarrow^{\ast}_{P}\).
	We denote
	\({\overrightarrow{[x_i [\links{X_i}] / \tau_i (\links{Y_i})]}}^{i}\)
	as
	\(\theta_x^{-1}\)
	and
	\({[\overrightarrow{{(\lambda \dots)}_i (\links{W_i})} / \tau_i (\links{Z_i})]}^{i}\)
	as
	\(\theta_\lambda^{-1}\).
	Then, the proposition can be rewritten as
	\[
		\tau\, (\links{X})
		\rightsquigarrow^{\ast}_{P}
		\Tauu
		\Rightarrow
		(\Gamma, P)
		\vdash
		\Tauu \theta_x^{-1} \theta_\lambda^{-1}:
		\tau\,(\links{X}).
	\]

	\vspace{0.5em}\noindent%
	Case \(\tau\,(\links{X}) = \Tauu\)
	(The length of
	\(\rightsquigarrow^{\ast}_{P}\)
	is zero):
	\begin{quote}
		Follows by Ty-Ctx or Ty-Arrow
		depending on whether the
		\(\tau\,(\links{X})\)
		is replaced with the graph context in
		\(\theta_x^{-1}\)
		or the
		\(\lambda\)-abstraction atom in
		\(\theta_\lambda^{-1}\).
	\end{quote}

	\vspace{0.5em}\noindent%
	Case \(
	\tau\,(\links{X})
	\rightsquigarrow^{\ast}_{P}
	\Tauu'
	\rightsquigarrow_{P}
	\Tauu
	\)
	(The length of
	\(\rightsquigarrow^{\ast}_{P}\)
	is \(n > 0\)):
	\begin{quote}
		Suppose the production rule applied to reduce from
		\(\Tauu'\)
		to
		\(\Tauu\)
		was
		\(\alpha\,(\links{Y}) \longrightarrow \Tauu''\).
		Using (R1), (R2), and (R3),
		we can obtain (new)
		\(
		\tau\,(\links{X})
		\rightsquigarrow^{\ast}_{P}
		\Tauu'
		\rightsquigarrow_{P}
		\Tauu
		\)
		which satisfies
		\(
		\Tauu
		=
		\Tauu'[\Tauu'' / \alpha\,(\links{Y})]
		\).
		By induction hypothesis,
		we can obtain the derivation tree of
		\begin{align}\label{math:all}
			(\Gamma, P) \vdash \Tauu' \theta_x^{-1} \theta_\lambda^{-1} : \tau\,(\links{X}).
		\end{align}
		Since \(\Tauu'\) contains \(\alpha\,(\links{X})\),
		there exists a derivation of
		\begin{align}\label{math:prev}
			(\Gamma', P) \vdash \alpha \theta_x^{-1} \theta_\lambda^{-1} : \alpha\,(\links{X}).
		\end{align}
		in the tree.
		Since
		\begin{align}\label{math:after}
			(\Gamma', P) \vdash \Tauu'' \theta_x^{-1} \theta_\lambda^{-1} : \alpha\,(\links{X}).
		\end{align}
		holds immediately by Ty-Prod,
		we can replace the derivation tree of
		(\Ref{math:prev})
		with
		that of
		(\Ref{math:after})
		in
		that of
		(\Ref{math:all}),
		which will result in the derivation tree of the desired typing relation.

	\end{quote}

\end{proof}


\subsection{%
	Theorem 5.2 (decomposing graph with the last applied production rule)
}

We omit
\((\emptyset, P) \vdash\)
for brevity.

\begin{proof}[Proof of \Cref{th:graph-decomp}]
	We prove by induction on the derivation of \(G: \alpha (\links{Y})\)
	after the last application of Ty-Prod.

	By \Cref{lem:prod-needed},
	there exists the last Ty-Prod and only Ty-Cong and Ty-Alpha are used later
	on the derivation of \(G: \alpha (\links{Y})\).

	\vspace{0.5em}\noindent%
	Case Ty-Prod:
	\begin{quote}
		Trivial from the definition of Ty-Prod.
	\end{quote}

	\vspace{0.5em}\noindent%
	Case Ty-Cong:
	\begin{quote}
		The theorem holds on \(G: \alpha(\links{Y})\) by induction hypothesis.
		Therefore, it holds on \(G' \equiv G\).
	\end{quote}

	\vspace{0.5em}\noindent%
	Case Ty-Alpha:
	\begin{quote}
		By induction hypothesis,
		we can assume
		for \(G: \alpha (\links{Y})\),
		there exists
		\({\overrightarrow{G_j}}^{j}\)
		such that
		\(G \equiv\Tauu' [{\overrightarrow{G_j / \tau_j\, (\links{X_j})}}^{j}]\)
		where
		\begin{itemize}
			\item
			      \(\Tauu' = \Tauu {\overrightarrow{\angled{Y_i / X_i}}}^{i}\),

			\item
			      \(\tau_j\, (\links{X_j})\) are
			      all the type atoms
			      appearing in \(\Tauu'\),
			      and

			\item
			      \({\overrightarrow{G_j : \tau_j\, (\links{X_j})}}^{j}\).

		\end{itemize}

		For \(G\angled{Z/Y}: \alpha (\links{Y})\angled{Z/Y}\),
		we can obtain
		\begin{itemize}
			\item
			      \(\Tauu'' = \Tauu {\overrightarrow{\angled{Z_i / X_i}}}^{i}\),
			      where \(Z_i = Y_i\angled{Z/Y}\).
		\end{itemize}
		The type atom
		\(\tau_j\, (\links{Z_j})\)
		appearing in \(\Tauu''\),
		corresponding to the atom
		\(\tau_j\, (\links{X_j})\)
		in \(\Tauu'\),
		may have substituted its links.
		Thus, we need to denote it as
		\(\tau_j (\links{X_j})\theta_j\)
		where \(\theta_j\) is a hyperlink substitution
		which satisfies
		\(\tau_j (\links{X_j})\theta_j
		=\tau_j\, (\links{Z_j})\).
		Since
		\({\overrightarrow{G_j : \tau_j\, (\links{X_j})}}^{j}\)
		holds by the induction hypothesis,
		we can show that
		\({\overrightarrow{G_j\theta_j : \tau_j\, (\links{X_j}) \theta_j}}^{j}\)
		holds using Ty-Alpha.
		Therefore, we can obtain
		\({\overrightarrow{G_j\theta_j}}^{j}\)
		that satisfies the conditions.
	\end{quote}

\end{proof}

\begin{biography}

	\profile{Jin Sano}{%
		received his B.Eng.\ degree from Waseda University in 2021.
		His research interests include design and implementation of
		programming languages and type systems,
		and software verifications.
	}
	\profile{Naoki Yamamoto}{%
		received his B.Eng.
		and M.Eng.
		degrees from Waseda University in 2019 and 2021, respectively.
		He has been in a doctoral course at Waseda University since 2021.
		He has been a member of Waseda Chapter (Mu-Tau) of IEEE-HKN (Eta-Kappa-Nu) since 2019.
		His research interests include programming languages and program verification by proof assistants.
	}

	\profile{Kazunori Ueda}{%
		received his M.Eng.
		and Dr.Eng.
		degrees from the University of Tokyo in 1980 and 1986, respectively.
		He joined NEC in 1983, and from 1985 to 1992,
		he was with the Institute for New Generation Computer  Technology (ICOT) on loan.
		He joined Waseda University in 1993 and has been Professor since 1997.
		He is also Visiting Professor of Egypt-Japan University of Science and Technology since 2010.
		His research interests include design and implementation of programming languages,
		concurrency and parallelism, high-performance verification, and hybrid systems.
	}

\end{biography}

\end{document}